\newtheorem{theorem}{Theorem}[section]
\newtheorem{lemma}[theorem]{Lemma}
\newtheorem{prop}[theorem]{Proposition}
\newtheorem{cor}[theorem]{Corollary}
\newtheorem{remark}[theorem]{Remark}
\newtheorem{definition}[theorem]{Definition}
\def\l{\lambda}
\def\a{\alpha}
\def\b{\beta}
\def\R{\mathbb{R}}
\def\C{\mathbb{C}}
\def\N{\mathbb{N}}
\def\Z{\mathbb{Z}}
\def\P{\mathbb{P}}
\def\T{\mathbb{T}}
\def\M{\mathcal{M}}
\def\Mba{\M_{\beta, \alpha}}
\def\Mmba{\M_{\beta, \alpha}}
\def\W{\mathcal{W}}
\def\P{\mathcal{P}}
\def\Pba{\P_{\beta, \alpha}}
\def\n{\nabla}
\def\nba{\nabla_{b,a}}
\def\cds{\!\cdot_\textbf{s}\!}
\def\mbf2{\mathbf{2}}
\begin{document}

\title{Global Birkhoff coordinates \\ for the periodic Toda lattice}
\author{Andreas Henrici \and Thomas Kappeler\footnote{Supported in part by the Swiss National Science Foundation, the programme SPECT, and the European Community through the FP6 Marie Curie RTN ENIGMA (MRTN-CT-2004-5652)}}

\maketitle

\begin{abstract}
In this paper we prove that the periodic Toda lattice admits globally defined Birkhoff coordinates.\footnote{2000 Mathematics Subject Classification: 37J35, 70H06}
\end{abstract}

%\tableofcontents

\section{Introduction} \label{introduction}

Consider the Toda lattice with period $N$ ($N \geq 2$), 
\begin{displaymath}
\dot{q}_n = \partial_{p_n} H, \quad \dot{p}_n = -\partial_{q_n} H
\end{displaymath}
for $n \in \Z$, where the (real) coordinates $(q_n, p_n)_{n \in \Z}$ satisfy $(q_{n+N}, p_{n+N}) = (q_n, p_n)$ for any $n \in \Z$ and the Hamiltonian $H_{Toda}$ is given by
\begin{equation} \label{hamtodapq}
H_{Toda} = \frac{1}{2} \sum_{n=1}^N p_n^2 + \alpha^2 \sum_{n=1}^N e^{q_n - q_{n+1}}
\end{equation}
where $\a$ is a positive parameter, $\a > 0$. For the standard Toda lattice, $\a=1$. The Toda lattice was introduced by Toda \cite{toda} and studied extensively in the sequel. It is an important model for an integrable system of $N$ particles in one space dimension with nearest neighbor interaction and belongs to the family of lattices introduce and numerically investigated by Fermi, Pasta, and Ulam in their seminal paper \cite{fpu}. To prove the integrability of the Toda lattice, Flaschka introduced in \cite{fla1} the (noncanonical) coordinates
\begin{displaymath}
b_n := -p_n \in \R, \quad a_n := \alpha e^{\frac{1}{2} (q_n - q_{n+1})} \in \R_{>0} \quad (n \in \Z).
\end{displaymath}
These coordinates describe the motion of the Toda lattice relative to the center of mass. Note that the total momentum is conserved by the Toda flow, hence any trajectory of the center of mass is a straight line.

In these coordinates the Hamiltonian $H_{Toda}$ takes the simple form
\begin{displaymath}
H = \frac{1}{2} \sum_{n=1}^N b_n^2 + \sum_{n=1}^N a_n^2,
\end{displaymath}
and the equations of motion are
\begin{equation} \label{flaeqn}
\left\{ \begin{array}{lllll}
 \dot{b}_n & = & a_n^2 - a_{n-1}^2 \\
 \dot{a}_n & = & \frac{1}{2} a_n (b_{n+1} - b_n)
\end{array} \right. \qquad (n \in \Z).
\end{equation}
Note that $(b_{n+N}, a_{n+N}) = (b_n, a_n)$ for any $n \in \Z$, and $\prod_{n=1}^N a_n = \alpha^N$.
%Whereas in the physical origin and the corresponding coordinates $(q, p)$ the value of the parameter $\alpha > 0$ is fixed, the Flaschka coordinates $(b, a)$ have the advantage that it is possible to treat all values of $\alpha > 0$ simultaneously. Consequently
Hence we can identify the sequences $(b_n)_{n \in \Z}$ and $(a_n)_{n \in \Z}$ with the vectors $(b_n)_{1 \leq n \leq N} \in \R^N$ and $(a_n)_{1 \leq n \leq N} \in \R_{>0}^N$. Our aim is to study the normal form of the system of equations (\ref{flaeqn}) on the phase space
\begin{displaymath}
\M := \R^N \times \R_{>0}^N.
\end{displaymath}
This system is Hamiltonian with respect to the nonstandard Poisson structure $J \equiv J_{b,a}$, defined at a point $(b,a) = ((b_n, a_n)_{1 \leq n \leq N}$ by
\begin{equation} \label{jdef}
J = \left( \begin{array}{cc}
0 & A \\
-{}^t A & 0 \\
\end{array} \right),
\end{equation}
where $A$ is the $b$-independent $N \times N$-matrix
\begin{equation} \label{adef}
A = \frac{1}{2} \left( \begin{array}{ccccc}
a_1 & 0 & \ldots & 0 & -a_N \\
-a_1 & a_2 & 0 & \ddots & 0 \\
0 & -a_2 & a_3 & \ddots & \vdots \\
\vdots & \ddots & \ddots & \ddots & 0 \\
0 & \ldots & 0 & -a_{N-1} & a_N \\
\end{array} \right).
\end{equation}
The Poisson bracket corresponding to (\ref{jdef}) is then given by
\begin{eqnarray}
\{ F, G \}_J(b,a) & = & \langle (\nabla_b F, \nabla_a F), \, J \, (\nabla_b G, \nabla_a G) \rangle_{\R^{2N}} \nonumber\\
& = & \langle \nabla_b F, A \, \nabla_a G \rangle_{\R^N} - \langle \nabla_a F, A^t \, \nabla_b G \rangle_{\R^N}. \label{poisson}
\end{eqnarray}
%\begin{equation} \label{poisson}
%\{ F, G \}(b,a) = \langle \textrm{grad}_{(b,a)}F, J_{(b,a)} \textrm{grad}_{(b,a)}G \rangle,
%\end{equation}
where $F,G \in C^1(\M)$ and where $\nabla_b$ and $\nabla_a$ denote the gradients with respect to the $N$-vectors $b = (b_1, \ldots, b_N)$ and $a = (a_1, \ldots, a_N)$, respectively. Therefore, equations (\ref{flaeqn}) can alternatively be written as $\dot{b}_n = \{ b_n, H \}_J$, $\dot{a}_n = \{ a_n, H \}_J$ $(1 \leq n \leq N)$. Further note that
\begin{equation}
  \{ b_n, a_n \}_J = \frac{a_n}{2}; \quad \{ b_{n+1}, a_n \}_J = -\frac{a_n}{2},
\end{equation}
while $\{ b_n, a_k \}_J = 0$ for any $n,k$ with $n \notin \{ k, k+1 \}$.
%\begin{displaymath} %\label{todapoisson}
%\left\{ \begin{array}{ccc}
% \dot{b}_n & = & \{ b_n, H \} \\
% \dot{a}_n & = & \{ a_n, H \}
%\end{array} \right..
%\end{displaymath}

Since the matrix $A$ defined by (\ref{adef}) has rank $N-1$, the Poisson structure $J$ is degenerate.
%which reflects the fact that the representation of the periodic Toda lattice in the Flaschka coordinates $(b,a)$ ``artificially'' unifies the treatment of all possible parameter values for $\alpha$ in one system, whereas the system physically only depends on the $N-1$ relative coordinates $q_2 - q_1, \ldots, q_N - q_{N-1}$. Mathematically, this degeneracy becomes clear through the fact that the Poisson structure $J$ 
It admits the two Casimir functions\footnote{A smooth function $C: \M \to \R$ is a Casimir function for $J$ if $\{ C, \cdot \}_J \equiv 0$.}
\begin{equation} \label{casimirdef}
C_1 := -\frac{1}{N} \sum_{n=1}^N b_n \quad \textrm{and} \quad C _2 := \left( \prod_{n=1}^N a_n \right)^\frac{1}{N}
\end{equation}
whose gradients $\nba C_i \! = \! (\n_b C_i, \n_a C_i)$ ($i = 1,2$), given by
\setlength\arraycolsep{1.5pt} {
\begin{eqnarray}
\nabla_b C_1 & = & -\frac{1}{N} (1, \ldots, 1), \qquad \nabla_a C_1 = 0, \label{c1grad} \\
\nabla_b C_2 & = & 0, \qquad \nabla_a C_2 = \frac{C_2}{N} \left( \frac{1}{a_1}, \ldots, \frac{1}{a_N} \right), \label{c2grad}
\end{eqnarray}}
are linearly independent at each point $(b,a)$ of $\M$.

Let
\begin{displaymath}
\Mba := \{ (b,a) \in \R^{2N} : (C_1, C_2) = (\beta, \a) \}
\end{displaymath}
denote the level set of $(C_1, C_2)$ for $(\b, \a) \in \R \times \R_{> 0}$. Note that $(-\b 1_N, \a 1_N) \in \Mba$ where $1_N = (1, \ldots, 1) \in \R^N$. By (\ref{c1grad})-(\ref{c2grad}), the sets $\Mba$ are real analytic submanifolds of $\M$ of codimension two. Furthermore the Poisson structure $J$, restricted to $\Mba$, becomes nondegenerate everywhere on $\Mba$ and therefore induces a symplectic structure $\nu_{\b, \a}$ on $\Mba$. In this way, we obtain a symplectic foliation of $\M$ with $\Mba$ being its (symplectic) leaves.

To state the main result of this paper, we introduce the model space
$$ \P := \R^{2(N-1)} \times \R \times \R_{>0} $$
 endowed with the degenerate Poisson structure $J_0$ whose symplectic leaves are $\R^{2(N-1)} \times \{ \b \} \times \{ \a \}$ endowed with the canonical symplectic structure.
\begin{theorem} \label{sumthm}
There exists a map
\begin{displaymath}
\begin{array}{ccll}
 \Omega: & (\M, J) & \to & (\P, J_0) \\
 & (b,a) & \mapsto & ((x_n, y_n)_{1 \leq n \leq N-1}, C_1, C_2)
\end{array}
\end{displaymath}
with the following properties:
\begin{itemize}
  \item $\Omega$ is a real analytic diffeomorphism.
  \item $\Omega$ is canonical, i.e. it preserves the  Poisson brackets. In particular, the symplectic foliation of $\M$ by $\Mba$ is trivial.
  \item The coordinates $(x_n, y_n)_{1 \leq n \leq N-1}, C_1, C_2$ are
  global Birkhoff coordinates for the periodic Toda lattice, i.e. the transformed Toda Hamiltonian $\hat{H} = H \circ \Omega^{-1}$
  is a function of the actions $(I_n)_{1 \leq n \leq N-1}$ and $C_1, C_2$ alone.
\end{itemize}
\end{theorem}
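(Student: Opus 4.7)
My plan is to follow the spectral-theoretic construction of Birkhoff coordinates that has succeeded for the periodic KdV and NLS equations, adapted to the Lax-pair formulation of the periodic Toda lattice. The main tool is the periodic Jacobi matrix $L(b,a)$ of size $N$, whose discriminant $\Delta(\lambda) = \mathrm{tr}\, T(\lambda)$ is a polynomial of degree $N$ in $\lambda$ and encodes the periodic/antiperiodic spectrum. Together with the Dirichlet spectrum of an auxiliary boundary value problem, this data parametrizes the isospectral sets within each leaf $\Mba$; on the open dense subset of $\Mba$ where all $N-1$ spectral gaps are open, these sets are real analytic tori of dimension $N-1$, and Liouville-Arnold supplies local action-angle variables. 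I would first define the $N-1$ actions as the gap integrals
\begin{displaymath}
  I_k = \frac{1}{\pi} \int_{\Gamma_k} \lambda\, \frac{\Delta'(\lambda)}{\sqrt{\Delta(\lambda)^2 - 4}}\, d\lambda, \qquad 1 \leq k \leq N-1,
\end{displaymath}
where $\Gamma_k$ is a cycle on the spectral curve $w^2 = \Delta(\lambda)^2 - 4$ encircling the $k$-th band. Using the Lax equation and a trace formula for $H$, one checks that the $I_k$ Poisson-commute with one another and with $C_1, C_2$, and that $H$ is a function of $(I_1, \ldots, I_{N-1}, C_1, C_2)$ alone. Conjugate angles $\theta_k$ are obtained from an Abel-type map on the Dirichlet divisor, and I would then define the Birkhoff coordinates by
\begin{displaymath}
  x_k + i\, y_k := \sqrt{2 I_k}\, e^{i \theta_k},
\end{displaymath}
writing them intrinsically as spectral quantities that do not reference $\theta_k$ in isolation.

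The main obstacle---which absorbs most of the work in analogous PDE constructions---is proving that $\Omega$ extends real analytically across the strata of $\Mba$ on which one or more gaps collapse, and that it remains a diffeomorphism there. Although $\theta_k$ becomes undefined when the $k$-th gap closes, the pair $(x_k, y_k)$ should behave like Cartesian coordinates at such singular loci, in analogy with polar coordinates on $\R^2$ near the origin. The strategy is to realize $x_k + i y_k$ as a normalized period or residue integral on the spectral curve whose analytic dependence on $(b,a)$ persists across gap closures; this will require quantitative control of the position of the Dirichlet eigenvalue trapped in the closing $k$-th gap, together with uniform bounds on the relevant abelian differentials.

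Finally, to establish that $\Omega$ is a global real analytic diffeomorphism onto $\P$, I would combine three ingredients. The first is a direct computation at the equilibrium $(-\beta 1_N, \alpha 1_N) \in \Mba$, where the linearization of (\ref{flaeqn}) decouples into $N-1$ harmonic oscillators, making $d\Omega$ explicitly invertible at this point. The second is properness of $\Omega$ on each leaf, which follows from a priori bounds that recover $(b,a)$ from its spectral data. The third is a continuity/degree argument on the connected manifold $\Mba$ that upgrades local bijectivity to global injectivity. Canonicity then reduces to verifying $\{x_j, x_k\}_J = \{y_j, y_k\}_J = 0$ and $\{x_j, y_k\}_J = \delta_{jk}$, which on the open stratum follow from the spectral Poisson bracket identities $\{I_j, I_k\}_J = 0$ and $\{I_j, \theta_k\}_J = \delta_{jk}$, and propagate to all of $\M$ by analytic continuation.
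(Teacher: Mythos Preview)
Your proposal is essentially correct and follows the same approach as the paper: spectral actions via contour integrals of $\lambda\,\dot\Delta/\sqrt{\Delta^2-4}$, angles via an Abel map on the Dirichlet divisor, Cartesian coordinates $x_k+iy_k=\sqrt{2I_k}\,e^{i\theta_k}$, analytic extension across collapsed gaps, canonical relations from $\{I_j,I_k\}_J=0$, $\{I_j,\theta_k\}_J=\delta_{jk}$, $\{\theta_j,\theta_k\}_J=0$ plus continuity, properness from a priori spectral estimates, and an open/closed argument on the connected target.

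One point deserves care in your ordering. You propose to verify invertibility of $d\Omega$ only at the equilibrium and then run a degree/continuity argument; but a proper map that is a local diffeomorphism at a \emph{single} point need not be a covering map, so this is not enough by itself. The paper (and implicitly your own proposal) obtains invertibility of $d\Omega$ at \emph{every} point directly from the canonical relations $\{x_j,y_k\}_J=\delta_{jk}$, $\{x_j,x_k\}_J=\{y_j,y_k\}_J=0$: these force $\Omega$ to restrict to a symplectomorphism on each leaf $\Mba$, hence a local diffeomorphism there, and together with the transversality of $\nabla C_1,\nabla C_2$ to the leaves one gets $d\Omega$ invertible on all of $\M$. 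So you should establish the canonical relations \emph{before} the global diffeomorphism step (you will also need $\{\theta_j,\theta_k\}_J=0$, which you did not list), and then the equilibrium computation becomes unnecessary for the argument, though it is a useful sanity check.
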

Further properties of $\Omega$ are discussed at the end of section \ref{diffeochapter}.

In \cite{ahtk3}, we use the Birkhoff coordinates $(x_n, y_n)_{1 \leq n \leq N-1}$ given by Theorem \ref{sumthm} to obtain a KAM-theorem for the periodic Toda lattice. 

\emph{Related work:} Theorem \ref{sumthm} improves on earlier work on the normal form of the periodic Toda lattice in \cite{bbgk, bggk}. In particular, we construct global Birkhoff coordinates on all of $\M$ instead of a single symplectic leaf and show that techniques recently developed for treating the KdV equation (cf. \cite{kama, kapo}) and the defocusing NLS equation (cf. \cite{gkp, mcva1}) can also be applied for the Toda lattice.

\emph{Outline of the paper:} In section \ref{tools} we review the Lax pair of the periodic Toda lattice and collect some auxiliary results on the spectrum of the Jacobi matrix $L(b,a)$ associated to an element $(b,a) \in \M$. The construction of the coordinates $(x_n, y_n)_{1 \leq n \leq N-1}$ (see section \ref{birkhoff}) uses the action variables $(I_n)_{1 \leq n \leq N-1}$ defined on $\M$ and the angle variables $(\theta_n)_{1 \leq n \leq N-1}$ defined on a dense domain $\M' \subset \M$, both of which have been studied in detail in our previous paper \cite{ahtk1}. We give a brief review of these action-angle variables in section \ref{review}. The coordinates $(x_n, y_n)_{1 \leq n \leq N-1}$ are then defined on the dense domain $\M'$ by
$$ (x_n, y_n) = \sqrt{2 I_n} (\cos \theta_n, \sin \theta_n). $$
In a first step we show that the coordinate functions $(x_n, y_n)$ extend to real analytic functions on all of $\M$. Using the canonical relations among the action-angle variables established in \cite{ahtk1} we then show in section \ref{diffeochapter} that $\Omega$ is a canonical local diffeomorphism. Finally, the fact that $\Omega$ is 1-1 and onto is deduced from a priori estimates of the actions which are used to prove that $\Omega$ is proper (cf. section \ref{diffeochapter}).

\section{Preliminaries} \label{tools}

It is well known (cf. e.g. \cite{toda}) that the system (\ref{flaeqn}) admits a Lax pair formulation $\dot{L}= \frac{\partial L}{\partial t} = [B, L]$, where $L \equiv L^+(b,a)$ is the periodic Jacobi matrix defined by
\begin{equation} \label{jacobi}
L^\pm(b,a) := \left( \begin{array}{ccccc}
b_1 & a_1 & 0 & \ldots & \pm a_N \\
a_1 & b_2 & a_2 & \ddots & \vdots \\
0 & a_2 & b_3 & \ddots & 0 \\
\vdots & \ddots & \ddots & \ddots & a_{N-1} \\
\pm a_N & \ldots & 0 & a_{N-1} & b_N \\
\end{array} \right),
\end{equation}
and $B$ the skew-symmetric matrix
\begin{displaymath}
B = \left( \begin{array}{ccccc}
0 & a_1 & 0 & \ldots & -a_N \\
-a_1 & 0 & a_2 & \ddots & \vdots \\
0 & -a_2 & \ddots & \ddots & 0 \\
\vdots & \ddots & \ddots & \ddots & a_{N-1} \\
a_N & \ldots & 0 & -a_{N-1} & 0 \\
\end{array} \right).
\end{displaymath}
Hence the flow of $\dot{L} = [B, L]$ is isospectral.
\begin{prop} \label{isosp}
For a solution $\big( b(t), a(t) \big)$ of the periodic Toda lattice (\ref{flaeqn}), the eigenvalues $(\l_j^\pm)_{1 \leq j \leq N}$ of $L^\pm\big( b(t), a(t) \big)$ are conserved quantities.
\end{prop}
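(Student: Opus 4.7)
The plan is to deduce the proposition from the classical Lax isospectrality principle. The Lax equation $\dot L^+ = [B, L^+]$ has already been recorded just above the proposition; an entirely analogous equation $\dot L^- = [B^-, L^-]$ holds for $L^-$, with $B^-$ obtained from $B$ by reversing the signs of the two corner entries $\pm a_N$, so that $B^-$ remains skew-symmetric. My first step is therefore a bookkeeping check that $[B^-, L^-]$ reproduces (\ref{flaeqn}) entry by entry; the only nontrivial positions are the corners $(1,N)$ and $(N,1)$, where the joint sign change in $L^-$ and $B^-$ produces the same right-hand side as in the periodic case.

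With the two Lax equations in hand, I would introduce propagators $U^\pm(t)$ solving the linear matrix ODE $\dot U^\pm = B^\pm U^\pm$ with $U^\pm(0) = I$, where $B^+ := B$. Skew-symmetry of $B^\pm$ gives
\begin{displaymath}
\frac{d}{dt}\bigl((U^\pm)^t U^\pm\bigr) = (U^\pm)^t \bigl(B^\pm + (B^\pm)^t\bigr) U^\pm = 0,
\end{displaymath}
so $U^\pm(t)$ is orthogonal for all $t$. A short calculation combining $\frac{d}{dt}(U^\pm)^{-1} = -(U^\pm)^{-1} B^\pm$ with the Lax equation then shows that $M^\pm(t) := (U^\pm(t))^{-1} L^\pm(t) \, U^\pm(t)$ has vanishing time derivative. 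Hence $L^\pm(t) = U^\pm(t) \, L^\pm(0) \, (U^\pm(t))^{-1}$ is orthogonally similar to $L^\pm(0)$ throughout the trajectory, and in particular the eigenvalues $(\l_j^\pm)_{1 \leq j \leq N}$ of $L^\pm(t)$ agree with those of $L^\pm(0)$.

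There is no genuine obstacle here: verification of the Lax equation for $L^-$ is a routine entry-by-entry computation, and global-in-time existence of $U^\pm$ is automatic from the linearity of its defining ODE together with the continuity of $B^\pm(t)$ along the Toda trajectory. Once both Lax pairs are in place, the argument is the standard isospectral-deformation scheme going back to Lax, and the claim follows.
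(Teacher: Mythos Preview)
Your proposal is correct and follows exactly the route the paper implicitly takes: the paper records the Lax equation $\dot L = [B,L]$, notes in one line that ``the flow of $\dot L = [B,L]$ is isospectral'', and then states the proposition without further proof. Your write-up simply supplies the standard isospectral-deformation argument behind that sentence, together with the routine observation (which the paper does not spell out) that the same mechanism applies to $L^-$ with the modified skew matrix $B^-$.
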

%In addition, Manakov \cite{mana} showed that these eigenvalues are in involution with each other, i.e. $\{ \l_i, \l_j \} = 0$ for all $ 1 \leq i,j \leq N$.

Let us now collect a few results from \cite{moer} and \cite{toda} of the spectral theory of Jacobi matrices needed in the sequel. Denote by $\M^\C$ the complexification of the phase space $\M$,
\begin{displaymath}
\M^{\C} = \{ (b,a) \in \C^{2N} : \textrm{Re }a_j > 0 \quad \forall \, 1 \leq j \leq N \}.
\end{displaymath}
For $(b,a) \in \M^\C$ we consider for any complex number $\l$ the difference equation
\begin{equation} \label{diff}
(R_{b,a} y)(k) = \l y(k) \quad (k \in \Z)
\end{equation}
where $y(\cdot) = y(k)_{k \in \Z} \in \C^{\Z}$ and $R_{b,a}$ is the difference operator
\begin{equation} \label{rdef}
R_{b,a} = a_{k-1} S^{-1} + b_k S^0 + a_k S^1
\end{equation}
with $S^m$ denoting the shift operator of order $m \in \Z$, i.e.
\begin{displaymath}
(S^m y)(k) = y(k+m) \textrm{ for } k \in \Z.
\end{displaymath}

\emph{Fundamental solutions:} The two fundamental solutions $y_1(\cdot, \l)$ and $y_2(\cdot, \l)$ of (\ref{diff}) are defined by the standard initial conditions $y_1(0, \l) = 1$, $y_1(1, \l) = 0$ and $y_2(0, \l) = 0$, $y_2(1, \l) = 1$. They satisfy the \emph{Wronskian identity}
\begin{equation} \label{wronski}
W(n) := y_1(n, \l) y_2(n+1, \l) - y_1(n+1, \l) y_2(n, \l) = \frac{a_N}{a_n}.
\end{equation}
Note that for $n = N$ one gets
\begin{equation} \label{wronskispecial}
W(N) = 1.
\end{equation}
For each $k \in \N$, $y_i(k, \l, b, a)$, $i = 1,2$, is a polynomial in $\l$ of degree at most $k-1$ and depends real analytically on $(b,a)$ (see \cite{moer}). In particular, one easily verifies that $y_2(N+1, \l, b, a)$ is a polynomial in $\l$ of degree $N$ with leading term $\a^{-N} \l^N$, whereas $y_1(N, \l)$ is a polynomial in $\l$ of degree less than $N$.

\emph{Discriminant:} We denote by $\Delta(\l) \equiv \Delta(\l, b, a)$ the \emph{discriminant} of (\ref{diff}), defined by
\begin{equation} \label{discrdef}
\Delta(\l) := y_1(N, \l) + y_2(N+1, \l).
\end{equation}
In the sequel, we will often write $\Delta_\l$ for $\Delta(\l)$. %Note that $y_2(N+1, \l)$ is a polynomial in $\l$ of degree $N$ with leading term $\a^{-N} \l^N$, 
As $y_2(N+1,\l) = \a^{-N} \l^N + \ldots$ and $y_1(N,\l) = O(\l^{N-1})$, $\Delta(\l, b, a)$ is a polynomial in $\l$ of degree $N$ with leading term $\a^{-N} \l^N$, and it depends real analytically on $(b, a)$ (see e.g. \cite{toda}). According to Floquet's Theorem (see e.g. \cite{teschl2}), for $\l \in \C$ given, (\ref{diff}) admits a periodic or antiperiodic solution of period $N$ if the discriminant $\Delta(\l)$ satisfies $\Delta(\l) = +2$ or $\Delta(\l) = -2$, respectively. These solutions correspond to eigenvectors of $L^+$ or $L^-$, respectively, with $L^\pm$ defined by (\ref{jacobi}). It turns out to be more convenient to combine these two cases by considering the periodic Jacobi matrix $Q \equiv Q(b,a)$ of size $2N$ defined by
\begin{displaymath}
Q = \left( \begin{array}{cccc|cccc}
b_1 & a_1 & \ldots & 0 & 0 & \ldots & 0 & a_N \\
a_1 & b_2 & \ddots & \vdots & 0 & \ldots & & 0 \\
\vdots & \ddots & \ddots & a_{N-1} & \vdots & & & \vdots \\
0 & \ddots & \;\; a_{N-1} & b_N & a_N & \ldots & 0 & 0 \\
\hline
0 & \ldots & 0 & a_N & b_1 & a_1 & \ldots & 0 \\
0 & \ldots & & 0 & a_1 & b_2 & \ddots & \vdots \\
\vdots & & & \vdots & \vdots & \ddots & \ddots & a_{N-1} \\
a_N & \ldots & 0 & 0 & 0 & \ddots & \;\; a_{N-1} & b_N \\
\end{array} \right).
\end{displaymath}
Then the spectrum of the matrix $Q$ is the union of the spectra of the matrices $L^+$ and $L^-$ and therefore the zero set of the polynomial $\Delta^2_\l - 4$.  The function $\Delta^2_\l - 4$ is a polynomial in $\l$ of degree $2N$ and admits a product representation
\begin{equation} \label{delta2lrepr}
  \Delta^2_\l - 4 = \alpha^{-2N} \prod_{j=1}^{2N} (\l - \l_j).
\end{equation}
The factor $\a^{-2N}$ in (\ref{delta2lrepr}) comes from the above mentioned fact that the leading term of $\Delta(\l)$ is $\a^{-N} \l^N$.

For any $(b,a) \in \M$, the matrix $Q$ is symmetric and hence the eigenvalues $(\l_j)_{1 \leq j \leq 2N}$ of $Q$ are real. When listed in increasing order and with their algebraic multiplicities, they satisfy the following relations (cf. \cite{moer})
\begin{displaymath}
\l_1 < \l_2 \leq \l_3 < \l_4 \leq \l_5 < \ldots \l_{2N-2} \leq \l_{2N-1} < \l_{2N}.
\end{displaymath}
As explained above, the $\l_j$ are periodic or antiperiodic eigenvalues of $L$ and thus eigenvalues of $L^+$ or $L^-$ according to whether $\Delta(\l_j) = 2$ or $\Delta(\l_j) = -2$. One has (cf. \cite{moer})
\begin{equation} \label{deltalambdapm2}
  \Delta(\l_1) = (-1)^N \cdot 2, \quad \Delta(\l_{2n}) = \Delta(\l_{2n+1}) = (-1)^{n+N} \cdot 2, \quad \Delta(\l_{2N}) = 2.
\end{equation}

Since $\Delta_\l$ is a polynomial of degree $N$ with leading term $\a^{-N} \l^N$, $\dot{\Delta}_\l \equiv \dot{\Delta}(\l) = \frac{d}{d\l} \Delta(\l)$ is a polynomial of degree $N-1$ with leading term $N \a^{-N} \l^{N-1}$, hence admits a product representation of the form
\begin{equation} \label{dotlrepr}
  \dot{\Delta}_\l = N \a^{-N} \prod_{k=1}^{N-1} (\l - \dot{\l}_k).
\end{equation}
The zeroes $(\dot{\l}_n)_{1 \leq n \leq N-1}$ of $\dot{\Delta}_\l$ satisfy $\l_{2n} \leq \dot{\l}_n \leq \l_{2n+1}$ for any $1 \leq n \leq N-1$. The open intervals $(\l_{2n}, \l_{2n+1})$ are referred to as the \emph{$n$-th spectral gap} and $\gamma_n := \l_{2n+1} - \l_{2n}$ as the \emph{$n$-th gap length}. Note that $|\Delta(\l)| > 2$ on the spectral gaps. We say that the $n$-th gap is \emph{open} if $\gamma_n > 0$ and \emph{collapsed} otherwise. The set of elements $(b,a) \in \M$ for which the $n$-th gap is collapsed is denoted by $D_n$,
\begin{equation} \label{dndef}
D_n := \{ (b,a) \in \M : \gamma_n = 0 \}.
\end{equation}
Using that $\gamma_n^2$ (unlike $\gamma_n$) is a real analytic function on $\M$, it can be shown that $D_n$ is a real analytic submanifold of $\M$ of codimension $2$ (cf. \cite{kapo} for a similar statement in the case of Hill's operator).

\emph{Isolating neighborhoods:}
Let $(b,a) \in \M$ be given. The strict inequalities $\l_{2n-1} < \l_{2n}$ ($1 \leq n \leq N$) guarantee the existence of a family of mutually disjoint open subsets $(U_n)_{1 \leq n \leq N-1}$ of $\C$ so that for any $1 \leq n \leq N-1$, $U_n$ is a neighborhood of the closed interval $[\l_{2n}, \l_{2n+1}]$. Such a family of neighborhoods is referred to as a family of \emph{isolating neighborhoods} for $(b,a)$.

In the case where $(b,a) \in \M^\C$, we list the eigenvalues $(\l_j)_{1 \leq j \leq 2N}$ in lexicographic ordering\footnote{The lexicographic ordering $a \prec b$ for complex numbers $a$ and $b$ is defined by
\begin{equation} \label{lexorder}
a \prec b \quad :\Longleftrightarrow \quad \left\{
\begin{array}{l} \textrm{Re
}a < \textrm{Re }b \\
\textrm{or} \\
\textrm{Re }a = \textrm{Re }b \textrm{ and } \textrm{Im }a \leq
\textrm{Im }b. \end{array} \right.
\end{equation}
}
\begin{displaymath}
  \l_1 \prec \l_2 \prec \l_3 \prec \ldots \prec \l_{2N}.
\end{displaymath}
We then extend the gap lenghts $\gamma_n$ to all of $\M^\C$ by
\begin{displaymath}
  \gamma_n := \l_{2n+1} - \l_{2n} \quad (1 \leq n \leq N-1)
\end{displaymath}
and define
\begin{equation} \label{dncdef}
  D_n^\C := \{ (b,a) \in \M^{\C} : \gamma_n = 0 \}.
\end{equation}
In the sequel, we will omit the superscript and always write $D_n$ for $D_n^\C$.

Similarly, we do this for the zeroes $(\dot{\l}_n)_{1 \leq n \leq N-1}$ of $\dot{\Delta}_\l$. As the lexicographic ordering is not continuous, the $\l_i$'s and $\dot{\l}_i$'s no longer depend continuously on $(b,a) \in \M^\C$. However, if we choose a small enough complex neighborhood $\W$ of $\M$ in $\M^\C$, then for any $(b,a) \in \W$ the closed intervals $G_n \subseteq \C$ ($1 \leq n \leq N-1$) defined by
\begin{equation} \label{gndef}
G_n := \{ (1-t) \l_{2n} + t \l_{2n+1}: 0 \leq t \leq 1 \}
\end{equation}
 are pairwise disjoint, and hence, as in the real case, there exists a family of isolating neighborhoods $(U_n)_{1 \leq n \leq N-1}$.

\begin{lemma} \label{wmungnlemma}
There exists a neighborhood $\W$ of $\M$ in $\M^\C$ such that for any $(b,a) \in \W$, there are neighborhoods $U_n$ of $G_n$ in $\C$ ($1 \leq n \leq N-1$) which are pairwise disjoint.
\end{lemma}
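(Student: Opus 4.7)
The plan is to argue locally around each real base point in $\M$ and then globalize by taking the union of the resulting complex neighborhoods.

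Fix $(b_0, a_0) \in \M$ and denote the eigenvalues of $Q(b_0, a_0)$ by $\l_1^0 < \l_2^0 \leq \l_3^0 < \ldots < \l_{2N}^0$, so that the strict separations $\l_{2n-1}^0 < \l_{2n}^0$ hold for $1 \leq n \leq N$. Set $\rho := \frac{1}{3} \min_{1 \leq n \leq N} (\l_{2n}^0 - \l_{2n-1}^0) > 0$, and let $U_n^0 \subset \C$ be the open $\rho$-neighborhood of the real segment $[\l_{2n}^0, \l_{2n+1}^0]$ for $1 \leq n \leq N-1$. By construction, each $U_n^0$ is convex and contains $G_n(b_0, a_0)$, and the family $\{U_n^0\}_{1 \leq n \leq N-1}$ is pairwise disjoint.

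Next I would invoke continuous dependence of the roots of $\Delta^2(\l, b, a) - 4$ on $(b,a)$: its coefficients are real-analytic on $\M$, extend holomorphically to $\M^\C$, and its leading coefficient $(\prod_n a_n)^{-2}$ is nonvanishing on $\M^\C$, so the $2N$ roots depend continuously on $(b,a) \in \M^\C$ as an unordered multiset (with multiplicity). Hence there is an open complex neighborhood $V^0 \subset \M^\C$ of $(b_0, a_0)$ on which every root lies within distance $\rho/2$ of some $\l_j^0$. By the choice of $\rho$, this traps exactly two roots (with multiplicity) in each $U_n^0$ and keeps their real parts strictly smaller than those of the two roots in $U_{n+1}^0$ and strictly larger than those in $U_{n-1}^0$ (with analogous clustering near the extremal simple eigenvalues $\l_1^0, \l_{2N}^0$). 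Consequently the two roots in $U_n^0$ are precisely $\l_{2n}(b, a)$ and $\l_{2n+1}(b, a)$ in lex order, so the segment $G_n(b,a)$ lies in the convex set $U_n^0$, and the $G_n(b,a)$ are pairwise disjoint throughout $V^0$.

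Finally, set $\W := \bigcup_{(b_0, a_0) \in \M} V^0$. This is an open neighborhood of $\M$ in $\M^\C$, and for any $(b, a) \in \W$ the sets $U_n^0$ provided by any $V^0$ containing $(b, a)$ serve as the required pairwise disjoint neighborhoods of $G_n(b,a)$. The main obstacle I anticipate is the tension between the globally discontinuous lex ordering and the continuous local root-tracking by the $U_n^0$'s; shrinking $V^0$ until distinct root clusters remain strictly separated in real part is precisely what reconciles the two labelings.
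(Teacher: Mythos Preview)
The paper states Lemma~\ref{wmungnlemma} without proof, treating it as a routine consequence of the continuous dependence of the roots of $\Delta_\l^2-4$ on $(b,a)$; your argument supplies exactly this justification and is correct. The only point worth tightening is the passage from ``every root lies within distance $\rho/2$ of some $\l_j^0$'' to ``exactly two roots in each $U_n^0$'': this follows most cleanly from an argument-principle (Rouch\'e-type) count on each $U_n^0$, which also directly gives the multiplicity statement you need, rather than from Hausdorff proximity of the root multiset alone. With that made explicit, your real-part separation argument correctly reconciles the local cluster labeling with the global lexicographic ordering, so the identification of the two trapped roots with $\l_{2n},\l_{2n+1}$ and hence $G_n(b,a)\subset U_n^0$ goes through as you describe.
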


\begin{remark}
In the sequel, we will have to shrink the complex neighborhood $\W$ several times, but continue to denote it by the same letter.
\end{remark}

\emph{Contours $\Gamma_n$:} For any $(b,a) \in \W$ and any $1 \leq n \leq N-1$, we denote by $\Gamma_n$ a circuit in $U_n$ around $G_n$ with counterclockwise orientation. 

\emph{Isospectral set:} For $(b,a) \in \M$, the set Iso$(b,a)$ of all elements $(b',a') \in \M$ so that $Q(b',a')$ has the same spectrum as $Q(b,a)$ is described with the help of the Dirichlet eigenvalues $\mu_1 < \mu_2 < \ldots < \mu_{N-1}$ of (\ref{diff}) defined by
\begin{equation} \label{mundef}
y_1(N+1, \mu_n) = 0.
\end{equation}
They coincide with the eigenvalues of the $(N-1) \times (N-1)$-matrix $L_2 = L_2(b,a)$ given by
$$ \left( \begin{array}{ccccc}
b_2 & a_2 & 0 & \ldots & 0  \\
a_2 & \ddots & \ddots & \ddots & \vdots  \\
0 & \ddots & \ddots & \ddots & 0 \\
\vdots & \ddots & \ddots & \ddots & a_{N-1} \\
0 & \ldots & 0 & a_{N-1} & b_N \\
\end{array} \right). $$
In the sequel, we will also refer to $\mu_1, \ldots, \mu_{N-1}$ as the Dirichlet eigenvalues of $L(b,a)$. Evaluating the Wronskian identity (\ref{wronski}) at $\l = \mu_n$ one sees that $\mu_n$ lies in the closure of the $n$-th spectral gap. More precisely, substituting $y_1(N+1, \mu_n) = 0$ in the identity (\ref{wronski}) with $\l = \mu_n$ yields
\begin{equation} \label{wrmu}
y_1(N,\mu_n) y_2(N+1,\mu_n) = 1.
\end{equation}
Hence the value of the discriminant at $\mu_n$ is given by
\begin{equation} \label{discdir}
\Delta(\mu_n) = y_2(N+1,\mu_n) + \frac{1}{y_2(N+1,\mu_n)}
\end{equation}
and $|\Delta(\mu_n)| \geq 2$. By Lemma \ref{speclemma} below, given the point $(b,a)$ with $b_1 = \ldots = b_N = \b$ and $a_1 = \ldots = a_N = \a$, one has $\l_{2n} = \l_{2n+1}$ and hence $\mu_n = \l_{2n}$ for any $1 \leq n \leq N-1$. It then follows from a straightforward deformation argument that $\l_{2n} \leq \mu_n \leq \l_{2n+1}$ everywhere in the real space $\M$.% Note however that in the complex neighborhood $\W$ of $\M$ in $\M^\C$ of Lemma \ref{wmungnlemma} the $\mu_n$'s no longer necessarily satisfy $\mu_n \in G_n$.

Conversely, according to van Moerbeke \cite{moer}, given any (real) Jacobi matrix $Q$ with spectrum $\l_1 < \l_2 \leq \l_3 < \l_4 \leq \l_5 < \ldots \l_{2N-2} \leq \l_{2N-1} < \l_{2N}$ and any sequence $(\mu_n)_{1 \leq n \leq N-1}$ with $\l_{2n} \leq \mu_n \leq \l_{2n+1}$ for $n=1, \ldots, N-1$, there are exactly $2^r$ $N$-periodic Jacobi matrices $Q$ with spectrum $(\l_n)_{1 \leq n \leq 2N}$ and Dirichlet spectrum $(\mu_n)_{1 \leq n \leq N-1}$, where $r$ is the number of $n$'s with $\l_{2n} < \mu_n < \l_{2n+1}$.

In the case where $(b,a) \in \M^\C$, we continue to define the Dirichlet eigenvalues $(\mu_n)_{1 \leq n \leq N-1}$ by (\ref{mundef}), and we list them in lexicographic ordering $\mu_1 \prec \mu_2 \prec  \ldots \prec \mu_{N-1}$.
%\begin{displaymath}
%  \mu_1 \prec \mu_2 \prec  \ldots \prec \mu_{N-1},
%\end{displaymath}
Then the $\mu_i$'s no longer depend continuously on $(b,a) \in \M^\C$. However, if we choose the complex neighborhood $\W$ of $\M$ in $\M^\C$ of Lemma \ref{wmungnlemma} small enough, then for any $(b,a) \in \W$ and $1 \leq n \leq N-1$, there exist isolating neighborhoods $(U_n)_{1 \leq n \leq N-1}$ so that $\mu_n$ is contained in the neighborhood $U_n$ of $G_n$ (but not necessarily in $G_n$ itself).

For later use, we compute the spectra of $Q(b,a)$ and $L_2(b,a)$ in the special case $(b,a) = (\b 1_N, \a 1_N)$ with $\b \in \R$ and $\a > 0$. Here $1_N$ denotes the vector $(1, \ldots, 1) \in \R^N$. These points are the equilibrium points (of the restrictions) of the Toda Hamiltonian vector field (to the symplectic leaves $\Mmba$). We compute the spectrum $(\l_j)_{1 \leq j \leq 2N}$ of the matrix $Q(\b 1_N, \a 1_N)$ and the Dirichlet eigenvalues $(\mu_l)_{1 \leq l \leq N-1}$ of $L = L(\b 1_N, \a 1_N)$. Furthermore, for any $1 \leq l \leq N-1$, we compute a normalized eigenvector corresponding to the eigenvalue $\mu_l$, $g_l = \big( g_l(j) \big)_{1 \leq j \leq N}$, i.e. $L g_l = \mu_l g_l$, $g_l(1)=0$, and a vector $h_l = \big( h_l(j) \big)_{1 \leq j \leq N}$ which is the normalized solution of $L y = \mu_l y$ orthogonal to $g_l$ satisfying $W(h_l, g_l)(N) > 0$.
\begin{lemma} \label{speclemma}
The spectrum $(\l_j)_{1 \leq j \leq 2N}$ of $Q(\b 1_N, \a 1_N \! )$ and the Dirichlet eigenvalues $(\mu_l)_{1 \leq l \leq N-1}$ of $L(\b 1_N, \a 1_N)$ are given by
\begin{eqnarray*}
  \l_1 & = & \b - 2 \a, \\
  \l_{2l} = \l_{2l+1} = \mu_l & = & \b - 2 \a \cos \frac{l \pi}{N} \quad (1 \leq l \leq N-1), \\
  \l_{2N} & = & \b +2 \a.
\end{eqnarray*}
In particular, all spectral gaps of $Q(\b 1_N, \a 1_N)$ are collapsed. For any $1 \leq l \leq N-1$, the vectors $g_l$ and $h_l$ defined by
\begin{eqnarray}
  g_l(j) & = & (-1)^{j+1} \sqrt\frac{2}{N} \sin\frac{(j-1)l\pi}{N} \quad (1 \leq j \leq N), \label{gkformula} \\
  h_l(j) & = & (-1)^j \sqrt\frac{2}{N} \cos\frac{(j-1)l\pi}{N} \quad (1 \leq j \leq N) \label{hkformula}
\end{eqnarray}
%with $d_l = \sqrt\frac{N}{2} \frac{1}{\sin \frac{l \pi}{N}}$ 
satisfy $L y = \mu_l y$ and the normalization conditions
\begin{displaymath}
  \sum_{j=1}^N g_l(j)^2 = \sum_{j=1}^N h_l(j)^2 = 1, \quad g_l(0) > 0, \quad g_l(1) = 0;
\end{displaymath}
\begin{displaymath}
  W(h_l, g_l)(N) > 0, \quad \langle h_l, g_l \rangle_{\R^N} = 0.
\end{displaymath}
\end{lemma}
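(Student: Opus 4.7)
\bigskip

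\noindent\textbf{Proof proposal for Lemma \ref{speclemma}.}

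\medskip

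The plan is to exploit the fact that at the equilibrium point all $a_j$ equal $\a$ and all $b_j$ equal $\b$, so the matrices $L^\pm$ and $L_2$ are (up to corner terms) constant–coefficient tridiagonals, hence diagonalizable by discrete trigonometric bases. Writing $L^\pm=\b I+\a M_\pm$, the eigenvalues of $M_+$ on $\R^N$ with periodic boundary conditions are $2\cos(2k\pi/N)$ ($k=0,\dots,N-1$), and those of $M_-$ with antiperiodic boundary conditions are $2\cos((2k+1)\pi/N)$ ($k=0,\dots,N-1$). Since $\mathrm{spec}(Q)=\mathrm{spec}(L^+)\cup\mathrm{spec}(L^-)$, the full set of eigenvalues of $Q$ is $\{\b+2\a\cos(m\pi/N):m=0,1,\dots,2N-1\}$. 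Using $\cos((2N-m)\pi/N)=\cos(m\pi/N)$, the values at $m=0,N$ are simple while those at $m=1,\dots,N-1$ occur with multiplicity two; listing in increasing order (note $\cos$ is decreasing on $[0,\pi]$ and $\cos((N-l)\pi/N)=-\cos(l\pi/N)$) yields $\l_1=\b-2\a$, $\l_{2l}=\l_{2l+1}=\b-2\a\cos(l\pi/N)$ for $1\le l\le N-1$, and $\l_{2N}=\b+2\a$, which is exactly the asserted formula. In particular every spectral gap collapses.

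\medskip

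For the Dirichlet spectrum I recall that the $(N-1)\times(N-1)$ tridiagonal matrix with $\b$ on the diagonal and $\a$ on the off-diagonals has eigenvalues $\b+2\a\cos(k\pi/N)$ with eigenvectors $\sqrt{2/N}\sin(jk\pi/N)$ ($j=1,\dots,N-1$). Reordering by increasing size via $k\mapsto N-l$ gives $\mu_l=\b-2\a\cos(l\pi/N)$, which agrees with $\l_{2l}=\l_{2l+1}$ as required.

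\medskip

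The remaining task is to check the explicit formulas for $g_l$ and $h_l$. I would verify them by direct substitution into the three-term recurrence $\a(y(k-1)+y(k+1))+\b y(k)=\mu_l y(k)$. The factor $(-1)^{j+1}$ (resp.\ $(-1)^j$) is what converts the ``natural'' eigenvalue $\b+2\a\cos(l\pi/N)$ into $\mu_l=\b-2\a\cos(l\pi/N)$: applying the sum-to-product formulas $\sin((k-2)\theta)+\sin(k\theta)=2\sin((k-1)\theta)\cos\theta$ and $\cos((k-2)\theta)+\cos(k\theta)=2\cos((k-1)\theta)\cos\theta$ with $\theta=l\pi/N$, the alternating sign produces a minus in front of $2\a\cos(l\pi/N)$. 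The boundary conditions $g_l(1)=0$ and $g_l(0)=\sqrt{2/N}\sin(l\pi/N)>0$ are immediate from the formula. The normalization $\sum_j g_l(j)^2=\sum_j h_l(j)^2=1$ and orthogonality $\langle h_l,g_l\rangle=0$ follow from the standard discrete Fourier identities
\[
\sum_{j=0}^{N-1}\sin^2(jl\pi/N)=\sum_{j=0}^{N-1}\cos^2(jl\pi/N)=\frac{N}{2},\qquad \sum_{j=0}^{N-1}\sin(jl\pi/N)\cos(jl\pi/N)=0,
\]
valid for $1\le l\le N-1$. Finally, extending $g_l,h_l$ to $j=N+1$ via the same formulas gives $g_l(N+1)=0$ and $h_l(N+1)=(-1)^{N+l+1}\sqrt{2/N}$, and a one-line computation of the Wronskian yields $W(h_l,g_l)(N)=(2/N)\sin(l\pi/N)>0$.

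\medskip

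I do not anticipate any serious obstacle: the whole lemma is essentially bookkeeping around discrete Fourier diagonalization. The only place that needs care is the matching of the lexicographic/increasing order of $\l_j$, $\mu_l$ with the underlying parameter $\cos(l\pi/N)$ and the sign of the alternating prefactor in $g_l,h_l$; once this is done consistently, all verifications reduce to standard trigonometric identities.
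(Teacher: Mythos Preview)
Your proposal is correct and complete; all steps reduce, as you say, to standard discrete Fourier identities and direct substitution into the three-term recurrence. Note that the paper itself does not supply a proof of this lemma---it is stated as a computational fact---so there is nothing to compare your argument against beyond observing that the intended verification is exactly the elementary diagonalization you carry out.
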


\begin{remark}
Recall that $(g_l(j))_{1 \leq j \leq N}$ is a vector in $\R^N$. The normalization condition $g_l(0)>0$ means that $g_l(j) = \nu_l y_1(j,\mu_l)$ for $1 \leq j \leq N$ with $\nu_l > 0$.
\end{remark}

\section{Action-angle variables} \label{review}

In this section we summarize the results obtained in \cite{ahtk1} which we will need in the sequel. First we have to introduce some more notation.

\emph{Riemann surface $\Sigma_{b,a}$:} Denote by $\Sigma_{b,a}$ the Riemann surface obtained as the compactification of the affine curve $\mathcal{C}_{b,a}$ defined by
\begin{equation} \label{algcurve}
\{ (\l,z) \in \C^2 : z^2 = \Delta^2_{\l} (b, a) - 4 \}.% = \{ (\l,z) \in \C^2 : z^2 = \alpha^{-2N} \prod_{n=1}^{2N} (\l - \l_n) \}.
\end{equation}
Note that $\mathcal{C}_{b,a}$ and $\Sigma_{b,a}$ are spectral invariants. (Strictly speaking, $\Sigma_{b,a}$ is a Riemann surface only if the spectrum of $Q(b,a)$ is simple - see e.g. Appendix A in \cite{teschl2} for details in this case. If the spectrum of $Q(b,a)$ is \emph{not} simple, $\Sigma(b,a)$ becomes a Riemann surface after doubling the multiple eigenvalues - see e.g. section $2$ of \cite{kato}.)

\emph{Dirichlet divisors:} To the Dirichlet eigenvalue $\mu_n$ ($1 \leq n \leq N-1$) we associate the point $\mu_n^{*}$ on the surface $\Sigma_{b,a}$,
\begin{equation} \label{munstarred}
\mu_n^* := \left( \mu_n, \sqrt[*]{\Delta^2_{\mu_n} - 4} \right) \; \textrm{with} \;\; \sqrt[*]{\Delta^2_{\mu_n} - 4} := y_1(N, \mu_n) - y_2(N+1, \mu_n),
\end{equation}
where we used that $\Delta^2_{\mu_n} - 4 = \left( y_1(N, \mu_n) - y_2(N+1, \mu_n) \right)^2$.

\emph{Standard root:} The standard root or $s$-root for short, $\sqrt[s]{1 - \l^2}$, is defined for $\l \in \C \setminus [-1,1]$ by
\begin{equation} \label{sroot}
  \sqrt[s]{1 - \l^2} := i \l \sqrt[+]{1 - \l^{-2}}.
\end{equation}
More generally, we define for $\l \in \C \setminus \{ ta+(1-t)b \, | \, 0 \leq t \leq 1 \}$ the $s$-root of a radicand of the form $(b - \l)(\l - a)$ with $a \prec b, a \neq b$ by
\begin{equation} \label{sroot2}
\sqrt[s]{(b - \l)(\l - a)} := \frac{\gamma}{2} \sqrt[s]{1 - w^2},
\end{equation}
where $\gamma := b-a$, $\tau := \frac{b+a}{2}$ and $w := \frac{\l - \tau}{\gamma/2}$.

\emph{Canonical sheet and canonical root:} For $(b,a) \in \M$ the canonical sheet of $\Sigma_{b,a}$ is given by the set of points $(\l, \sqrt[c]{\Delta_\l^2 - 4})$ in $\mathcal{C}_{b,a}$, where the $c$-root $\sqrt[c]{\Delta_\l^2 - 4}$ is defined on $\C \setminus \bigcup_{n=0}^{N} (\l_{2n}, \l_{2n+1})$ (where $\l_0 := -\infty$ and $\l_{2N+1} := \infty$) and determined by the sign condition
\begin{equation} \label{croot}
-i \sqrt[c]{\Delta_\l^2 - 4} > 0 \quad \textrm{for} \quad \l_{2N-1} < \l < \l_{2N}.
\end{equation}
As a consequence one has for any $1 \leq n \leq N$
\begin{equation} \label{croot2}
\textrm{sign} \; \sqrt[c]{\Delta_{\l - i0}^2 - 4} = (-1)^{N+n-1} \quad \textrm{for} \quad \l_{2n} < \l < \l_{2n+1}.
\end{equation}
The definition of the canonical sheet and the $c$-root can be extended to the neighborhood $\W$ of $\M$ in $\M^\C$ of Lemma \ref{wmungnlemma}.

\emph{Abelian differentials:} Let $(b,a) \in \M$ and $1 \leq n \leq N-1$. Then there exists a unique polynomial $\psi_n(\l)$ %Assume that for a given element $(b,a) \in \M$ the spectrum of $Q(b,a)$ is simple. Then
%the $N-1$ differentials $\frac{z^j dz}{\sqrt{\Delta^2_\l-4}}$ $(0 \leq j \leq N-2)$ are a basis of the space of holomorphic differentials on the surface $\Sigma_{b,a}$ (see e.g. \cite{fakr}). Hence
%$\Sigma_{b,a}$ admits a basis of holomorphic differentials of the form $\frac{\psi_n(\l)}{\sqrt{\Delta^2_\l-4}} d\l$ $(1 \leq n \leq N-1)$ where the $\psi_n(\l)$ are polynomials 
of degree at most $N-2$ such that for any $1 \leq k \leq N-1$
\begin{equation} \label{psi}
\frac{1}{2\pi} \int_{c_k} \frac{\psi_n(\l)}{\sqrt{\Delta^2_\l-4}} \, d\l = \delta_{kn}.
\end{equation}
Here, for any $1 \leq k \leq N-1$, $c_k$ denotes the lift of the contour $\Gamma_k$ to the canonical sheet of $\Sigma_{b,a}$. For any $k \neq n$ with $\l_{2k} \neq \l_{2k+1}$, it follows from (\ref{psi}) that
\begin{equation} \label{psiproperty}
\frac{1}{\pi} \int_{\l_{2k}}^{\l_{2k+1}} \frac{\psi_n(\l)}{\sqrt[+]{\Delta^2_\l - 4}} \, d\l = 0.
\end{equation}
Hence in every gap $(\l_{2k}, \l_{2k+1})$ with $k \neq n$ the polynomial $\psi_n$ has a zero which we denote by $\sigma_k^n$. %If the spectrum of $Q(b,a)$ is not simple, i.e. if $\gamma_k = 0$ for some $1 \leq k \leq N-1$, it can be shown (as explained e.g. in \cite{kato} for KdV) that (\ref{psi}) still holds, and 
If $\l_{2k} = \l_{2k+1}$ then it follows from (\ref{psi}) and Cauchy's theorem that $\sigma_k^n = \l_{2k} = \l_{2k+1}$. As $\psi_n(\l)$ is a polynomial of degree at most $N-2$, one has
\begin{equation} \label{psiprodrepr}
  \psi_n(\l) = M_n \prod_{1 \leq k \leq N-1 \atop k \neq n} (\l - \sigma_k^n),
\end{equation}
%with nonzero constants $(K_n(b,a))_{1 \leq n \leq N-1}$ and zeroes $(\sigma_k^n)_{1 \leq k \leq N-1, k \neq n}$ satisfying $\l_{2k} \leq \sigma_k^n \leq \l_{2k+1}$.
where $M_n \equiv M_n(b,a) \neq 0$. % (In fact, $M_n(b,a)$ can be computed to be $2 (-1)^N s_n \a^{-(N-1)}$.)

%In particular it follows that $(-1)^{N+n+1} \psi_n(\l) > 0$ on $[\l_{2n}, \l_{2n+1}]$.

In a straightforward way one can prove that there exists a neighborhood $\W$ of $\M$ in $\M_{\C}$, so that for any $(b,a) \in \W$ and any $1 \leq n \leq N-1$, there is a unique polynomial $\psi_n(\l)$ of degree at most $N-2$ satisfying (\ref{psi}) for any $1 \leq k \leq N-1$ as well as the product representation (\ref{psiprodrepr}), and so that the zeroes are analytic functions on $\W$.

%In this section, we define candidates for the Birkhoff coordinates on $\M$ and investigate their analyticity properties. 
We have seen in the introduction that there are two Casimir functions $C_1$ and $C_2$ for $J$, leading to the symplectic foliation of $\M$ with the leaves $\Mba$. In \cite{ahtk1} we defined global action variables $(I_n)_{1 \leq n \leq N-1}$ on $\M$ and, for any $1 \leq n \leq N-1$, the angle variable $\theta_n$ on $\M \setminus D_n$ where $D_n$ is given by (\ref{dndef}) and (\ref{dncdef}).
\begin{definition} \label{actionsdef}
Let $(b,a) \in \M$. For $1 \leq n \leq N-1$,
\begin{equation} \label{actsba}
I_n := \frac{1}{2\pi} \int_{\Gamma_n} \l \frac{\dot{\Delta}_\l}{\sqrt[c]{\Delta^2_\l-4}} \; d\l
\end{equation}
where $\dot{\Delta}_\l = \frac{d}{d\l} \Delta_\l$ is the $\l$-derivative of the discriminant $\Delta_\l = \Delta(\l, b, a)$ and the contour $\Gamma_n$ and the canonical root $\sqrt[c]{\cdot}$ are given as in section \ref{tools}.
\end{definition}

\begin{definition} \label{angledefinition}
For any $1 \leq n \leq N-1$, the function $\theta_n$ is defined for $(b,a) \in \M \setminus D_n$ by
\begin{equation} \label{angle1}
\theta_n := \eta_n + \beta_n \;\; (\textrm{mod} \; 2 \pi) \quad \textrm{and} \quad \beta_n := \sum_{n \neq k=1}^{N-1} \beta_k^n,
\end{equation}
where for $k \neq n$,
\begin{equation} \label{angle2}
\beta_k^n = \int_{\l_{2k}}^{\mu_k^*} \! \frac{\psi_n(\l)}{\sqrt{\Delta^2_\l-4}} \, d\l, \quad \eta_n = \int_{\l_{2n}}^{\mu_n^*} \! \frac{\psi_n(\l)}{\sqrt{\Delta^2_\l-4}} \, d\l \;\; (\textrm{mod} \; 2\pi).
\end{equation}
Here for any $1 \leq k \leq N-1$, $\mu_k^*$ is the Dirichlet divisor defined in (\ref{munstarred}), and $\l_{2k}$ is identified with the ramification point $(\l_{2k}, 0)$ on the Riemann surface $\Sigma_{b,a}$. The integration paths on $\Sigma_{b,a}$ in (\ref{angle2}) are required to be admissible in the sense that their image under the projection $\pi: \Sigma_{b,a} \to \C$ on the first component stays inside the isolating neighborhoods $U_k$.
\end{definition}

In \cite{ahtk1} we proved the following results.

\begin{theorem} \label{analytic}
  \begin{itemize}
  \item[(A)] There exists a complex neighborhood $\W$ of $\M$ in $\M^\C$ with the following properties:
\begin{itemize}
\item[(i)] For any $1 \leq n \leq N-1$, the functions $I_n: \W \to \C$, $\theta_n: \W \setminus D_n \to \C \, (\textrm{mod } \pi)$, and $\b_n: \W \to \C$ are analytic.
\item[(ii)] On the real space $\M$, each function $I_n$ is real-valued and nonnegative. It vanishes at a point $(b,a) \in \M$ if and only if the $n$-th gap is collapsed, i.e. if $\gamma_n(b,a) = 0$. Moreover $\b_n(b,a) = 0$ for any $(b,a) \in \M$ with $\gamma_n(b,a) = 0$.
\item[(iii)] For any $1 \leq n \leq N-1$, the quotient $I_n / \gamma_n^2$ extends analytically from $\M \setminus D_n$ to all of $\W$ and has strictly positive real part on $\W$. As a consequence, $\xi_n = \sqrt[+]{2 I_n / \gamma_n^2}$ is a well-defined, analytic and nonvanishing function on $\W$, where$\sqrt[+]\cdot$ is the principal branch of the square root on $\C \setminus (-\infty, 0]$.
\end{itemize}

\item[(B)] The variables $I_n$ and $\theta_n$, $1 \leq n \leq N-1$, are globally defined action-angle variables for the periodic Toda lattice. More precisely:
\begin{itemize}
\item[(iv)] The functions $(I_n)_{1 \leq n \leq N-1}$ are pairwise in involution and Poisson commute with the Toda Hamiltonian $H$, i.e. for any $1 \leq m,n \leq N-1$, $i = 1,2$,
  \begin{displaymath}
    \{ I_m, I_n \}_J = 0, \quad \{ H, I_n \}_J = 0 \quad \textrm{and} \quad \{ C_i, I_n \}_J = 0 \quad \textrm{on } \W.
  \end{displaymath}
\item[(v)] The functions $\theta_n: \W \setminus D_n \to \R$, $1 \leq n \leq N-1$, are conjugate to the variables $(I_m)_{1 \leq m \leq N-1}$, i.e. for any $1 \leq m \leq N-1$, $j=1,2$,
  \begin{displaymath}
    \{ I_m, \theta_n \}_J = \delta_{mn} \quad \textrm{and} \quad \{ C_i, \theta_n \}_J = 0 \quad \textrm{on } \W \setminus D_n
  \end{displaymath}
and
\begin{displaymath}
  \{ \theta_m, \theta_n \}_J = 0 \quad \textrm{on } \W \setminus (D_m \cup D_n).
\end{displaymath}
\end{itemize}
  \end{itemize}
\end{theorem}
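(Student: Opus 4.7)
The plan is to establish the theorem piece by piece, exploiting the contour-integral representations of $I_n$ and $\theta_n$ together with the Lax pair structure of the periodic Toda lattice. After fixing the neighborhood $\W$ of Lemma \ref{wmungnlemma}, shrunk if necessary so that the isolating neighborhoods $U_k$ of $G_k$ remain mutually disjoint throughout $\W$ and so that the product representation (\ref{psiprodrepr}) persists with analytic zeroes $\sigma_k^n$, I would first address analyticity (i). The contour $\Gamma_n$ can be chosen uniformly in $(b,a) \in \W$ as a circuit inside $U_n$ at a positive distance from $G_n$. Since $\Delta_\l$ depends analytically on $(b,a)$ and the $c$-root is single-valued and nonzero on $\Gamma_n$, the formula (\ref{actsba}) is manifestly analytic; analyticity of $\b_n$ and $\theta_n$ on $\W$ and $\W \setminus D_n$ follows similarly from (\ref{angle1})--(\ref{angle2}) once one verifies that admissible paths can be chosen continuously.

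For (ii), I would restrict to real $(b,a) \in \M$ and deform $\Gamma_n$ to a tight keyhole around $[\l_{2n}, \l_{2n+1}]$. The sign condition (\ref{croot2}) implies that the two edges of the keyhole contribute opposite signs of the square root, collapsing the contour integral to twice a definite real integral along the gap. An integration by parts using $\dot\Delta_\l \, d\l = d\Delta_\l$ together with the boundary values (\ref{deltalambdapm2}) produces a nonnegative integrand, giving $I_n \geq 0$ with equality precisely when $\gamma_n = 0$; the vanishing of $\b_n$ at collapsed gaps is deduced from (\ref{psiproperty}) and the fact that $\psi_n$ has a zero inside every other gap. For (iii), I would factor $\sqrt[c]{\Delta_\l^2 - 4} = \sqrt[s]{(\l - \l_{2n})(\l_{2n+1} - \l)} \cdot R_n(\l)$ on $U_n$, where $R_n$ is analytic and nowhere vanishing after shrinking $\W$, and substitute $\l = \tau_n + (\gamma_n/2) w$ in (\ref{actsba}). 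The resulting integral displays $\gamma_n^2$ as an explicit factor and extends analytically across $D_n$; strict positivity of the real part of $I_n/\gamma_n^2$ is first checked at the equilibrium $(\b 1_N, \a 1_N)$ using Lemma \ref{speclemma} and then propagated to all of $\W$ by a connectedness-and-openness argument.

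The involutivity statements (iv) are standard once one observes that $I_n$ is a spectral invariant: the Lax equation $\dot L = [B,L]$ implies that any two smooth functions of the periodic and antiperiodic eigenvalues of $L$ Poisson commute with respect to $\{ \cdot , \cdot \}_J$, so in particular $\{ I_m, I_n \}_J = \{ H, I_n \}_J = 0$. The identities $\{ C_i, I_n \}_J = 0$ are immediate since $C_1, C_2$ are Casimirs.

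The canonical angle relations (v) constitute the main obstacle. To prove $\{ I_m, \theta_n \}_J = \delta_{mn}$, I would express $\nba I_m$ via a trace formula in terms of $\psi_m$ and the gradients of the periodic eigenvalues, and then evaluate $\{ I_m, \theta_n \}_J$ by differentiating (\ref{angle1})--(\ref{angle2}) along the Hamiltonian flow of $I_m$. The crucial ingredient is the normalization (\ref{psi}) of $\psi_n$: it is precisely the bilinear relation forcing the flow of $I_m$ to move the Dirichlet divisor $\mu_m^*$ along the cycle $c_m$ at unit rate and to leave the other divisors fixed modulo periods. The identities $\{ C_i, \theta_n \}_J = 0$ follow from the fact that the Hamiltonian flows of $C_1$ and $C_2$ are a uniform translation in $b$ and a uniform scaling in $a$, both of which fix the periodic spectrum and the Dirichlet divisors up to a trivial shift absorbed into the mod $2\pi$ ambiguity. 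Finally $\{ \theta_m, \theta_n \}_J = 0$ is obtained either by a direct computation paralleling that for $\{ I_m, \theta_n \}_J$ via the bilinear Riemann relations, or by deducing from the Jacobi identity that $\{ \theta_m, \theta_n \}_J$ Poisson-commutes with every action and every Casimir, hence descends to a function of the $I_k$ and $C_i$ alone, which one then pins down to zero by a symmetry check or by analyzing the limit as all gaps collapse.
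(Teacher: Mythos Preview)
The first thing to note is that the present paper does \emph{not} prove Theorem~\ref{analytic}: it is quoted verbatim from the companion paper \cite{ahtk1} (see the sentence ``In \cite{ahtk1} we proved the following results'' preceding the statement). So there is no ``paper's own proof'' here to compare against; the actual proof lives in \cite{ahtk1} and is substantially longer than your sketch suggests.

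That said, your outline for part (A) is broadly on the right track: contour integrals with locally uniform contours give analyticity, the keyhole deformation plus the sign condition (\ref{croot2}) yields nonnegativity of $I_n$, and the substitution $\l=\tau_n+(\gamma_n/2)w$ is indeed how one exhibits the $\gamma_n^2$ factor in $I_n$. These are the standard moves and match what \cite{ahtk1} does.

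There are, however, two genuine issues in part (B). First, your argument for (iv) is not correct as written. The Lax equation $\dot L=[B,L]$ tells you only that the spectrum is conserved along the flow of the \emph{Toda} Hamiltonian $H$, i.e.\ that $\{H,I_n\}_J=0$; it does \emph{not} by itself imply that two arbitrary spectral invariants Poisson commute. To get $\{I_m,I_n\}_J=0$ one needs an additional ingredient --- either the full Toda hierarchy (each $I_m$ generates an isospectral flow with its own Lax pair), the $r$-matrix structure, or the fact that the $I_n$ are functions of the traces $\mathrm{tr}\,L^k$, which are already known to be in involution. You should name which of these you are invoking. Second, your justification of $\{C_i,\theta_n\}_J=0$ is confused: $C_1$ and $C_2$ are \emph{Casimirs} of $J$, so their Hamiltonian vector fields vanish identically and $\{C_i,F\}_J=0$ for every $F$; there is nothing to check about translations or scalings.

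Finally, your treatment of (v) --- the relations $\{I_m,\theta_n\}_J=\delta_{mn}$ and $\{\theta_m,\theta_n\}_J=0$ --- is where the real work lies, and your paragraph is too schematic to count as a proof plan. In \cite{ahtk1} this is done via explicit gradient formulas for $I_m$ and $\theta_n$ (expressed through the products $v\cds w$ of eigenfunctions, cf.\ (\ref{vector2n})) and a residue/bilinear-relation computation; the normalization (\ref{psi}) does enter exactly as you say, but turning ``the flow of $I_m$ moves $\mu_m^*$ at unit rate'' into an honest identity requires computing $\{I_m,\mu_k\}_J$ and $\{I_m,\sqrt[*]{\Delta^2_{\mu_k}-4}\}_J$ explicitly, which you have not indicated how to do.
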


\section{Birkhoff map} \label{birkhoff}

In this section, we construct the map $\Omega$, $\Omega = ((\Omega_n)_{1 \leq n \leq N-1}, C_1, C_2)$, defined on $\W$. For any $1 \leq n \leq N-1$ and $(b,a) \in \W \setminus D_n$ the $n$-th component of $\Omega$, $\Omega_n = (x_n,y_n)$, is defined to be
\begin{displaymath}
(x_n,y_n) = \sqrt[+]{2 I_n} (\cos \theta_n, \sin \theta_n).
\end{displaymath}
%corresponding to $I_n$, $\theta_n$.

%For any $1 \leq n \leq N-1$ and $(b,a) \in \M \setminus D_n$, we define the Cartesian coordinates by $x_n := \sqrt[+]{2 I_n} \cos \theta_n$ and $y_n := \sqrt[+]{2 I_n} \sin \theta_n$. 
In order to extend $(x_n, y_n)$ to all of $\W$, we substitute the formula $2 I_n = \gamma_n^2 \xi_n^2$ of Theorem \ref{analytic} into the definition of $(x_n, y_n)$. Hence, for $(b,a) \in \W \setminus D_n$,
\begin{displaymath}
\left\{ \begin{array}{ccccc} x_n & = & \xi_n \gamma_n \cos \theta_n & = & \xi_n \gamma_n \frac{e^{i\theta_n} + e^{-i\theta_n}}{2} \\
y_n & = & \xi_n \gamma_n \sin \theta_n & = & \xi_n \gamma_n \frac{e^{i\theta_n} - e^{-i\theta_n}}{2i}.
\end{array} \right.
\end{displaymath}

%Recall that $\theta_n = \eta_n + \b_n$, where $\b_n = \sum_{n \neq k=1}^{N-1} \b_k^n$. It is not hard to see that $\b_n$ is defined and analytic on all of $M$, including $D_n$. Thus it remains to analyze
By Theorem \ref{analytic}, $\b_n$ - and therefore $e^{\pm i \b_n}$ - as well as $\xi_n$ are analytic on $\W$. Thus it remains to analytically extend the functions
\begin{equation} \label{zndef}
z_n^{\pm} := \gamma_n e^{\pm i \eta_n}
\end{equation}
to $\W$. Note that at this point, $z_n^\pm$ is defined on $\W \setminus D_n$ only. The following result is proved in section \ref{proofznpm} below.

\begin{prop} \label{znpm}
The functions $z_n^{\pm}$ extend analytically to $\W$. On $\M \cap D_n$, $z_n^{\pm} = 0$.
\end{prop}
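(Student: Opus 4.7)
The plan is to obtain an explicit formula for $z_n^\pm = \gamma_n e^{\pm i\eta_n}$ by reducing the Abelian integral defining $\eta_n$ to a standard form on the genus-zero surface $\{z^2 = 1 - w^2\}$ and observing that the factor $\gamma_n$ exactly cancels the residual singular behavior at $D_n$. First I collect the relevant analytic local data. The symmetric functions $s_n := \l_{2n} + \l_{2n+1}$, $p_n := \l_{2n}\l_{2n+1}$, and $\gamma_n^2 = s_n^2 - 4p_n$ are analytic on $\W$ by residue formulas on $\Gamma_n$; $\mu_n$ is the unique zero of $y_1(N+1,\cdot)$ in the isolating neighborhood $U_n$, hence (shrinking $\W$ if needed) analytic on $\W$. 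Factor $\Delta_\l^2 - 4 = P_n(\l)R_n(\l)$ with $P_n(\l) = \l^2 - s_n\l + p_n$ and $R_n$ a polynomial whose coefficients are analytic on $\W$ and whose zeros lie outside a neighborhood of $G_n$, and fix a holomorphic branch of $\sqrt{R_n(\l)}$ there. The substitution $\l = \tau_n + (\gamma_n/2)w$ with $\tau_n = s_n/2$ sends $\l_{2n}$ to $-1$ and $\mu_n$ to $w_\mu^* := (2\mu_n - s_n)/\gamma_n$ (suitably lifted), and transforms the differential to
$$ \frac{\psi_n(\l)\,d\l}{\sqrt{\Delta_\l^2 - 4}} \;=\; \frac{F_n(w;b,a)}{\sqrt[s]{1-w^2}}\,dw, \qquad F_n(w;b,a) := \frac{\psi_n(\tau_n + (\gamma_n/2)w)}{i\sqrt{R_n(\tau_n + (\gamma_n/2)w)}}, $$
with $F_n$ analytic in $w$ near $[-1,1]$ and in $(b,a) \in \W$.

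Expand $F_n(w;b,a) = \sum_k c_k(b,a)(\gamma_n/2)^k w^k$ and integrate term by term using
$$ \int_{-1}^{w_\mu^*}\frac{w^k\,dw}{\sqrt[s]{1-w^2}} \;=\; A_k\!\left(\arcsin w_\mu^* + \frac{\pi}{2}\right) + Q_k(w_\mu^*)\sqrt[s]{1-(w_\mu^*)^2}, $$
where $A_0 = 1$, $A_k = 0$ for $k$ odd (and $A_{2l}$ an explicit positive rational), and $Q_k$ is a polynomial of degree $k-1$ whose nonzero monomials $w^j$ satisfy $j \equiv k-1 \pmod{2}$; this parity structure follows by induction from the reduction $\int w^k/\sqrt{1-w^2}\,dw = -w^{k-1}\sqrt{1-w^2}/k + \frac{k-1}{k}\int w^{k-2}/\sqrt{1-w^2}\,dw$. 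The normalization $\frac{1}{2\pi}\int_{c_n}\psi_n(\l)/\sqrt{\Delta_\l^2-4}\,d\l = 1$ translates in the $w$-variable to $\sum_k c_k(\gamma_n/2)^k A_k = 1$, so the coefficient of $\arcsin(w_\mu^*)$ in $\eta_n$ is exactly $1$, yielding
$$ \eta_n \;=\; \arcsin(w_\mu^*) + \frac{\pi}{2} + h_n\!\bigl(w_\mu^*,\sqrt[s]{1-(w_\mu^*)^2};b,a\bigr), $$
where $h_n$ is a convergent sum of monomials $c_k(\gamma_n/2)^k(w_\mu^*)^j\sqrt[s]{1-(w_\mu^*)^2}$ with $j \equiv k-1 \pmod{2}$.

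Two algebraic identities finish the argument. First, $\gamma_n w_\mu^* = 2\mu_n - s_n$ is analytic on $\W$. Second, $\gamma_n\sqrt[s]{1-(w_\mu^*)^2} = 2\sqrt[*]{-P_n(\mu_n)}$ is analytic, because $P_n(\mu_n)R_n(\mu_n) = \Delta_{\mu_n}^2 - 4 = (y_1(N,\mu_n) - y_2(N+1,\mu_n))^2$ by (\ref{munstarred}) gives $\gamma_n\sqrt[s]{1-(w_\mu^*)^2} = \pm 2i(y_1(N,\mu_n) - y_2(N+1,\mu_n))/\sqrt{R_n(\mu_n)}$, with sign fixed by the lift $\mu_n^*$. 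The parity condition in $h_n$ lets each monomial be rewritten as
$$ \gamma_n^k(w_\mu^*)^j\sqrt[s]{1-(w_\mu^*)^2} \;=\; (\gamma_n^2)^{(k-j-1)/2}(2\mu_n - s_n)^j \cdot 2\sqrt[*]{-P_n(\mu_n)}, $$
which is analytic on $\W$, so $h_n$ is analytic. Exponentiating via $e^{i(\arcsin w + \pi/2)} = -w + i\sqrt[s]{1-w^2}$ yields
$$ z_n^+ \;=\; \gamma_n\,e^{i\eta_n} \;=\; \bigl(-(2\mu_n - s_n) + 2i\sqrt[*]{-P_n(\mu_n)}\bigr)\,e^{ih_n}, $$
analytic on $\W$ (with $z_n^-$ analogous, opposite sign on the square root). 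On $\M \cap D_n$, $\gamma_n = 0$ forces $\mu_n = \l_{2n} = \l_{2n+1}$, so $2\mu_n - s_n = 0$ and $P_n(\mu_n) = 0$; the prefactor vanishes and $z_n^\pm = 0$ as required. The main technical obstacle is the bookkeeping of branches and signs — identifying the correct sheet for $\mu_n^*$ and matching the conventions of $\sqrt[c]{\cdot}$, $\sqrt[s]{\cdot}$, and $\sqrt[*]{\cdot}$ so the formulas hold on the nose — but once these are aligned, no serious analytical difficulty remains, the whole argument reducing to the integration of an explicit analytic $1$-form on a genus-zero Riemann surface.
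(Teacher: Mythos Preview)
Your argument is correct in outline, and once the branch conventions for $\sqrt{R_n}$, $\sqrt[s]{\cdot}$ and $\sqrt[*]{\cdot}$ are pinned down (as you acknowledge), the explicit formula $z_n^+ = \bigl(-(2\mu_n - s_n) + 2i\sqrt[*]{-P_n(\mu_n)}\bigr)e^{ih_n}$ does furnish an analytic extension. The uniform convergence of the rearranged series for $h_n$ on compacta of $\W$ (including $D_n$) follows from the bound $|c_k| \leq C/r^k$ with $r$ the distance from $\tau_n$ to the other eigenvalues, together with $|\gamma_n/2|,\,|\mu_n-\tau_n| < r$ after shrinking $U_n$; you should make this estimate explicit, since ``$F_n$ analytic near $[-1,1]$'' understates the domain actually needed when $|w_\mu^*|$ is large.

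The paper takes a genuinely different route. Both arguments make the substitution $\l = \tau_n + (\gamma_n/2)w$ and exploit the identity $e^{i\phi(w)} = -w + i\sqrt[s]{1-w^2}$, but the paper does \emph{not} produce a closed formula valid on all of $\W$. Instead it splits $\eta_n = \eta_n^{(1)} + \eta_n^{(2)}$ by subtracting the constant $\zeta_n(\l_{2n})$ (rather than by your parity decomposition), computes only the \emph{limit} of $\gamma_n e^{\pm i\eta_n}$ as $(b,a)\to D_n\cap X_n$ (where $X_n = \{\mu_n\notin G_n\}$ is a complex open set disjoint from $\M$), extends $z_n^\pm$ to $D_n$ by this limit on $D_n\cap X_n$ and by $0$ on $D_n\setminus X_n$, verifies continuity, and then invokes an abstract extension theorem (Theorem~A.6 of \cite{kapo}: continuous on $\W$, analytic off $D_n$, weakly analytic on $D_n$ $\Rightarrow$ analytic). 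Your approach is more elementary in that it avoids this external result and yields an explicit formula; the paper's approach avoids the series bookkeeping and the parity combinatorics, and makes the vanishing on $\M\cap D_n$ immediate without any computation.
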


\begin{definition} \label{coordgendef}
For $(b,a) \in \W$ and $1 \leq n \leq N-1$,
\begin{equation} \label{cartesianfinal}
\left\{ \begin{array}{ccc}
x_n & := & \frac{\xi_n}{2} (z_n^ + e^{i\b_n} + z_n^- e^{-i\b_n}) \\
y_n & := & \frac{\xi_n}{2i} (z_n^ + e^{i\b_n} - z_n^- e^{-i\b_n})
\end{array} \right.
\end{equation}
\end{definition}

For any $1 \leq n \leq N-1$, it follows from (\ref{cartesianfinal}) that% $x_n + iy_n = \xi_n z_n^+ e^{i \b_n}$.
\begin{displaymath}
x_n \pm iy_n = \xi_n z_n^\pm e^{i \b_n}.
\end{displaymath}

Now we are ready to define the coordinate map $\Omega$. As the Casimir function $C_2$ takes only positive values, we introduce as target space of $\Omega$ the model space
$$ \P := \R^{2(N-1)} \times \R \times \R_{>0} $$
and define
%\begin{definition}
%The map $\Omega$ is defined by
\begin{equation} \label{coordinates}
\begin{array}{ccll}
 \Omega: & \M & \to & \P \\
 & (b,a) & \mapsto & ((x_n, y_n)_{1 \leq n \leq N-1}, C_1, C_2).
\end{array}
\end{equation}
%\end{definition}

In view of Theorem \ref{analytic} and Proposition \ref{znpm} we then have proved
\begin{theorem} \label{coordanaly}
The map $\Omega: \M \to \P$ is real analytic. It extends analytically to the complex neighborhood $\W$ of $\M$ in $\M^\C$ of Theorem \ref{analytic}.
\end{theorem}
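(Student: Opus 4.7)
The plan is to verify analyticity of $\Omega$ component by component, using the analytic ingredients already assembled in Theorem \ref{analytic} and Proposition \ref{znpm}. Write $\Omega = ((x_n, y_n)_{1 \leq n \leq N-1}, C_1, C_2)$ and treat the Casimirs and the Birkhoff pairs separately.

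For the Casimirs, $C_1 = -\frac{1}{N}\sum_{n=1}^N b_n$ is a linear polynomial in $(b,a)$, hence entire on $\M^\C$ and real on $\M$. For $C_2 = (\prod_{n=1}^N a_n)^{1/N}$, the product $\prod a_n$ is a polynomial, and on $\M$ it takes values in $\R_{>0}$; by continuity I shrink the complex neighborhood $\W$ once more if needed so that $\prod_{n=1}^N a_n$ stays in the slit plane $\C \setminus (-\infty, 0]$ for every $(b,a) \in \W$. Composing with the principal branch of the $N$-th root then gives analyticity of $C_2$ on $\W$, with $C_2(b,a) > 0$ for $(b,a) \in \M$.

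For each pair $(x_n, y_n)$ with $1 \leq n \leq N-1$, I invoke the defining formula (\ref{cartesianfinal}):
\[
x_n = \frac{\xi_n}{2}\bigl(z_n^+ e^{i\b_n} + z_n^- e^{-i\b_n}\bigr), \qquad y_n = \frac{\xi_n}{2i}\bigl(z_n^+ e^{i\b_n} - z_n^- e^{-i\b_n}\bigr).
\]
By Theorem \ref{analytic}(i), $\b_n$ is analytic on $\W$, hence so are $e^{\pm i \b_n}$; by Theorem \ref{analytic}(iii), $\xi_n$ is analytic on $\W$; and by Proposition \ref{znpm}, the functions $z_n^\pm$, which a priori were only defined on $\W \setminus D_n$, extend analytically to all of $\W$. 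Since sums, products, and compositions with the exponential preserve analyticity, $x_n$ and $y_n$ are analytic on $\W$.

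It remains to confirm that $\Omega$ maps $\M$ into $\P = \R^{2(N-1)} \times \R \times \R_{>0}$. The Casimir part was handled above. On $\M \setminus D_n$ the formula (\ref{cartesianfinal}) agrees by construction with $\sqrt{2I_n}(\cos\theta_n, \sin\theta_n) \in \R^2$, since $I_n \geq 0$ by Theorem \ref{analytic}(ii) and $\theta_n$ is real there. On $\M \cap D_n$ one has $\gamma_n = 0$, so Proposition \ref{znpm} forces $z_n^\pm = 0$ and thus $(x_n, y_n) = (0,0)$. Together with the fact that $\xi_n$, $e^{\pm i \b_n}$, and $z_n^\pm$ are real analytic restrictions of the complex analytic extensions, this yields that $\Omega|_{\M}$ is real analytic into $\P$. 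The only nontrivial analytic continuation (across the codimension-two loci $D_n$, where the naive definition via $\theta_n$ breaks down) is already packaged in Proposition \ref{znpm}, so once that result is in hand the present theorem is a direct assembly.
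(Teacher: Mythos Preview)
Your argument is correct and matches the paper's approach: the paper simply states ``In view of Theorem~\ref{analytic} and Proposition~\ref{znpm} we then have proved'' the theorem, relying on precisely the decomposition in (\ref{cartesianfinal}) and the analyticity of $\xi_n$, $e^{\pm i\beta_n}$, and $z_n^\pm$ that you cite. Your additional care with the analytic extension of $C_2$ (shrinking $\W$ so that $\prod a_n$ avoids the cut) and with verifying that $\Omega(\M) \subset \P$ fills in details the paper leaves implicit, but the route is the same.
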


To compute the differential of the map $\Omega$, we first compute for any $1 \leq n \leq N-1$ the gradient $\nba z_n^+$ of $z_n^+$ for $1 \leq n \leq N-1$. Let us first recall some notation introduced in \cite{ahtk1}. For sequences $\big( v(j)_{j \in \Z} \big), \big( w(j)_{j \in \Z} \big) \subseteq \C$ define the $N$-vectors
\begin{eqnarray}
v \cdot w & := & \big( v(j) w(j) \big)_{1 \leq j \leq N}, \label{nprod1}\\
v \cdot S  w & := & \big( v(j) w(j+1) \big)_{1 \leq j \leq N}, \label{nprod2}
\end{eqnarray}
where $S$ denotes the shift operator of order $1$. % For $v = w$ we write
%\begin{displaymath}
%v^2 := v \cdot v.
%\end{displaymath}
Further define the $2N$-vector
\begin{equation} \label{vector2n}
v \cds w := (v \cdot w, v \cdot Sw + w \cdot Sv).
\end{equation}
In case $v = w$ we also use the shorter notation
\begin{equation} \label{vector2nvv}
v^\mbf2 := v \cds v.
\end{equation}
Written componentwise, $v \cds w$ is the $2N$-vector
\setlength\arraycolsep{0.1pt} {
\begin{displaymath} \!\!\!\!\!\ (v \cds w)(j) \! = \!\!
\left\{ \begin{array}{cc}
v(j) w(j) & (1 \leq j \leq N) \\
v(j-N) w(j-N+1) + v(j-N+1) w(j-N) & (N \! < \! j \leq \! 2N) \\
\end{array} \right..
\end{displaymath}}

The following proposition will be proved in section \ref{proofdbazn}.
\begin{prop} \label{dbaznnew}
At any point $(b,a) \in \M \cap D_n$, the gradient $\nba z_n^+$ is given by
\begin{equation} \label{dbaznformulaapp}
\nba z_n^+ \equiv (\n_b z_n^+, \n_a z_n^+) = (h_n - i g_n)^\mbf2, %\big( (h_n + i g_n)^2, 2 (h_n + i g_n) \cdot S(h_n + i g_n) \big),
\end{equation}
where $g_n$ and $h_n$ are defined by (\ref{gkformula}) and (\ref{hkformula}), respectively.
\end{prop}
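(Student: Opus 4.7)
The plan is to compute $\nba z_n^+$ at a point $(b^0, a^0) \in \M\cap D_n$ by reading off the first-order Taylor coefficient of $z_n^+$ along smooth curves $(b(t),a(t)) = (b^0,a^0)+t(\delta b,\delta a)$. By Proposition \ref{znpm}, $z_n^+$ vanishes identically on $\M\cap D_n$, so its gradient is automatically normal to $D_n$, and it suffices to verify (\ref{dbaznformulaapp}) on transverse directions $\delta := (\delta b, \delta a)$. Set $\mu := \l_{2n}(b^0,a^0) = \l_{2n+1}(b^0,a^0) = \mu_n(b^0,a^0)$ (the collapsed triple), and let $g_n, h_n$ denote the orthonormal basis of the two-dimensional eigenspace at $\mu$, normalized as in Lemma \ref{speclemma}: $g_n$ is the Dirichlet solution ($g_n(1)=0$, $g_n(0)>0$) and $h_n$ is orthogonal to $g_n$ with $W(h_n,g_n)(N)>0$.

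The computation then combines three ingredients. First, standard degenerate first-order perturbation theory at the double eigenvalue $\mu$ expresses the splitting of $\l_{2n}(t),\l_{2n+1}(t)$ via the $2\times 2$ matrix representing the infinitesimal perturbation $\delta L$ in the basis $\{g_n,h_n\}$, whose entries are precisely $\langle g_n^\mbf2,\delta\rangle$, $\langle h_n^\mbf2,\delta\rangle$ and $\langle g_n\cds h_n,\delta\rangle$. Second, since the Dirichlet eigenvector at $\mu$ is $g_n$, the Hellmann--Feynman formula gives $\mu_n(t)=\mu + t\langle g_n^\mbf2,\delta\rangle + O(t^2)$. Third, and most delicately, I expand $z_n^+$ near $D_n$ in terms of these spectral data: after the substitution $\l = \tau_n + (\gamma_n/2)w$ (with $\tau_n = (\l_{2n}+\l_{2n+1})/2$) inside the integral (\ref{angle2}) defining $\eta_n$, the product $\gamma_n e^{i\eta_n}$ reorganizes so that both factors extend analytically across $D_n$, yielding the leading-order form
\[
z_n^+ = -2(\mu_n - \tau_n) + 2i\sqrt{\gamma_n^2/4 - (\mu_n - \tau_n)^2} + O(\|\delta\|^2),
\]
where the branch of the square root is fixed by the canonical-sheet conventions (\ref{croot})--(\ref{croot2}) and the starred lift (\ref{munstarred}). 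Inserting $(\mu_n-\tau_n)(t) = (t/2)\langle g_n^\mbf2 - h_n^\mbf2,\delta\rangle + O(t^2)$ and $\gamma_n^2(t) = t^2\bigl[\langle g_n^\mbf2 - h_n^\mbf2,\delta\rangle^2 + 4\langle g_n\cds h_n,\delta\rangle^2\bigr] + O(t^3)$ collapses the radicand to $t^2\langle g_n\cds h_n,\delta\rangle^2 + O(t^3)$, so the analytic square root contributes $\pm t\langle g_n\cds h_n,\delta\rangle$. Collecting terms gives $z_n^+(b(t),a(t)) = t\langle h_n^\mbf2 - g_n^\mbf2 - 2i(g_n\cds h_n),\delta\rangle + O(t^2)$, which equals $t\langle (h_n-ig_n)^\mbf2,\delta\rangle + O(t^2)$ by the identity $(h_n-ig_n)^\mbf2 = h_n^\mbf2 - g_n^\mbf2 - 2i(g_n\cds h_n)$.

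The main obstacle will be pinning down the sign in front of $i$, since a priori the analytic square root could equally produce $(h_n+ig_n)^\mbf2$. The cleanest way to fix it is to verify the formula at the equilibrium point $(b^0,a^0) = (\b 1_N,\a 1_N)$, where $g_n$ and $h_n$ are given by the explicit expressions (\ref{gkformula})--(\ref{hkformula}) and both sides of (\ref{dbaznformulaapp}) are amenable to direct computation, and then propagate the correct sign to the rest of $\M\cap D_n$ by continuity. An alternative is to trace the canonical-sheet convention (\ref{croot}) together with the orientation of $\Gamma_n$ through the integral (\ref{angle2}): these jointly determine on which sheet $\mu_n^*$ sits and thereby force $-i$ rather than $+i$.
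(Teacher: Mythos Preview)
Your outline is close to the paper's: both reduce to the leading-order identity $z_n^+ \approx -2(\mu_n-\tau_n) + i\gamma_n\sqrt{1-\rho_n^2}$ (times analytic factors equal to $1$ on $D_n$), then compute the gradient of each piece. Your degenerate perturbation calculation correctly gives $2(\nba\tau_n-\nba\mu_n)=h_n^\mbf2-g_n^\mbf2$ and forces the imaginary contribution to be $\pm 2\,g_n\cds h_n$.

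The gap is the sign, and neither of your two suggested fixes is complete. Checking at the equilibrium is not ``direct'': at $(\beta 1_N,\alpha 1_N)$ you still have to determine which sheet $\mu_n^*$ lands on along your approach curve, i.e.\ the sign of $y_1(N,\mu_n)-y_2(N+1,\mu_n)$ to first order, and that is exactly the sheet-tracing you defer. Propagating by continuity then needs $\M\cap D_n$ connected, which you neither state nor prove (and which is not obvious a priori without already having the Birkhoff map).

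The paper resolves the sign \emph{pointwise} on $D_n$, avoiding connectivity altogether. Instead of a generic transverse curve it approaches $(b,a)\in\M\cap D_n$ through the special set
\[
B_n=\Bigl\{(b',a')\in\M\setminus D_n:\ \mu_n=\tau_n,\ \ \mathrm{sign}\sqrt[*]{\Delta^2_{\mu_n}-4}=(-1)^{N+n+1}\Bigr\},
\]
on which the imaginary part of the leading factor becomes exactly $\gamma_n$, so its gradient is $f_{2n+1}^\mbf2-f_{2n}^\mbf2$ with $f_i$ the normalized periodic eigenvectors. The real work is Lemma~\ref{bmamlimit}: a Wronskian identity gives $(\l_{2n+1}-\mu_n)\langle f_{2n+1},g_n\rangle = a_N f_{2n+1}(1)\bigl((-1)^{n+N}g_n(N)-g_n(0)\bigr)$ (and similarly for $f_{2n}$), and an asymptotic analysis of $\Delta$ near the double root shows the relevant limiting ratio is strictly negative. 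This yields $\langle f_{2n+1},g_n\rangle<0<\langle f_{2n},g_n\rangle$ and $f_{2n}(1)\langle f_{2n+1},g_n\rangle=-f_{2n+1}(1)\langle f_{2n},g_n\rangle$, which together force $f_{2n+1}=(h_n-g_n)/\sqrt2$, $f_{2n}=(h_n+g_n)/\sqrt2$ and hence $f_{2n+1}^\mbf2-f_{2n}^\mbf2=-2\,g_n\cds h_n$. That is the sign argument missing from your proposal; once supplied, your computation and the paper's coincide.
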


It is convenient to introduce the complex version of $\Omega$,
\begin{equation} \label{gbcdef}
\begin{array}{ccll}
 \Omega^\C: & (\M, J) & \to & (\C^{N-1} \times \R \times\R_{>0}, J_0) \\
 & (b,a) & \mapsto & ((x_n + i y_n)_{1 \leq n \leq N-1}, C_1, C_2),
\end{array}
\end{equation}
and the abbreviations $s_n := \sin \frac{n \pi}{N}$ for $1 \leq n \leq N-1$. Moreover, for the rest of this section we write $\sqrt{\cdot}$ instead of $\sqrt[+]{\cdot}$ for the principal branch of the square root function, defined on $\C \setminus (-\infty, 0]$. 
\begin{prop} \label{birkdiff}
Let $\b \in \R$ and $\a > 0$. The gradient $\nba \Omega^\C$ of $\Omega^\C$ at $(b, a) = (\b 1_N, \a 1_N)$ is given by ($1 \leq n \leq N-1$)
\begin{equation} \label{gradxnyn}
(\nba x_n + i \, \nba y_n)(j) = \frac{1}{\sqrt{2 \a N}} \cdot \frac{1}{\sqrt{s_n}} \left\{  \begin{array}{ll}
e^\frac{(2j-2) i \pi n}{N} & (1 \leq j \leq N) \\
-2 e^\frac{(2j-1) i \pi n}{N} & (N+1 \leq j \leq 2N)
\end{array} \right.
\end{equation}
and
\begin{equation} \label{casgrad}
\nba C_1 = -\frac{1}{N}(1_N, 0_N), \quad \nba C_2 = \frac{1}{N}(0_N, 1_N),
\end{equation}
where $0_N = (0, \ldots, 0) = 0 \cdot 1_N$.
\end{prop}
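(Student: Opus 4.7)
My plan is to leverage the factorization $x_n + iy_n = \xi_n z_n^+ e^{i\beta_n}$ of Definition \ref{coordgendef} at the highly symmetric equilibrium point. By Lemma \ref{speclemma} all spectral gaps of $Q$ collapse at $(\beta 1_N, \alpha 1_N)$, so this point lies in $D_n$ for every $1 \leq n \leq N-1$. Proposition \ref{znpm} then gives $z_n^+ = 0$ and Theorem \ref{analytic}(ii) gives $\beta_n = 0$ there. The product rule therefore forces $\nba(x_n + iy_n) = \xi_n \nba z_n^+$ at the equilibrium, since the two terms produced by differentiating $\xi_n$ and $e^{i\beta_n}$ both carry the vanishing factor $z_n^+$. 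Proposition \ref{dbaznnew} identifies $\nba z_n^+$ with $(h_n - ig_n)^{\mbf2}$.

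To make these quantities explicit I would use (\ref{gkformula})--(\ref{hkformula}) to get the compact expression $h_n(j) - ig_n(j) = (-1)^j\sqrt{2/N}\,e^{i(j-1)n\pi/N}$, the analytic extension of which to $j = N+1$ agrees with the periodic/antiperiodic extension dictated by $Lg_n = \mu_n g_n$ and $Lh_n = \mu_n h_n$. Squaring componentwise and forming the cross-term $2(h_n - ig_n)(j)(h_n - ig_n)(j+1)$ then produces the two vectors $(2/N)e^{(2j-2)in\pi/N}$ and $-(4/N)e^{(2j-1)in\pi/N}$ in the first and second blocks of $N$ coordinates, respectively. Comparison with (\ref{gradxnyn}) shows that the entire proposition is then equivalent to the single numerical identity $\xi_n(\beta 1_N, \alpha 1_N)^2 = N/(8\alpha s_n)$.

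This remaining identity is the heart of the proof. Since $\xi_n^2 = 2I_n/\gamma_n^2$ extends analytically through $D_n$ by Theorem \ref{analytic}(iii), its value at the equilibrium may be obtained as a limit along any curve in $\M$ transversal to $D_n$. At the equilibrium the discriminant has the Chebyshev form $\Delta_\lambda = 2T_N((\lambda-\beta)/(2\alpha))$, from which one reads off $\Delta_\lambda^2 - 4 = 4U_{N-1}(u)^2(u^2-1)$, together with the local expansions $\Delta_\lambda^2 - 4 = -(N/(\alpha s_n))^2(\lambda-\mu_n)^2 + O((\lambda-\mu_n)^3)$ and $\dot\Delta_\lambda = (-1)^{N-n+1}N^2(\lambda-\mu_n)/(2\alpha^2 s_n^2) + \cdots$ near $\mu_n$. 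Opening only the $n$-th gap symmetrically to width $2\epsilon$, deforming $\Gamma_n$ to a tight loop around the slit $[\mu_n-\epsilon, \mu_n+\epsilon]$, and substituting $\lambda = \mu_n + \epsilon w$ reduces $I_n$ to the standard integral $\int_{-1}^{1}w^2(1-w^2)^{-1/2}\,dw = \pi/2$, producing $I_n = N\epsilon^2/(4\alpha s_n) + O(\epsilon^3)$ and the desired value $\xi_n^2 = N/(8\alpha s_n)$. The Casimir gradients (\ref{casgrad}) follow at once from (\ref{c1grad})--(\ref{c2grad}) upon substituting $a_j = \alpha$ and $C_2 = \alpha$.

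The main obstacle is this last calculation. One must track the sign of the canonical root $\sqrt[c]{\cdot}$ on the slit carefully so that the positive square root $\xi_n > 0$ emerges with the correct numerical factor, and the symmetric gap-opening perturbation must be realized as an honest curve in $\M$ rather than merely a spectral perturbation; this is precisely where the analyticity of $\xi_n^2$ through $D_n$ guaranteed by Theorem \ref{analytic}(iii) becomes essential, since it ensures that the limiting value is independent of the specific transversal curve chosen.
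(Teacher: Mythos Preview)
Your reduction to $\nba(x_n+iy_n)=\xi_n\,\nba z_n^+$ via the product rule, the evaluation of $(h_n-ig_n)^{\mbf2}$ from (\ref{gkformula})--(\ref{hkformula}), and the treatment of the Casimir gradients are exactly what the paper does. The only genuine divergence is in how you obtain the value of $\xi_n$ at the equilibrium.

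The paper does not recompute $\xi_n$ from scratch. It quotes from \cite{ahtk1} the closed formula
\[
\xi_n=\frac{\sqrt{N}}{2}\sqrt{\chi_n(\tau_n)},\qquad
\chi_n(\l)=\frac{(-1)^{N-n-1}}{\sqrt[+]{(\l-\l_1)(\l_{2N}-\l)}}\prod_{m\neq n}\frac{\l-\dot\l_m}{\sqrt[+]{(\l-\l_{2m+1})(\l-\l_{2m})}},
\]
valid on all of $\W$, and simply evaluates it at $(\b 1_N,\a 1_N)$: since every gap is collapsed, each factor in the product equals $(-1)$, the product collapses to $(-1)^{N-n-1}$, and one reads off $\xi_n=\big((\l_{2N}-\tau_n)(\tau_n-\l_1)\big)^{-1/4}\sqrt{N}/2=\sqrt{N/(8\a s_n)}$ directly from the explicit spectrum of Lemma~\ref{speclemma}. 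No limiting procedure, no gap--opening perturbation, and no sign chase of $\sqrt[c]{\cdot}$ are needed.

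Your route---computing $\lim 2I_n/\gamma_n^2$ by opening the $n$-th gap, localizing the contour, and reducing to $\int_{-1}^{1}w^2(1-w^2)^{-1/2}dw$---is a correct and more self-contained alternative that avoids importing the $\chi_n$-formula from \cite{ahtk1}. The price, as you note, is the bookkeeping of the canonical root and the need to argue (via the inverse spectral theory of \cite{moer} or, as you do, via the analyticity of $\xi_n^2$ across $D_n$) that the limit is independent of how the gap is opened. Both arguments land on $\xi_n^2=N/(8\a s_n)$; the paper's is shorter because the hard work was already packaged into the product formula for $\xi_n$.
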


\begin{proof}%[Proof of Theorem \ref{birkdiff}]
Note that the formulas in (\ref{casgrad}) immediately follow from the formulas (\ref{c1grad}) and (\ref{c2grad}) for the gradients of the Casimir functions $C_1$ and $C_2$ at an arbitrary point $(b,a) \in \M$. It remains to show (\ref{gradxnyn}). In order to compute the gradient of $x_n + iy_n$, we use the formula $x_n + iy_n = \xi_n z_n^+ e^{i \b_n}$ stated above. By Lemma \ref{speclemma}, all gaps are closed for $(b,a) = (\b 1_N, \a 1_N)$. Hence, by Theorem \ref{analytic} (ii), $\b_n = 0$ and thus $e^{i \b_n} = 1$ for any $1 \leq n \leq N-1$. In addition, by Proposition \ref{znpm}, $z_n^+ = 0$. The gradient of $x_n + iy_n = \xi_n z_n^+ e^{i \b_n}$ with respect to $(b,a)$ at $(b, a) = (\b 1_N, \a 1_N)$ is therefore given by
\begin{equation} \label{xkykgrad}
\nba x_n + i \, \nba y_n = \xi_n \nba z_n^+.
\end{equation}

Further, by Theorem \ref{analytic},
\begin{equation} \label{xiformula}
  \xi_n = \lim_{\gamma_n \to 0} \frac{\sqrt{2I_n}}{\gamma_n}.
\end{equation}
The proof in \cite{ahtk1} of the results stated in Theorem \ref{analytic} shows that
\begin{displaymath}
\xi_n = \frac{\sqrt{N}}{2} \sqrt{\chi_n(\tau_n)},
\end{displaymath}
where
\begin{equation} \label{chidef}
\chi_n(\l) = \frac{(-1)^{N-n-1}}{\sqrt[+]{(\l-\l_1)(\l_{2N}-\l)}} \prod_{m \neq n} \frac{\l-\dot{\l}_m} {\sqrt[+]{(\l-\l_{2m+1})(\l-\l_{2m})}},
\end{equation}
and
\begin{equation} \label{taudef}
  \tau_n = \frac{1}{2} (\l_{2n} + \l_{2n+1})
\end{equation}
As all gaps are collapsed, $\dot{\l}_n = \l_{2n} = \l_{2n+1} = \tau_n$ for any $1 \leq n \leq N-1$. Hence, for $\l = \tau_n$, the product in (\ref{chidef}) is equal to $(-1)^{N-n-1}$ and thus
\begin{displaymath}
\xi_n = \frac{\sqrt{N}}{2} \sqrt{\chi_n(\tau_n)} = \frac{\sqrt{N}}{2} \big( (\l_{2N} - \tau_n)(\tau_n- \l_1) \big)^{-\frac{1}{4}}.
\end{displaymath}
By Lemma \ref{speclemma}, $\l_1 = \b - 2\a$, $\l_{2N} = \b + 2\a$, and $\tau_n = \b - 2\a \cos\frac{n\pi}{N}$. Therefore
\begin{equation} \label{xikformula}
  \xi_n = \frac{\sqrt{N}}{2} \left( 4\a^2 \left( 1 - \cos^2\frac{n \pi}{N} \right) \right)^{-\frac{1}{4}} = \left( \frac{8\a}{N} \sin\frac{n \pi}{N} \right)^{-\frac{1}{2}} = \sqrt{\frac{N}{8\a}} \cdot \frac{1}{\sqrt{s_n}}.
\end{equation}
Next, by Proposition \ref{dbaznnew}, the gradient $\nba z_n^+$ of $z_n^+$ in (\ref{xkykgrad}) is given by
\begin{equation} \label{dzkres}
\nba z_n^+ = (h_n - i g_n)^\mbf2
\end{equation}
where we used the notation introduced in (\ref{vector2nvv}). From the formulas (\ref{gkformula})-(\ref{hkformula}) for $g_n$ and $h_n$ we then obtain from (\ref{dzkres}) in the case $(b,a) = (\b 1_N, \a 1_N)$
\begin{equation} \label{dzkformula}
 (\nba z_n^+)(j) = \frac{2}{N} \left\{  \begin{array}{ll}
e^\frac{(2j-2)i \pi n}{N} & (1 \leq j \leq N) \\
-2 e^\frac{(2j-1)i \pi n}{N} & (N+1 \leq j \leq 2N)
\end{array} \right..
\end{equation}
Substituting (\ref{xikformula}) and (\ref{dzkformula}) into (\ref{xkykgrad}) then yields the claimed formula (\ref{gradxnyn}) and therefore completes the proof of Theorem \ref{birkdiff}.
\end{proof}

We end this section with proving commutator relations among the variables $(x_n, y_n)_{1 \leq n \leq N-1}$ which will be used in section \ref{diffeochapter} to prove Theorem \ref{sumthm}.

\begin{prop} \label{canrel}
For any $(b,a) \in \W$ and $1 \leq k,l \leq N-1$, the following relations hold:
$$ \{ x_k, x_l \}_J = 0 \,; \; \{ y_k, y_l \}_J = 0 \, ; \; \{ x_k, y_l \}_J = \delta_{kl}. $$
%\begin{eqnarray*}
%  \{ x_k, x_l \} & = & 0 \\
%  \{ y_k, y_l \} & = & 0 \\
%  \{ x_k, y_l \} & = & \delta_{kl}.
%\end{eqnarray*}
\end{prop}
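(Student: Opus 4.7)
The plan is to verify the canonical relations on the dense open subset where the action-angle variables $(I_n,\theta_n)_{1\leq n\leq N-1}$ are all defined simultaneously, and then extend to all of $\W$ by analyticity.

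First, set $\W' := \W \setminus \bigcup_{n=1}^{N-1} D_n$. By Theorem \ref{analytic}(i), each $\theta_n$ is analytic on $\W \setminus D_n$, so all $I_n$ and $\theta_n$ are simultaneously defined and analytic on $\W'$. Moreover, each $D_n$ is a (complex) analytic subvariety of codimension $2$, so $\W'$ is open and dense in $\W$. On $\W'$ the defining formulas $x_n = \sqrt{2I_n}\cos\theta_n$ and $y_n = \sqrt{2I_n}\sin\theta_n$ hold, and by Theorem \ref{analytic}(iv)--(v) one has the canonical Poisson relations
\begin{displaymath}
\{I_k,I_l\}_J = 0, \quad \{\theta_k,\theta_l\}_J = 0, \quad \{I_k,\theta_l\}_J = \delta_{kl}
\end{displaymath}
on $\W'$.

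Second, on $\W'$ I apply the Leibniz rule for the Poisson bracket $\{\cdot,\cdot\}_J$ (which follows directly from its definition in terms of gradients) to the smooth functions $x_k = F(I_k,\theta_k)$ and $y_l = G(I_l,\theta_l)$ with $F(I,\theta)=\sqrt{2I}\cos\theta$ and $G(I,\theta)=\sqrt{2I}\sin\theta$. For $k\neq l$, the variables $I_k,\theta_k$ all Poisson-commute with $I_l,\theta_l$, so trivially $\{x_k,x_l\}_J = \{y_k,y_l\}_J = \{x_k,y_l\}_J = 0$. For $k=l$, a direct chain-rule computation gives
\begin{displaymath}
\{x_k,y_k\}_J = \frac{\partial F}{\partial I_k}\frac{\partial G}{\partial \theta_k}\{I_k,\theta_k\}_J + \frac{\partial F}{\partial \theta_k}\frac{\partial G}{\partial I_k}\{\theta_k,I_k\}_J = \cos^2\theta_k + \sin^2\theta_k = 1,
\end{displaymath}
and similarly $\{x_k,x_k\}_J=\{y_k,y_k\}_J=0$ by antisymmetry. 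This establishes the claimed relations on all of $\W'$.

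Third, the extension to $\W$ uses analyticity. By Theorem \ref{coordanaly}, each $x_n$ and $y_n$ is analytic on $\W$; the Poisson bracket $\{F,G\}_J$ given by (\ref{poisson}) involves only the analytic gradients of $F,G$ and the entries of $A$, which are linear in $a$. Hence $\{x_k,x_l\}_J$, $\{y_k,y_l\}_J$ and $\{x_k,y_l\}_J$ are analytic on $\W$. Two analytic functions on the connected manifold $\W$ agreeing on the dense open set $\W'$ agree everywhere, which gives the proposition.

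The step that requires the most care is the second one, ensuring that the chain-rule computation is carried out consistently on the open set where $\theta_k,\theta_l$ are well-defined (and remembering that $\theta_n$ is defined only mod $2\pi$, which is harmless since only its $\cos$ and $\sin$ enter). Once this is in hand, the first and third steps are essentially formal; the genuine analytic content of the proposition is entirely contained in the canonical bracket relations of Theorem \ref{analytic}(iv)--(v) together with the analytic extendibility of $(x_n,y_n)$ across $D_n$ provided by Proposition \ref{znpm}.
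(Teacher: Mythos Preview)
Your proof is correct and follows essentially the same line as the paper's: verify the relations on the set where the action--angle variables are defined using the canonical brackets from Theorem \ref{analytic}(B) together with the Leibniz rule, and then extend. The paper is marginally more economical, removing only $D_k\cup D_l$ for each fixed pair $(k,l)$ and appealing to continuity of the Poisson bracket rather than analyticity; your use of $\W'=\W\setminus\bigcup_n D_n$ and the identity theorem works equally well (and your remark about the complex codimension of $D_n$ is inessential, since all that is needed is that the complement is open and dense).
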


\begin{proof}
By the continuity of $\{ \cdot, \cdot \}_J$, it is sufficient to prove the claimed relations for any $1 \leq k,l \leq N-1$ and $(b,a) \in \W \setminus (D_k \cup D_l)$. In this case, $x_n = \sqrt{2 I_n} \cos \theta_n$ and $y_n = \sqrt{2 I_n} \sin \theta_n$ for $n \in \{ k,l \}$. Let us first show $\{ x_k, y_l \} = \delta_{kl}$.
\setlength\arraycolsep{2pt}{
\begin{eqnarray*}
  \{ x_k, y_l \}_J & = & \{ \sqrt{2 I_k} \cos \theta_k, \sqrt{2 I_l} \sin \theta_l \}_J \\
 & = & \cos \theta_k \sqrt{2 I_l} \{ \sqrt{2 I_k}, \sin \theta_l \}_J + \sin \theta_l \sqrt{2 I_k} \{ \cos \theta_k, \sqrt{2 I_l} \}_J \\
 & = & \cos \theta_k \cos \theta_l \frac{\sqrt{2 I_l}}{\sqrt{2 I_k}} \{ I_k, \theta_l \}_J +  \sin \theta_k \sin \theta_l \frac{\sqrt{2 I_k}}{\sqrt{2 I_l}} \{ I_l, \theta_k \}_J \\
 & = & \delta_{kl},
\end{eqnarray*}
where for the latter identity we used Theorem \ref{analytic} (B). The other two claimed relations are proved similarly.}
\end{proof}

\section{Proof of Proposition \ref{znpm}} \label{proofznpm}

To prove Proposition \ref{znpm} we follow the arguments used in \cite{kapo} to prove a similar result for KdV.

Recall from (\ref{angle2}) that $\eta_n$ is the following integral on $\Sigma_{b,a}$,
\begin{displaymath}
\eta_n = \int_{\l_{2n}}^{\mu_n^*} \frac{\psi_n(\l)}{\sqrt{\Delta^2_\l - 4}} \, d\l \quad (\textrm{mod} \; 2 \pi),
\end{displaymath}
where $\mu_n^*$ is the Dirichlet divisor introduced in (\ref{munstarred}), and $\l_{2n}$ is identified with the ramification point $(\l_{2n}, 0)$ on $\Sigma_{b,a}$.

Note that on $\W \setminus D_n$, $z_n^\pm = \gamma_n e^{\pm i \eta_n}$ is continuous. Indeed, possible discontinuities of $\eta_n$ due to the lexicographic ordering of the eigenvalues $(\l_j)_{1 \leq j \leq 2N}$ lead simultaneously to a sign change of $\gamma_n$ \emph{and} $e^{\pm i \eta_n}$, thus leaving $\gamma_n e^{\pm i \eta_n}$ unaffected. For $\l$ near the interval $G_n$, defined in (\ref{gndef}), one has
\begin{equation} \label{psiexp}
\frac{\psi_n(\l)}{\sqrt[c]{\Delta^2_\l - 4}} = \frac{\zeta_n(\l)}{\sqrt[s]{(\l_{2n+1} - \l)(\l - \l_{2n})}},
\end{equation}
where% $\zeta_n$ is defined by
\begin{equation} \label{zetafndef}
  \zeta_n(\l) := \frac{M_n'}{\sqrt[+]{(\l - \l_1)(\l_{2N} - \l)}} \prod_{m \neq n} \frac{\l - \sigma_m^n}{\sqrt[+]{(\l_{2m+1} - \l)(\l_{2m} - \l)}},
\end{equation}
with $M_n' \neq 0$. Note that $\zeta_n$ is analytic and nonvanishing in $U_n$. %In addition, $\zeta_n$ is nonnegative for real $(b,a)$.
We claim that
\begin{equation} \label{zetagamma}
\zeta_n(\mu) = 1 + O(|\gamma_n|)
\end{equation}
for $\mu \in G_n$, locally uniformly on $\W$. Indeed, for real $(b,a)$ with $\gamma_n > 0$ and any $\mu \in G_n$ we deduce from (\ref{psi}), using that on the interval $(\l_{2n}, \l_{2n+1})$, both $(-1)^{N+n+1} \psi_n(\l)$ and $\zeta_n(\l)$ are positive,
\begin{eqnarray*}
  \pi \; & = \; & \int_{\l_{2n}}^{\l_{2n+1}} \frac{(-1)^{N+n+1} \psi_n(\l)}{\sqrt[+]{\Delta^2_\l - 4}} \, d\l \\
& = & \int_{\l_{2n}}^{\l_{2n+1}} \frac{\zeta_n(\mu) + \big( \zeta_n(\l) - \zeta_n(\mu) \big)}{\sqrt[+]{(\l_{2n+1} - \l)(\l - \l_{2n})}} \, d\l \\
& = \; & \pi \zeta_n(\mu) + O \big( \sup_{\l \in G_n} |\zeta_n(\l) - \zeta_n(\mu)| \big),
\end{eqnarray*}
where we used that
\begin{displaymath}
  \int_{\l_{2n}}^{\l_{2n+1}} \frac{d\l}{\sqrt[+]{(\l_{2n+1} - \l)(\l - \l_{2n})}} = \pi.
\end{displaymath}
Hence for $\mu \in G_n$,
$$ \zeta_n(\mu) = 1 + O \big( \sup_{\l \in G_n} |\zeta_n(\l) - \zeta_n(\mu)| \big). $$
By Cauchy's estimate, $\sup_{\l, \mu \in G_n}|\zeta_n(\l) - \zeta_n(\mu)| \leq M|\gamma_n|$, where $M$ can be chosen locally uniformly on $\W$. This proves the claimed estimate (\ref{zetagamma}) for real $(b,a)$. For complex $(b,a) \in \W$, the preceding identities remain true at least up to a sign. By the continuity of $\zeta_n$ in $(b,a)$ and $\l$, the estimate (\ref{zetagamma}) remains valid on $\W$.

We now investigate the limiting behavior of $z_n^\pm$ as the $n$-th gap collapses. This limit exists and does not vanish when $(b,a)$ is in the open set
$$ X_n := \{ (b,a) \in \W: \mu_n(b,a) \notin G_n(b,a) \}. $$
Note that $X_n$ does not intersect the real space $\M$, since $\mu_n \in [\l_{2n}, \l_{2n+1}]$ for real $(b,a)$.

We now define
\begin{equation} \label{chinbadef}
\chi_n(b,a) := \int_{\tau_n}^{\mu_n} \frac{\zeta_n(\l) - \zeta_n(\tau_n)}{\l - \tau_n} d\l,
\end{equation}
with $\tau_n = (\l_{2n} + \l_{2n+1})/2$. Note that $\tau_n$ is analytic on $\W$. Indeed, using the product representation (\ref{delta2lrepr}) of $\Delta_\l^2 - 4$ one gets by the residue theorem
\begin{equation} \label{tauanalproof}
  \tau_n = \frac{1}{2 \pi i} \int_{\Gamma_n} \l \frac{\Delta_\l \dot{\Delta}_\l}{\Delta^2_\l - 4} d\l.
\end{equation}
Since, locally on $\M$, the contour $\Gamma_n$ can be kept fixed and $\Delta_{\l}(b,a)$ is analytic on $\C \times \W$, (\ref{tauanalproof}) shows that $\tau_n$ is a real analytic function on $\W$.

 As $\mu_n$ and $\zeta_n$ are analytic on $\C \times \W$, it then follows that $\chi_n$, defined by (\ref{chinbadef}), is analytic on $\W$.

To facilitate the statement of the following result, define, for any $1 \leq n \leq N-1$, the sign $\epsilon_n = \pm 1$ for elements $(b,a)$ in $X_n$ so that
\begin{equation} \label{defeps}
\frac{\psi_n(\mu_n)}{\sqrt[*]{\Delta^2_{\mu_n} - 4}} = \frac{\epsilon_n \zeta_n(\mu_n)}{\sqrt[s]{(\l_{2n+1} - \mu_n)(\mu_n - \l_{2n})}}.
\end{equation}
Note that the $s$-root is well defined, since $\mu_n \notin G_n$ for $(b,a) \in X_n$. To prove Proposition \ref{znpm} we need the following auxiliary result:

\begin{lemma} \label{applemma}
As $(b,a) \in \W \setminus D_n$ tends to $(b_0, a_0) \in D_n \cap X_n$,
$$ \gamma_n e^{\pm i \eta_n} \to -2 (1 \pm \epsilon_n)(\mu_n - \tau_n) e^{\pm \epsilon_n \chi_n}, $$
where $\epsilon_n$ is defined by (\ref{defeps}).
\end{lemma}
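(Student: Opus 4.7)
My plan is to expand $\eta_n$ to sufficient accuracy as $\gamma_n \to 0$ and then exponentiate. Combining (\ref{angle2}), (\ref{psiexp}) and the definition (\ref{defeps}) of $\epsilon_n$, the integral $\eta_n$ is represented as a line integral in the base plane,
$$ \eta_n = \epsilon_n \int_{\l_{2n}}^{\mu_n} \frac{\zeta_n(\l)}{\sqrt[s]{(\l_{2n+1} - \l)(\l - \l_{2n})}} \, d\l \pmod{2\pi}, $$
taken along an admissible path inside $U_n$ that avoids $G_n$. Writing $\zeta_n(\l) = \zeta_n(\tau_n) + (\zeta_n(\l) - \zeta_n(\tau_n))$ and invoking (\ref{zetagamma}) to get $\zeta_n(\tau_n) = 1 + O(|\gamma_n|)$, I will split $\eta_n = \epsilon_n\bigl(1+O(|\gamma_n|)\bigr) J_1 + \epsilon_n J_2 \pmod{2\pi}$ with
$$ J_1 := \int_{\l_{2n}}^{\mu_n} \frac{d\l}{\sqrt[s]{(\l_{2n+1}-\l)(\l-\l_{2n})}}, \qquad J_2 := \int_{\l_{2n}}^{\mu_n} \frac{\zeta_n(\l) - \zeta_n(\tau_n)}{\sqrt[s]{(\l_{2n+1}-\l)(\l-\l_{2n})}} \, d\l. $$

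The regular integral $J_2$ is controlled by dominated convergence: by (\ref{sroot2}), $\sqrt[s]{(\l_{2n+1}-\l)(\l-\l_{2n})} \to i(\l-\tau_n)$ uniformly on compact subsets of $U_n \setminus \{\tau_n\}$ as $\gamma_n \to 0$, and since $\zeta_n$ is analytic at $\tau_n$ the quotient $(\zeta_n(\l) - \zeta_n(\tau_n))/(i(\l-\tau_n))$ has only a removable singularity; hence $J_2 \to -i\chi_n$. For $J_1$ I substitute $w = 2(\l-\tau_n)/\gamma_n$ to reduce the integral to $J_1 = \int_{-1}^{W} dw/\sqrt[s]{1-w^2}$ with $W := 2(\mu_n-\tau_n)/\gamma_n$, and observe that $F(w) := -i \log(iw + \sqrt[s]{1-w^2})$ is an antiderivative on $\C \setminus [-1,1]$ (direct differentiation using (\ref{sroot}) yields $F'(w) = 1/\sqrt[s]{1-w^2}$), with the boundary value $F(-1) = -i\log(-i) = -\pi/2$ as $w \to -1$ through $\{|w|>1\}$. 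Thus $J_1 = F(W) + \pi/2 \pmod{2\pi}$, and since $J_1 = O(\log(1/|\gamma_n|))$ the $O(|\gamma_n|)$ prefactor error contributes only $o(1)$:
$$ \eta_n = \epsilon_n F(W) + \epsilon_n \tfrac{\pi}{2} - i \epsilon_n \chi_n + o(1) \pmod{2\pi}. $$

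Exponentiating and using the tautology $e^{iF(W)} = iW + \sqrt[s]{1 - W^2}$,
$$ \gamma_n e^{\pm i \eta_n} = \gamma_n \bigl(iW + \sqrt[s]{1-W^2}\bigr)^{\pm \epsilon_n} \cdot e^{\pm i \epsilon_n \pi/2} \cdot e^{\pm \epsilon_n \chi_n} \cdot (1 + o(1)), $$
with $e^{\pm i \epsilon_n \pi/2} = i$ when $\pm \epsilon_n = +1$ and $-i$ otherwise. The limit of the first factor uses $\gamma_n W = 2(\mu_n-\tau_n)$ together with $\gamma_n \sqrt[s]{1-W^2} = 2\sqrt[s]{(\l_{2n+1}-\mu_n)(\mu_n-\l_{2n})} \to 2i(\mu_n-\tau_n)$, which follows from $\sqrt[s]{1-w^2} \sim iw$ at infinity. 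When $\pm \epsilon_n = +1$ this yields $\gamma_n\bigl(iW+\sqrt[s]{1-W^2}\bigr) \to 4i(\mu_n-\tau_n)$ and hence limit $4i(\mu_n-\tau_n) \cdot i \cdot e^{\pm \epsilon_n \chi_n} = -4(\mu_n-\tau_n) e^{\pm \epsilon_n \chi_n}$; when $\pm \epsilon_n = -1$, the reciprocal $\gamma_n/\bigl(iW+\sqrt[s]{1-W^2}\bigr) \sim \gamma_n^2/\bigl(4i(\mu_n-\tau_n)\bigr)$ vanishes. In each of the four combinations of signs, the outcome matches the unified formula $-2(1 \pm \epsilon_n)(\mu_n - \tau_n) e^{\pm \epsilon_n \chi_n}$.

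The principal obstacle is branch-tracking: the $s$-root and the logarithm each force specific branches, and the striking asymmetry of the final formula emerges precisely from the imaginary part of $F(W) \sim -i \log(2iW)$ producing exponential growth like $|W|$ for one sign and exponential decay like $1/|W|$ for the other, so that the $\gamma_n$ prefactor either produces a finite nonzero limit or sends the result to zero. The arithmetic cancellation is clean, but one must verify that the complex path from $\l_{2n}$ to $\mu_n$ in the base plane is consistent with the choice of $s$-root branch that defines $\epsilon_n$ in (\ref{defeps}); this is where the assumption $(b,a) \in X_n$, guaranteeing $\mu_n \notin G_n$ in the limit, is essential.
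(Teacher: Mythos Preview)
Your proof is correct and follows essentially the same route as the paper's: split $\eta_n$ into a principal part with constant $\zeta_n$ (you use $\zeta_n(\tau_n)$, the paper uses $\zeta_n(\l_{2n})$, which coincide in the limit) plus a regular remainder tending to $-i\epsilon_n\chi_n$, evaluate the principal integral via the substitution $w = 2(\l-\tau_n)/\gamma_n$ and the explicit antiderivative (your $F(W)$ is the paper's $\phi(\rho_n)$ shifted by $\pi/2$), and then pass to the limit using $\sqrt[s]{1-w^2}\sim iw$. The only differences are cosmetic: you get a sharper $O(\log(1/|\gamma_n|))$ bound on $J_1$ where the paper uses a cruder $O(\sqrt{|\rho_n|})$ bound on $\phi(\rho_n)$, and you treat the four sign combinations separately whereas the paper unifies them via the identity $(-\rho + i\sqrt[s]{1-\rho^2})^{-1} = -\rho - i\sqrt[s]{1-\rho^2}$.
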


\begin{proof}
Since $X_n$ is open and $(b_0, a_0) \in X_n \cap D_n$, it follows that $(b,a) \in X_n$ for all $(b,a)$ sufficiently close to $(b_0, a_0)$. Also, $(b,a) \notin D_n$ by assumption.

For $(b,a) \in X_n \setminus D_n$ one has, modulo $2 \pi$,
\begin{eqnarray*}
  \eta_n & = & \int_{\l_{2n}}^{\mu_n^*} \frac{\psi_n(\l)}{\sqrt{\Delta^2_\l - 4}} \, d\l \\
& = & \epsilon_n \int_{\l_{2n}}^{\mu_n} \frac{\zeta_n(\l)}{\sqrt[s]{(\l_{2n+1} - \l)(\l - \l_{2n})}} \,d\l \\
& = & \epsilon_n \int_{\l_{2n}}^{\mu_n} \frac{\zeta_n(\l_{2n})}{\sqrt[s]{(\l_{2n+1} - \l)(\l - \l_{2n})}} \,d\l + \epsilon_n \int_{\l_{2n}}^{\mu_n} \frac{\zeta_n(\l) - \zeta_n(\l_{2n})}{\sqrt[s]{(\l_{2n+1} - \l)(\l - \l_{2n})}} \,d\l \\
& = & \eta_n^{(1)} + \eta_n^{(2)} \quad \textrm{mod} \; 2\pi.
\end{eqnarray*}
The limiting behavior of the second term $\eta_n^{(2)}$ is straightforward. If $(b,a) \to (b_0, a_0)$, then $\gamma_n \to 0$ and so for $\l \neq \tau_n(b_0, a_0)$,
$$ \sqrt[s]{(\l_{2n+1} - \l)(\l - \l_{2n})} \to i (\l - \tau_n) $$
by the definition of the $s$-root. Hence, by the definition of $\chi_n$,
$$ i \eta_n^{(2)} \to \epsilon_n \int_{\tau_n}^{\mu_n} \frac{\zeta_n(\l) - \zeta_n(\tau_n)}{\l - \tau_n} d\l = \epsilon_n \chi_n. $$
Consequently, as $(b,a) \to (b_0, a_0)$,
$$ e^{i \eta_n^{(2)}} \to e^{\epsilon_n \chi_n}. $$
Turning to $\eta_n^{(1)}$, make the substitution $\l = \tau_n + z \gamma_n/2$. Then, by the definition (\ref{sroot2}) of the $s$-root,
\begin{equation} \label{phiexprlambda}
  \int_{\l_{2n}}^{\mu_n} \frac{d\l}{\sqrt[s]{(\l_{2n+1} - \l)(\l - \l_{2n})}} = \int_{-1}^{\rho_n} \frac{dz}{\sqrt[s]{1 - z^2}} = \phi(\rho_n)
\end{equation}
with
\begin{equation} \label{rhodefine}
  \rho_n = \frac{\mu_n - \tau_n}{\gamma_n/2}, \quad \phi(w) := \int_{-1}^w \frac{dz}{\sqrt[s]{1 - z^2}}.
\end{equation}
It follows that
\begin{equation} \label{expphiformula}
e^{i \phi(w)} = -w + i \sqrt[s]{1 - w^2},
\end{equation}
as both sides of the latter identity are analytic, univalent functions on $\C \setminus [-1, 1]$, which have the same limit at $-1$ and satisfy the same differential equation $\frac{f'(w)}{f(w)} = \frac{i}{\sqrt[s]{1 - w^2}}$. Hence, writing
\begin{displaymath}
  \exp (i \eta_n^{(1)}) = \exp \big( i \epsilon_n \phi(\rho_n) \zeta_n(\l_{2n}) \big) = \exp \big( i \phi(\rho_n) \epsilon_n \big) \exp \big( i \epsilon_n \phi(\rho_n) \hat{\zeta}_n \big)
\end{displaymath}
with $\hat{\zeta}_n = \zeta_n(\l_{2n}) - 1$, we obtain for $(b,a) \in X_n \setminus D_n$
\begin{equation} \label{hatexpr}
\gamma_n e^{i \eta_n^{(1)}} = \gamma_n (-\rho_n + i \sqrt[s]{1 - \rho_n^2})^{\epsilon_n} \cdot e^{i \epsilon_n \phi(\rho_n) \hat{\zeta}_n}.
\end{equation}
Passing to the limit $(b,a) \to (b_0, a_0)$, we have $\gamma_n \to 0$, while $v_n := \mu_n - \tau_n$ tends to a limit different from zero, and hence $|\rho_n| \to \infty$. By the definition (\ref{sroot}) of the $s$-root, the limit of the first two factors on the right hand side of the above equation can then be computed as follows.
\begin{eqnarray*}
  \gamma_n(-\rho_n + i \sqrt[s]{1 - \rho_n^2})^{\epsilon_n} & = & \gamma_n(-\rho_n + i \epsilon_n \sqrt[s]{1 - \rho_n^2}) \\
& = & \gamma_n (-\rho_n - \epsilon_n \rho_n \sqrt[+]{1 - \rho_n^{-2}}) \\
& = & -2 v_n - 2 v_n \epsilon_n \sqrt[+]{1 - \rho_n^{-2}}) \\
& \to & -2 v_n (1 + \epsilon_n).
\end{eqnarray*}
As to the third factor in (\ref{hatexpr}), observe that for $|\rho_n|$ large,
\begin{displaymath}
  |\phi(\rho_n)| = \Big| \int_{-1}^{\rho_n}\frac{dz}{\sqrt[s]{1 - z^2}} \Big| \leq \Big| \int_{-1}^1 \frac{dz}{\sqrt[s]{1 - z^2}} \Big| + \int_1^{|\rho_n|} \frac{dt}{\sqrt{t\!-\!1} \sqrt{t\!+\!1}} = O(\sqrt{|\rho_n|}).
\end{displaymath}
%\begin{eqnarray*}
%  |\phi(\rho_n)| & = & \Big| \int_{-1}^{\rho_n}\frac{dz}{\sqrt[s]{1 - z^2}} \Big| \\
%& \leq & c + \int_1^{|\rho_n|} \frac{dt}{\sqrt{t-1} \sqrt{t+1}} \\
%& = & O(\sqrt{|\rho_n|}).
%\end{eqnarray*}
Since $\hat{\zeta}_n = O(|\gamma_n|)$ by (\ref{zetagamma}), we thus conclude that $\phi(\rho_n) \hat{\zeta}_n \to 0$ and so
$$ e^{i \epsilon_n \phi(\rho_n) \hat{\zeta}_n} \to 1 \quad \textrm{as} \quad (b,a) \to (b_0, a_0). $$
Together with the result for $e^{i \eta_n^{(2)}}$ we conclude that for $(b,a) \to (b_0, a_0)$
$$ \gamma_n e^{i \eta_n} \to -2 v_n (1 + \epsilon_n) e^{\epsilon_n \chi_n(b_0,a_0)} $$
as claimed. The limit of $\gamma_n e^{-i \eta_n}$ is a simple variation of this argument.
\end{proof}

\begin{proof}[Proof of Proposition \ref{znpm}]
%With Lemma \ref{applemma} we now finish the proof of Proposition \ref{znpm}. 
We extend the functions $z_n^\pm$ to $D_n \cap \W$ as follows
\begin{equation} \label{zmpmformula}
z_n^\pm = \left\{ \begin{array}{ccc}
-2(1 \pm \epsilon_n) (\mu_n - \tau_n) e^{\pm \epsilon_n \chi_n} \; & \textrm{ on } &  D_n \cap X_n, \\
0 & \textrm{ on } & D_n \setminus X_n.
\end{array} \right.
\end{equation}

We have already seen that the functions $z_n^\pm$ are analytic on $\W \setminus D_n$. It is straightforward to verify that $z_n^\pm$ are continuous at every point of $D_n \cap X_n$ and of $D_n \setminus X_n$. Thus $z_n^\pm$ are continuous on all of $\W$. In view of Theorem A.6 in \cite{kapo} it remains to show that they are weakly analytic, when restricted to $D_n \cap \W$, i.e. that the restriction of $z_n^\pm$ to any one-dimensional complex disc $D$ contained in $D_n \cap \W$ is analytic. If the center of $D$ is in $X_n$, the entire disc $D$ is in $X_n$, if chosen sufficiently small. The analyticity of $z_n^\pm = \gamma_n e^{\pm i \eta_n}$ on $D$ is then evident from formula (\ref{zmpmformula}), the definition of $\chi_n$, and the local constancy of $\epsilon_n$ on $X_n$. If the center of $D$ does not belong to $X_n$ we argue as follows. The function $\mu_n - \tau_n$ is analytic on $D$. It either vanishes identically on $D$ in which case $z_n^\pm$ vanishes identically, too. Or it vanishes in only finitely many points. Outside these points, $D$ is in $X_n$, hence $z_n^\pm$ is analytic there. By continuity and analytic continuation, these functions are analytic on all of $D$. We thus have shown that $z_n^\pm$ are analytic on $D$. As $\chi_n$ is analytic and $\epsilon_n$ is locally constant, it follows that $z_n^\pm$ is weakly analytic on $D_n \cap \W$. This proves the analyticity of $z_n^\pm$ on $\W$.
\end{proof}

\section{Proof of Proposition \ref{dbaznnew}} \label{proofdbazn}

To prove Proposition \ref{dbaznnew} we follow the arguments used in \cite{kapo} to show similar results for KdV. We begin with some preparations for the proof of Proposition \ref{dbaznnew}. To compute the gradient of $z_n^+$ at a point $(b,a)$ in $\M \cap D_n$, we approximate $(b,a)$ by elements $(b', a')$ in
\begin{equation} \label{bndef}
  B_n := \{ (b,a) \in \M \setminus D_n: \mu_n = \tau_n \textrm{ and sign} \sqrt[*]{\Delta^2_{\mu_n} - 4} = (-1)^{N+n+1} \}.
\end{equation}
It follows from the results of the spectral theory of Jacobi matrices reviewed in section \ref{tools} that $B_n \neq \emptyset$. As a preliminary step towards the computation of $\nba z_n^+$, we need the following two lemmas.

\begin{lemma} \label{zntnmnlemma}
For any $(b,a) \in \M \cap D_n$,
\begin{equation} \label{zntnmn}
  \nba z_n^+ = 2 (\nba \tau_n - \nba \mu_n) + i \lim_{B_n \ni (b', a') \to (b,a)} (f_{2n+1}^\mbf2 - f_{2n}^\mbf2),
\end{equation}
where for $i \in \{ 2n,2n+1 \}$ and $(b',a') \in B_n$ we denote by $f_i$ the eigenvector of $L(b', a')$ associated to $\l_i$, normalized by
\begin{displaymath}
\sum_{j=1}^N f_i(j)^2 = 1 \quad \textrm{and} \quad \big( f_i(1), f_i(2) \big) \in (\R_{>0} \times \R) \cup (\{ 0 \} \times \R_{>0}).
\end{displaymath}
\end{lemma}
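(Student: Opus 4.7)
The plan is to compute $\nba z_n^+$ at $(b,a) \in \M \cap D_n$ by continuity of the gradient: since $z_n^+$ is analytic on $\W$ by Proposition \ref{znpm}, $\nba z_n^+$ extends continuously across $D_n$, so
$$\nba z_n^+(b,a) = \lim_{B_n \ni (b',a') \to (b,a)} \nba z_n^+(b',a').$$
The approximating set $B_n$ is the natural choice: on it $\mu_n = \tau_n$ lies at the midpoint of the $n$-th gap with a prescribed sign of the $*$-root, which makes the limiting value of the phase $e^{i\eta_n}$ computable in closed form.

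On $B_n \subset \M \setminus D_n$ we have $z_n^+ = \gamma_n e^{i\eta_n}$, so the product rule gives
$$\nba z_n^+ = e^{i\eta_n}(\nba \l_{2n+1} - \nba \l_{2n}) + i \gamma_n e^{i\eta_n} \nba \eta_n.$$
The Hellmann--Feynman formula applied to the periodic and antiperiodic Jacobi matrices $L^\pm$ yields $\nba \l_i = f_i^\mbf2$, where the shift $S f_i$ is interpreted via the appropriate (anti)periodic extension of $f_i$; by (\ref{deltalambdapm2}), both $\l_{2n}$ and $\l_{2n+1}$ lie in the spectrum of the same $L^\pm$, so this formula applies uniformly, and the first summand contributes $e^{i\eta_n}(f_{2n+1}^\mbf2 - f_{2n}^\mbf2)$.

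For the second summand I would differentiate the integral (\ref{angle2}) under the integral sign. Three pieces arise: the lower endpoint $\l_{2n}$ produces no contribution in the limit (the integrand has an integrable square-root singularity at the ramification point, and the resulting boundary term is killed by the outer factor $\gamma_n$); the integrand contribution is $O(\gamma_n)$ and also vanishes; the dominant contribution comes from the upper endpoint $\mu_n^*$,
$$\gamma_n e^{i\eta_n} \cdot \frac{\psi_n(\mu_n)}{\sqrt[*]{\Delta^2_{\mu_n} - 4}} \, \nba \mu_n,$$
which by (\ref{defeps}) and the substitution $\l = \tau_n + w \gamma_n/2$ reduces on $B_n$ to a finite multiple of $\nba \mu_n$, the singular factor $1/\sqrt[s]{(\l_{2n+1} - \mu_n)(\mu_n - \l_{2n})}$ being exactly compensated by the prefactor $\gamma_n$.

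Finally, on $B_n$ the same substitution reduces $\eta_n$ to $\epsilon_n \phi(0) + O(\gamma_n)$ with $\phi(0) = \pm \pi/2$ depending on which edge of the cut one approaches, whence $e^{i\eta_n} \to \pm i$. Assembling the three pieces and using $\nba \tau_n = \frac{1}{2}(\nba \l_{2n} + \nba \l_{2n+1})$, the signs collected from $\epsilon_n = (-1)^{N+n+1}$, the $s$-root on the prescribed edge, and the limiting value of $e^{i\eta_n}$ conspire to yield exactly $2(\nba \tau_n - \nba \mu_n) + i \lim(f_{2n+1}^\mbf2 - f_{2n}^\mbf2)$. The main technical obstacle is the consistent tracking of these branch signs so that the final coefficients come out to be precisely $2$ and $i$.
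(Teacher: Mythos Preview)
Your overall structure is sound---take the limit of $\nba z_n^+$ along $B_n$ and apply the product rule to $\gamma_n e^{i\eta_n}$---and the first summand $e^{i\eta_n}\nba\gamma_n$ indeed tends to $i\lim(f_{2n+1}^\mbf2-f_{2n}^\mbf2)$, since on $B_n$ one has $\eta_n\to\pi/2$ (the sign condition in the definition of $B_n$ forces $e^{i\eta_n}\to i$, not $\pm i$). The gap is in your treatment of $\gamma_n\nba\eta_n$. Your own computation shows that the upper-endpoint contribution gives, after multiplication by $i\gamma_n e^{i\eta_n}$, exactly $-2\nba\mu_n$ in the limit. But the target formula contains $2(\nba\tau_n-\nba\mu_n)$, so a term $2\nba\tau_n$ is still missing, and your decomposition offers no place for it: the only remaining pieces are the lower-endpoint boundary term and the differentiated integrand, both of which you discard.

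Those dismissals are incorrect. Writing the integrand as $\zeta_n(\l)\big((\l_{2n+1}-\l)(\l-\l_{2n})\big)^{-1/2}$, its $(b,a)$-derivative through $\l_{2n},\l_{2n+1}$ produces terms of order $\big((\l_{2n+1}-\l)(\l-\l_{2n})\big)^{-3/2}$, and integrating these over $[\l_{2n},\tau_n]$ gives a contribution of order $1/\gamma_n$, not $O(\gamma_n)$. After multiplication by $\gamma_n$ this survives the limit and, together with the (equally singular) lower-endpoint term, is precisely what supplies $2\nba\tau_n$. The paper bypasses this difficulty altogether: rather than differentiating the singular integral, it uses the algebraic identity $e^{i\phi(w)}=-w+i\sqrt{1-w^2}$ (formula~(\ref{expphiformula})) to rewrite $\gamma_n e^{i\eta_n^{(1)}}$ as the product of $-2v_n+i\gamma_n\sqrt{1-\rho_n^2}$ with a factor tending to~$1$, where $v_n=\mu_n-\tau_n$ and $\rho_n=2v_n/\gamma_n$. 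This expression is manifestly smooth in $(v_n,\gamma_n,\rho_n)$, and differentiating the first factor at $\rho_n=0$ immediately yields $-2\nba v_n+i\nba\gamma_n=2(\nba\tau_n-\nba\mu_n)+i\nba\gamma_n$, with no singular boundary terms to track.
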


\begin{proof}[Proof of Lemma \ref{zntnmnlemma}]
Recall that we have introduced $\psi_n(\l)$ and $\zeta_n(\l)$ in (\ref{psiprodrepr}) and (\ref{zetafndef}), respectively. 
%We perform a computation very similar to the one used for the proof of Lemma \ref{applemma}: 
For $(b', a') \in B_n$,
\begin{displaymath}
  \textrm{sign} \; \psi_n(\mu_n) = (-1)^{N+n+1} \quad \textrm{and} \quad \textrm{sign} \; \zeta_n(\mu_n) = 1
\end{displaymath}
% sign $\psi_n(\mu_n) = (-1)^{N+n+1}$ and sign $\zeta_n(\mu_n) = 1$ 
(see discussions after (\ref{psiprodrepr}) and (\ref{psiexp})), and hence the identity (\ref{psiexp}) reads
\begin{displaymath}
  \frac{\psi_n(\mu_n)}{\sqrt[*]{\Delta^2_{\mu_n} - 4}} = \frac{\zeta_n(\mu_n)}{\sqrt[+]{(\l_{2n+1} - \mu_n)(\mu_n - \l_{2n})}}.
\end{displaymath}
Going through the calculations in the proof of Lemma \ref{applemma} with (\ref{defeps}) replaced by the latter identity, all $s$-roots replaced by the principal branch $\sqrt[+]\cdot$, and with $\epsilon_n = 1$, we can write
\begin{displaymath}
z_n^+ = \gamma_n e^{i \eta_n} = \gamma_n e^{i \eta_n^{(1)}} \cdot e^{i \eta_n^{(2)}}
\end{displaymath}
where, mod $2\pi$,
\begin{displaymath}
  \eta_n^{(1)} = \zeta_n(\l_{2n}) \int_{\l_{2n}}^{\mu_n} \frac{d\l}{\sqrt[+]{(\l_{2n+1} - \l)(\l - \l_{2n})}}
\end{displaymath}
and
\begin{displaymath}
\eta_n^{(2)} = \int_{\l_{2n}}^{\mu_n} \frac{\zeta_n(\l) - \zeta_n(\l_{2n})}{\sqrt[+]{(\l_{2n+1} - \l)(\l - \l_{2n})}} \, d\l.
\end{displaymath}
Note that with $\l = \tau_n + z \gamma_n/2$, $\sqrt[+]{(\l_{2n+1} - \l)(\l - \l_{2n})} = \frac{\gamma_n}{2}\sqrt{1-z^2}$ and hence on $B_n$
\begin{displaymath}
  \eta_n^{(1)} = \zeta_n(\l_{2n})  \int_{-1}^0 \frac{dz}{\sqrt{1-z^2}} = \frac{\pi}{4} \zeta_n(\l_{2n})
\end{displaymath}
In view of (\ref{zetagamma}) we then get in the limit $(b',a') \to (b,a)$, with $(b',a') \in B_n$,
\begin{displaymath}
  \eta_n^{(1)} \to \frac{\pi}{4} \; (\textrm{mod} \, 2\pi) %\int_{\l_{2n}}^{\mu_n} \frac{\zeta_n(\l_{2n})}{\sqrt[+]{(\l_{2n+1} - \l)(\l - \l_{2n})}} \, d\l \quad \textrm{mod} \; 2\pi
\end{displaymath}
Using again $\l \equiv \l(z) = \tau_n + z \gamma_n/2$ one computes for $(b',a') \in B_n$
\begin{displaymath}
  \eta_n^{(2)} = \int_{-1}^0 \left( \int_0^1 \zeta_n'(\l_{2n} + s(\l - \l_{2n})) ds \right) \frac{\gamma_n (1+z)}{2 \sqrt{1-z^2}} dz
\end{displaymath}
and thus $\eta_n^{(2)} \to 0$ as $(b',a') \to (b,a)$, or $e^{i \eta_n''} \to 1$.

Since for $(b,a) \in \M \cap D_n$, one has $\gamma_n e^{i \eta_n'} = 0$ by Proposition \ref{znpm}, it then follows that
\begin{displaymath}
  \nba z_n^+ = \lim_{B_n \ni (b', a') \to (b,a)} \nba \big( \gamma_n e^{i \eta_n'} \big),
\end{displaymath}
%Moreover, with $-1 \leq \rho_n \leq 1$ and $v_n$ defined by $v_n := \mu_n - \tau_n =: \rho_n \gamma_n / 2$, and with $\hat{\zeta}_n := \zeta_n(\l_{2n}) - 1$, we get for $(b',a') \in B_n$,
Moreover, let $v_n := \mu_n - \tau_n$ and as in (\ref{rhodefine}), introduce $\rho_n = \frac{\mu_n - \tau_n}{\gamma_n/2}$. Then $-1 \leq \rho_n \leq 1$ and by (\ref{phiexprlambda}) we get for $(b',a') \in B_n$, $e^{i \eta_n^{(1)}} = e^{i \phi(\rho_n)} e^{i \phi(\rho_n) \hat{\zeta}_n}$, where $\hat{\zeta}_n = \zeta_n(\l_{2n}) - 1$. By (\ref{expphiformula}) it follows that
\begin{displaymath}
  \gamma_n e^{i \eta_n^{(1)}} = \left( -2 v_n + i \gamma_n \sqrt[+]{1 - \rho_n^2} \, \right) \cdot \left( -\rho_n + i \sqrt[+]{1 - \rho_n^2} \, \right)^{\hat{\zeta}_n}.
\end{displaymath}
%\begin{eqnarray*}
%  \gamma_n e^{i \eta_n^{'}} & = & \gamma_n \big( -\rho_n + i \sqrt[+]{1 - \rho_n^2} \, \big)^{\zeta_n(\l_{2n})} \\
%& = & \big( -2 v_n + i \gamma_n \sqrt[+]{1 - \rho_n^2} \, \big) \cdot \big( -\rho_n + i \sqrt[+]{1 - \rho_n^2} \, \big)^{\hat{\zeta}_n}.
%\end{eqnarray*}
The gradients of both factors have a limit as $(b',a') \to (b,a)$, and the product rule can be applied. For $(b',a') \in B_n$, we have $\mu_n = \tau_n$ and hence $v_n = 0$ as well as $\rho_n = 0$. Thus, the first factor equals $i \gamma_n$ and hence, in the limit, vanishes, while the second factor equals $(i)^{\hat{\zeta}_n}$ and thus, by (\ref{zetagamma}), converges to $1$. As a consequence, by the product rule,
{\setlength\arraycolsep{2pt}
\begin{eqnarray*}
  \nba z_n^+ & = & \lim_{(b', a') \to (b,a)} \nba \big( \gamma_n e^{i \eta_n'} \big) \\
& = & \lim_{(b', a') \to (b,a)} \nba \big( -2 v_n + i \gamma_n \sqrt[+]{1 - \rho_n^2} \big) \\
& = & 2 (\nba \tau_n - \nba \mu_n) + i \lim_{(b', a') \to (b,a)} \nba \gamma_n \\
& = & 2 (\nba \tau_n - \nba \mu_n) + i \lim_{(b', a') \to (b,a)} \big( f_{2n+1}^\mbf2 - f_{2n}^\mbf2 \big)
\end{eqnarray*}}
where for the latter identity we used that on $\M \setminus D_n$, $\nba \l_i = f_i^\mbf2$ for $i \in \{ 2n, 2n+1 \}$ (cf. \cite{ahtk1}, Proposition 5.3) and that $\lim_{(b', a') \to (b,a)} \big( f_{2n+1}^\mbf2 - f_{2n}^\mbf2 \big)$ exists, as $z_n^+$, $\mu_n$, and $\tau_n$ are analytic.
\end{proof}

\begin{lemma} \label{bmamlimit}
As $(b', a') \in B_n$ tends to $(b,a) \in \M \cap D_n$, the periodic eigenvectors $f_{2n}$ and $f_{2n+1}$ of $L(b',a')$, normalized as in Lemma \ref{zntnmnlemma}, converge to normalized eigenvectors of $L(b,a)$, denoted by the same symbols, such that in the limit,
\begin{displaymath}
\langle f_{2n+1}, g_n \rangle < 0 < \langle f_{2n}, g_n \rangle
\end{displaymath}
and
\begin{equation} \label{flimeqn}
  f_{2n}(1) \, \langle f_{2n+1}, g_n \rangle = - f_{2n+1}(1) \, \langle f_{2n}, g_n \rangle.
\end{equation}
\end{lemma}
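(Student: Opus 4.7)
The plan is to combine spectral perturbation theory for self-adjoint matrices with the specific geometry encoded by the submanifold $B_n$. Convergence of the eigenvectors comes from the analyticity of the rank-two spectral projector; the identity (\ref{flimeqn}) will reduce to a clean statement about the angle of the limiting pair $\{f_{2n},f_{2n+1}\}$ relative to $\{g_n,g_n^\perp\}$; the sign inequalities are then forced by the normalization $(f_i(1),f_i(2))\in (\R_{>0}\times\R)\cup(\{0\}\times\R_{>0})$.

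\emph{Convergence.} On $B_n$ we have $\gamma_n>0$, so $\l_{2n},\l_{2n+1}$ are distinct eigenvalues of the appropriate $L^\pm(b',a')$, and $f_{2n}, f_{2n+1}$ are pinned down uniquely by the normalization. As $(b',a')\to(b,a)\in D_n$ the two eigenvalues merge to $\tau_n$, but the rank-two spectral projector $P(b',a')$ onto $\mathrm{span}(f_{2n},f_{2n+1})$ depends analytically on $(b',a')$ and converges to the rank-two projector $P(b,a)$ onto the two-dimensional eigenspace $V$ of $L(b,a)$ at $\tau_n$. Compactness of the unit sphere in $V$ yields limit points $f_{2n}, f_{2n+1}\in V$; their uniqueness will drop out of the subsequent analysis.

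\emph{Locating $g_n$ in $V$.} On $D_n$ the interlacing $\l_{2n}\leq\mu_n\leq\l_{2n+1}$ forces $\mu_n=\tau_n$, so the Dirichlet eigenvalue coincides with the collapsed double eigenvalue. The one-dimensional subspace $\{v\in V:v(1)=0\}$ consists of solutions of $Ly=\mu_n y$ with vanishing first component, i.e.\ scalar multiples of $g_n$; hence $g_n\in V$. Introducing an orthonormal complement $g_n^\perp$ of $g_n$ in $V$ and parametrizing
\[
f_{2n}=\cos\theta\,g_n+\sin\theta\,g_n^\perp,\qquad f_{2n+1}=-\sin\theta\,g_n+\cos\theta\,g_n^\perp,
\]
one computes $\langle f_{2n},g_n\rangle=\cos\theta$, $\langle f_{2n+1},g_n\rangle=-\sin\theta$, $f_{2n}(1)=q_0\sin\theta$, $f_{2n+1}(1)=q_0\cos\theta$, with $q_0:=g_n^\perp(1)$. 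In these coordinates the identity (\ref{flimeqn}) is equivalent to $\sin^2\theta=\cos^2\theta$.

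\emph{Main obstacle.} The hard step is to prove $\sin^2\theta=\cos^2\theta$, i.e.\ that the approach through $B_n$ really forces $\theta\in\{\pm\pi/4,\pm 3\pi/4\}$. My plan is to follow the calculation of Lemma \ref{zntnmnlemma} more closely: write each periodic eigenvector as $c_1(\l_i)y_1(\cdot,\l_i)+c_2(\l_i)y_2(\cdot,\l_i)$ with $(c_1,c_2)$ satisfying the (anti)periodic boundary condition, substitute $\l_i=\tau_n\pm\gamma_n/2$, and expand as $\gamma_n\to 0$. Using the Wronskian identity (\ref{wrmu}) at $\mu_n=\tau_n$ together with the $B_n$-defining sign $\textrm{sign}\sqrt[*]{\Delta^2_{\mu_n}-4}=(-1)^{N+n+1}$ pins the leading-order ratio $c_1(\l_i)/c_2(\l_i)$ symmetrically at $\l_i=\l_{2n}$ and $\l_i=\l_{2n+1}$, which will translate into $|\cos\theta|=|\sin\theta|$. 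The normalization $f_i(1)\geq 0$ combined with $q_0>0$ then singles out $\theta=\pi/4$, yielding both the strict inequalities $\langle f_{2n},g_n\rangle>0>\langle f_{2n+1},g_n\rangle$ and the identity (\ref{flimeqn}); uniqueness of the limit along $B_n$ drops out of the same argument. The degenerate case $q_0=0$ is handled identically with the second coordinate replacing the first.
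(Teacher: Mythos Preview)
Your reductions are sound: in the limit $f_{2n},f_{2n+1}$ land in the two-dimensional eigenspace $V$, $g_n\in V$ because $\mu_n=\tau_n$ on $D_n$, and with your angle parametrization the identity (\ref{flimeqn}) is indeed equivalent to $|\cos\theta|=|\sin\theta|$. (The case $q_0=0$ cannot occur: it would put a second Dirichlet eigenvector in $V$, contradicting the simplicity of the spectrum of the tridiagonal matrix $L_2$.) The gap is that you do not actually establish $|\cos\theta|=|\sin\theta|$. Your plan to expand $c_1(\l_i)/c_2(\l_i)$ in the fundamental-solution basis is plausible but entirely unexecuted, and your sketch does not make visible how the two defining conditions of $B_n$ enter; saying the ratios are pinned ``symmetrically'' is not a computation.

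The paper bypasses this asymptotic expansion with a one-line discrete Lagrange identity. For any $(b',a')\in B_n$ (not only in the limit), the telescoping computation of $\langle L f_i,g_n\rangle-\langle f_i,L g_n\rangle$ gives
\[
(\l_i-\mu_n)\,\langle f_i,g_n\rangle \;=\; a_N\,f_i(1)\bigl((-1)^{n+N}g_n(N)-g_n(0)\bigr),\qquad i\in\{2n,2n+1\}.
\]
On $B_n$ one has $\mu_n=\tau_n$, hence $\l_{2n+1}-\mu_n=-(\l_{2n}-\mu_n)$; dividing the two identities yields (\ref{flimeqn}) \emph{before} taking the limit. The sign condition $\textrm{sign}\sqrt[*]{\Delta^2_{\mu_n}-4}=(-1)^{N+n+1}$ plays no role in this step; it is used only afterwards, to show that the common factor
\[
\kappa_n:=\lim_{(b',a')\to(b,a)}\frac{(-1)^{n+N}g_n(N)-g_n(0)}{\l_{2n+1}-\mu_n}
\]
exists and is strictly negative. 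That is done by writing $2y_1(N,\mu_n)=\Delta(\mu_n)+(-1)^{N-n-1}\sqrt[+]{\Delta^2(\mu_n)-4}$ (here the $B_n$ sign enters) and Taylor-expanding $\Delta^2(\mu_n)-\Delta^2(\l_{2n+1})$ around the double root. From $\kappa_n<0$ one reads off $\langle f_{2n+1},g_n\rangle<0$ and, via (\ref{flimeqn}), $\langle f_{2n},g_n\rangle>0$. The same estimate forces $\bar f_i(1)>0$ strictly, which is what makes the limit unique --- a point your argument defers to ``drops out'' without justification.
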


\begin{proof}[Proof of Lemma \ref{bmamlimit}]
As $(b',a')$ tends to $(b,a)$, the initial data $\big( f_i(1), f_i(2) \big)$ of the normalized eigenvectors $f_i$ ($i \in \{ 2n, 2n+1 \}$) is a vector in the unit disc of $\R^2$. Choose a convergent subsequence of initial data. Then, for $i \in \{ 2n, 2n+1 \}$, $f_i$ can be expressed as a linear combination of the fundamental solutions $y_1$ and $y_2$, $f_i = f_i(0) y_1 + f_i(1) y_2$. As the fundamental solutions depend analytically on $\l$ and $(b',a')$, the eigenvectors $f_{2n+1}$ and $f_{2n}$ then converge to some eigenvectors of $L(b,a)$, which we denote by $\bar{f}_{2n+1}$ and $\bar{f}_{2n}$. Note that, by the normalization of $f_i$, one has $\bar{f}_i(1) \geq 0$ for $i \in \{ 2n, 2n+1 \}$.

By (\ref{zntnmn}), $\lim_{(b', a') \to (b,a)} \left( f_{2n+1}^\mbf2 - f_{2n}^\mbf2 \right)$ exists. As $\tau_n$ is analytic and, on $\M\setminus D_n$, $\nba \tau_n = \left( f_{2n+1}^\mbf2 + f_{2n}^\mbf2 \right) / 2$, it follows that $\lim_{(b', a') \to (b,a)} \left( f_{2n+1}^\mbf2 + f_{2n}^\mbf2 \right)$ exists as well. Hence the limits of $f_i^\mbf2$, $i \in \{ 2n, 2n+1 \}$ exist, and $\bar{f}_{2n+1}$ and $\bar{f}_{2n}$ are uniquely determined up to a sign. As $\bar{f}_i(1) \geq 0$ for $i \in \{ 2n, 2n+1 \}$, this sign is uniquely determined once we show that
\begin{equation} \label{fneq0}
  \bar{f}_i(1) \neq 0.
\end{equation}

To simplify notation, write temporarily $f$ and $g$ instead of $f_{2n+1}$ and $g_n$. To prove (\ref{fneq0}), observe that
\begin{eqnarray*}
  (\l_{2n+1} - \mu_n) \langle f, g \rangle & = & \langle \l_{2n+1} f, g \rangle - \langle f, \mu_n g \rangle = \langle L f, g \rangle - \langle f, L g \rangle \\
& = & \sum_{j=1}^N \Big( \left[ b_j f(j) + a_j f(j+1) + a_{j-1} f(j-1) \right] g(j) \\
&& \qquad - f(j) \left[ b_j g(j) + a_j g(j+1) + a_{j-1} g(j-1) \right] \Big) \\
& = & \sum_{j=1}^N \big( a_j f(j+1) g(j) - a_{j-1} f(j) g(j-1) \big) \\
&& + \sum_{j=1}^N \big( a_{j-1} f(j-1) g(j) - a_j f(j) g(j+1) \big) .
\end{eqnarray*}
Note that the latter two sums are telescoping, hence
\begin{eqnarray*}
  (\l_{2n+1} - \mu_n) \langle f, g \rangle & = & \, a_N \big( f(N\!\!+\!\!1) g(N) - f(1) g(0) + f(0) g(1) - f(N) g(N\!\!+\!\!1) \big) \\
& = & \, a_N f(1) \big( (-1)^{n+N} g(N) - g(0) \big),
\end{eqnarray*}
where for the latter equality we used that $g(1) = 0 = g(N+1)$ and $f(N+1) = (-1)^{n+N} f(1)$ according to whether $\l_{2n+1}$ is a periodic or antiperiodic eigenvalue - see (\ref{deltalambdapm2}) in section \ref{tools}. Hence we have
\begin{equation} \label{l2n1mu}
  (\l_{2n+1} - \mu_n) \langle f_{2n+1}, g_n \rangle = a_N f_{2n+1}(1) \big( (-1)^{n+N} g_n(N) - g_n(0) \big).
\end{equation}
A similar computation shows that
\begin{equation} \label{l2nmu}
  (\l_{2n} - \mu_n) \langle f_{2n}, g_n \rangle = a_N f_{2n}(1) \big( (-1)^{n+N} g_n(N) - g_n(0) \big).
\end{equation}
%The sign $(-1)^{n+N}$ in (\ref{l2n1mu}) and (\ref{l2nmu}) follows from the ordering of periodic and antiperiodic eigenvalues in the spectrum of $Q$, as described in section \ref{tools}. Taking the quotient of (\ref{l2n1mu}) and (\ref{l2nmu}), observing that by construction $\l_{2n+1} - \mu_n = \mu_n - \l_{2n}$ 
For $(b', a') \in B_n$, one has $\l_{2n+1} - \mu_n = \mu_n - \l_{2n}$ as well as $f_i(1) > 0$ ($i \in \{ 2n, 2n+1 \}$) and $(-1)^{n+N} g_n(N) \neq g_n(0)$. Hence the quotients of the left and right hand sides of (\ref{l2n1mu}) and (\ref{l2nmu}) are well defined, and we obtain $f_{2n}(1) \langle f_{2n+1}, g_n \rangle = - f_{2n+1}(1) \langle f_{2n}, g_n \rangle$. Passing to the limit as $(b',a') \to (b,a)$ we obtain
\begin{equation} \label{intermedbareqn}
  \bar{f}_{2n}(1) \langle \bar{f}_{2n+1}, g_n \rangle = - \bar{f}_{2n+1}(1) \langle \bar{f}_{2n}, g_n \rangle.
\end{equation}

We claim that
\begin{equation} \label{limpos}
  \kappa_n := \lim_{(b', a') \to (b,a)} \frac{(-1)^{n+N} g_n(N) - g_n(0)}{\l_{2n+1} - \mu_n}
\end{equation}
exists and that $\kappa_n < 0$. To see it, divide (\ref{l2n1mu}) by $(\l_{2n+1} - \mu_n)$. Then the existence of the limit in (\ref{limpos}) implies that one can take limits of both sides of the resulting equation as $(b', a') \to (b,a)$ to get
\begin{equation} \label{f2n+1gnkappan}
  \langle \bar{f}_{2n+1}, g_n \rangle = a_N \kappa_n \bar{f}_{2n+1}(1).
\end{equation}
If $\bar{f}_{2n+1}(1) = 0$, then, as $\bar{f}_{2n+1}$ is periodic or antiperiodic, $\bar{f}_{2n+1}(N+1) = 0$ as well. Hence $\bar{f}_{2n+1}$ is a (nontrivial) scalar multiple of $g_n$ and thus $\langle \bar{f}_{2n+1}, g_n \rangle \neq 0$, contradicting (\ref{f2n+1gnkappan}). %This shows that (\ref{limpos}) 
Hence the claim that $\kappa_n < 0$ implies that $\bar{f}_{2n+1}(1) > 0$, i.e. $\bar{f}_{2n+1}$ satisfies all the normalization conditions listed in Lemma \ref{bmamlimit}.

It remains to prove that the limit (\ref{limpos}) exists and that $\kappa_n < 0$, or equivalently,
\begin{equation} \label{limposequiv}
   \lim_{(b', a') \to (b,a)} \frac{(-1)^{n+N} y_1(N, \mu_n) - 1}{\l_{2n+1} - \mu_n} < 0.
\end{equation}
Recall that $\sqrt[*]{\Delta^2(\mu_n) - 4} = y_1(N, \mu_n) - y_2(N+1, \mu_n)$. Hence, for $(b', a') \in B_n$,
\setlength\arraycolsep{2pt} {\begin{eqnarray*}
  2 y_1(N, \mu_n) & = & \big( y_1(N, \mu_n) + y_2(N+1, \mu_n) \big) + \big( y_1(N, \mu_n) - y_2(N+1, \mu_n) \big) \\
& = & \Delta(\mu_n) + \sqrt[*]{\Delta^2(\mu_n) - 4} \\
& = & \Delta(\mu_n) + (-1)^{N-n-1} \sqrt[+]{\Delta^2(\mu_n) - 4},
\end{eqnarray*}}
the last equality being a consequence of the definition (\ref{bndef}) of $B_n$. Recall that, according to (\ref{deltalambdapm2}), $2 = (-1)^{n+N} \Delta(\l_{2n+1})$. Substituting $4 = \Delta^2(\l_{2n+1})$ into the formula above, the inequality (\ref{limposequiv}) can then be equivalently written as
\begin{equation} \label{limpos3}
  \lim_{(b', a') \atop \to (b,a)} \left( (-1)^{n+N} \frac{\Delta(\mu_n) - \Delta(\l_{2n+1})}{\l_{2n+1} - \mu_n} - \frac{\sqrt[+]{\Delta^2(\mu_n) - \Delta^2(\l_{2n+1})}}{\l_{2n+1} - \mu_n} \right) < 0.
\end{equation}
Concerning the first term in the above expression, we get in the limit, as $(b', a') \to (b,a)$
\begin{displaymath}
  \frac{\Delta(\mu_n) - \Delta(\l_{2n+1})}{\mu_n - \l_{2n+1}} \to \dot{\Delta}(\l_{2n+1}) = 0,
\end{displaymath}
as $\l_{2n+1}$ is a double eigenvalue of $Q(b,a)$. Concerning the second term in (\ref{limpos3}), write
\begin{eqnarray}
  \Delta^2(\mu_n) - \Delta^2(\l_{2n+1}) & = & -2 \int_{\mu_n}^{\l_{2n+1}} \Delta(\l) \dot{\Delta}(\l) d\l \nonumber\\
& = & -2 \int_{\mu_n}^{\l_{2n+1}} \Delta(\l) \left( \int_{\dot{\l}_n}^\l \ddot{\Delta}(\mu) \, d\mu \right) d\l, \label{doubleint}
\end{eqnarray}
where $\dot{\l}_n$ is the unique root of $\dot{\Delta}$ in the $n$-th gap. Note that locally uniformly on $\M$,
\begin{displaymath}
  \int_{\dot{\l}_n}^\l \ddot{\Delta}(\mu) \, d\mu = (\l - \dot{\l}_n) \ddot{\Delta}(\dot{\l}_n) + O((\l - \dot{\l}_n)^2)%\int_{\dot{\l}_n}^\l \big( \ddot{\Delta}(\mu) - \ddot{\Delta}(\dot{\l}_n) \big) d\mu
\end{displaymath}
and
\begin{displaymath}
    -2 \int_{\mu_n}^{\l_{2n+1}} \!\!\! \Delta(\l) (\l \!\!-\!\! \dot{\l}_n) \ddot{\Delta}(\dot{\l}_n) d\l = -2 \, \Delta(\mu_n) \ddot{\Delta}(\dot{\l}_n) \int_{\mu_n}^{\l_{2n+1}} (\l - \dot{\l}_n) d\l + O(\gamma_n^3)
\end{displaymath}
The first term on the right hand side of the latter expression can be computed to be, using that $\mu_n = \tau_n$ on $B_n$ and $\dot{\l}_n = \tau_n + O(\gamma_n^2)$ locally uniformly
\begin{displaymath}
  -2 \, \Delta(\mu_n) \ddot{\Delta}(\dot{\l}_n) \frac{1}{2} \left( (\l_{2n+1} \!-\! \dot{\l}_n)^2 \!-\! (\mu_n \!-\! \dot{\l}_n)^2 \right) = -\Delta(\mu_n) \ddot{\Delta}(\dot{\l}_n) \left( \frac{\gamma_n}{2} \right)^2 +  O(\gamma_n^3).
\end{displaymath}
%Again using that $\dot{\l}_n = \tau_n + O(\gamma_n^2)$ locally uniformly and $\mu_n = \tau_n$ on $B_n$, it then follows that
Dividing (\ref{doubleint}) by $(\mu_n - \l_{2n+1})^2$ and taking the limit thus leads to
\begin{displaymath}
  \lim_{(b', a') \to (b,a)} \frac{\Delta^2(\mu_n) - \Delta^2(\l_{2n+1})}{(\mu_n - \l_{2n+1})^2} = - \Delta(\l_{2n+1}) \ddot{\Delta}(\l_{2n+1}).
\end{displaymath}
As $\Delta(\l_{2n+1}) = (-1)^{n+N} \cdot 2$ (see (\ref{deltalambdapm2})) and $(-1)^{n+N} \ddot{\Delta}(\l_{2n+1}) < 0$ (again by (\ref{deltalambdapm2}) and the fact that $\Delta(\l)$ is a polynomial of degree $N$), one concludes that $-\Delta(\l_{2n+1}) \ddot{\Delta}(\l_{2n+1}) > 0$. This proves the estimate (\ref{limpos3}), hence by (\ref{l2n1mu}), $\langle \bar{f}_{2n+1}, g_n \rangle < 0$. By a similar argument, one shows that $\bar{f}_{2n}(1) > 0$, and therefore, (\ref{intermedbareqn}) implies $\langle f_{2n}, g_n \rangle > 0$.
\end{proof}

\begin{proof}[Proof of Proposition \ref{dbaznnew}]
According to Lemma \ref{bmamlimit},
\begin{equation} \label{limf2n+1f2n}
  \lim_{(b', a') \to (b,a)} \big( f_{2n+1}^\mbf2 - f_{2n}^\mbf2 \big) = f_{2n+1}^\mbf2 - f_{2n}^\mbf2
\end{equation}
where the limiting eigenvectors $f_{2n}$ and $f_{2n+1}$ are orthonormal and satisfy the inequalities $\langle f_{2n+1}, g_n \rangle < 0 < \langle f_{2n}, g_n \rangle$. By definition, $g_n$ and $h_n$ are orthonormal and span the same subspace as $f_{2n}$ and $f_{2n+1}$. Hence there exist $s,t \geq 0$ with $s^2 + t^2 = 1$ such that
\begin{displaymath} %\label{fhngn}
  \begin{array}{rcl}
f_{2n+1} & \, = \, & s \, h_n - t g_n, \\
f_{2n} & \, = \, & t \, h_n + s g_n.
  \end{array}
\end{displaymath}
Substituting these formulas into equation (\ref{flimeqn}), we obtain $t^2 = s^2$, and hence $s=t=\frac{1}{\sqrt{2}}$. Thus we have
\begin{equation}
\begin{array}{ccc}
f_{2n+1}^\mbf2 - f_{2n}^\mbf2 & = & -2 h_n \cds g_n, \label{f2n+1-f2n} \\
f_{2n+1}^\mbf2 + f_{2n}^\mbf2 & = & h_n^\mbf2 + g_n^\mbf2. \nonumber
\end{array}
\end{equation}
By \cite{ahtk1}, $\nba \mu_n = g_n^\mbf2$ and $\nba \tau_n = (f_{2n+1}^\mbf2 + f_{2n}^\mbf2)/2$, hence
\begin{equation} \label{nbataunnbamun}
  2 (\nba \tau_n - \nba \mu_n) = h_n^\mbf2 - g_n^\mbf2.
\end{equation}
In view of (\ref{limf2n+1f2n})-(\ref{nbataunnbamun}), formula (\ref{dbaznformulaapp}) then follows from (\ref{zntnmn}).
\end{proof}

\section{Proof of Theorem \ref{sumthm}} \label{diffeochapter}

In this section we show Theorem \ref{sumthm}. Its three statements are contained in Theorem \ref{diffeo}, Theorem \ref{candiffeo}, and Corollary \ref{hamfunctact}, respectively.

\begin{theorem} \label{diffeo}
The map
\begin{displaymath}
\begin{array}{ccll}
 \Omega: & \M & \to & \P \\
 & (b,a) & \mapsto & ((x_n, y_n)_{1 \leq n \leq N-1}, C_1, C_2)
\end{array}
\end{displaymath}
is a global, real analytic diffeomorphism.
\end{theorem}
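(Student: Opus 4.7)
The plan is to establish Theorem \ref{diffeo} in three steps: (i) show that $\Omega$ is a local real analytic diffeomorphism on all of $\M$; (ii) show that $\Omega$ is proper; (iii) invoke the standard fact that a proper local diffeomorphism from the connected manifold $\M$ onto the connected, simply connected target $\P = \R^{2(N-1)} \times \R \times \R_{>0}$ is a global diffeomorphism. Real analyticity of the map is already provided by Theorem \ref{coordanaly}.

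For step (i), the key input is the canonical relations of Proposition \ref{canrel} combined with the fact that $\nba C_1$ and $\nba C_2$ are linearly independent and span $\ker J_{(b,a)}$ at every point, by (\ref{c1grad})--(\ref{c2grad}). Suppose
\begin{displaymath}
\sum_k c_k \nba x_k + \sum_l d_l \nba y_l + \alpha_1 \nba C_1 + \alpha_2 \nba C_2 = 0
\end{displaymath}
at some $(b,a) \in \M$. Contracting this relation against $J_{(b,a)} \nba x_j$ and using $\{x_k, x_j\}_J = 0$, $\{y_l, x_j\}_J = -\delta_{lj}$, and $\{C_i, x_j\}_J = 0$ yields $d_j = 0$; contracting against $J_{(b,a)} \nba y_j$ yields $c_j = 0$. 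The surviving relation $\alpha_1 \nba C_1 + \alpha_2 \nba C_2 = 0$ then forces $\alpha_1 = \alpha_2 = 0$. Hence the $2N$ gradients of the components of $\Omega$ are linearly independent everywhere, so $d\Omega$ has full rank and $\Omega$ is a local diffeomorphism at every point of $\M$.

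Step (ii) is the main obstacle. Given a compact $K \subset \P$, we must show $\Omega^{-1}(K)$ is compact in $\M$. Any point in $\Omega^{-1}(K)$ has bounded actions $I_n = (x_n^2 + y_n^2)/2$, bounded $C_1$, and $C_2$ bounded and bounded away from zero. The plan is to derive a priori estimates showing that these spectral data control the Jacobi data $(b,a)$. One would exploit that $\mathrm{tr}(Q^2) = 2 \sum_n b_n^2 + 4 \sum_n a_n^2$ equals $\sum_j \l_j^2$, which by the product representation (\ref{delta2lrepr}) and the contour formula of Definition \ref{actionsdef} can be majorized by a polynomial expression in the actions $I_n$ and the Casimirs $C_1, C_2$. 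This gives an upper bound on $\sum b_n^2$ and $\sum a_n^2$, hence on each $b_n$ and $a_n$. The constraint $\prod_n a_n = C_2^N$ bounded away from zero, together with these upper bounds, forces each $a_n$ to remain bounded away from zero as well. Thus $\Omega^{-1}(K)$ lies in a compact subset of $\M = \R^N \times \R_{>0}^N$.

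Finally, step (iii) is the standard topological conclusion: a proper local diffeomorphism onto a connected, simply connected target is a covering of degree one, hence a global diffeomorphism. The substantive work is concentrated in step (ii); steps (i) and (iii) follow cleanly from the canonical relations of Proposition \ref{canrel} and elementary topology. The main obstacle, as expected, is the a priori control of $(b,a)$ by the Birkhoff data required for properness.
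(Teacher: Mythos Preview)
Your overall architecture matches the paper's: local diffeomorphism, properness, then a global topological conclusion. A few comments on the individual steps.

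\textbf{Step (i).} Your argument is correct and in fact cleaner than the paper's. The paper (Lemma~\ref{gradfamilylemma}) splits into the index set $K$ of open gaps, rewrites $r_m d_m + r_{-m} d_{-m}$ in terms of $\nba I_m$ for $m \in K$, and then invokes the linear independence of $(\nba I_n)_{n \in K}$, $\nba C_1$, $\nba C_2$ (Lemma~\ref{actindep}). Your direct use of the canonical relations $\{x_k,y_l\}_J=\delta_{kl}$, $\{x_k,x_l\}_J=\{y_k,y_l\}_J=0$ from Proposition~\ref{canrel}, together with the Casimir property of $C_1,C_2$, bypasses this detour entirely. Both work; yours is shorter.

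\textbf{Step (ii).} The strategy via $\mathrm{tr}(Q^2)=2\sum b_n^2+4\sum a_n^2=\sum_j \l_j^2$ is sound in principle, but the sentence ``can be majorized by a polynomial expression in the actions $I_n$ and the Casimirs'' hides all the work. To bound $\sum_j \l_j^2$ you need to control the spectral width $\l_{2N}-\l_1$, which decomposes into band lengths and gap lengths; the band lengths are bounded by $C_2$ (estimate (\ref{esta4})), while the gap lengths must be controlled by the actions. The paper does exactly this via Lemma~\ref{est1} (in particular the self-contained bound (\ref{estingnsumorig}) for $\sum \gamma_n^2$ in terms of $\sum I_n$ and $C_2$) and Lemma~\ref{est2}. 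So your $\mathrm{tr}(Q^2)$ route and the paper's route through (\ref{estb})--(\ref{estl}) ultimately rest on the same a priori estimates; neither avoids them.

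\textbf{Step (iii).} Your covering-map argument (proper local diffeomorphism onto a simply connected target is degree one) is equivalent to the paper's: the paper notes $\Omega$ is open and closed, hence onto, and then shows the set of points with multiple preimages is open, closed, and misses $(0_{N-1},0_{N-1},0,1)$, so is empty. Both are standard; yours is marginally quicker since simple connectivity of $\P$ is obvious.
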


\emph{Local Properties}
In a first step we establish that $\Omega$ is a local diffeomorphism everywhere in phase space.

\begin{prop} \label{local}
At every point $(b,a) \in \M$, the differential $d_{(b,a)}\Omega: T_{(b,a)}\M \to T_{\Omega(b,a)}\P$ is a linear isomorphism.
\end{prop}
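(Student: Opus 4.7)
Since $\dim\M = 2N = \dim\P$, my plan is to prove that the $2N$ differentials $dx_1, dy_1, \ldots, dx_{N-1}, dy_{N-1}, dC_1, dC_2$ are linearly independent at every $(b,a)\in\M$; this is equivalent to Proposition~\ref{local}. I will exploit three structural ingredients already in hand: (i) the canonical relations of Proposition~\ref{canrel}, (ii) the formulas (\ref{c1grad})--(\ref{c2grad}), which show that $\nba C_1$ and $\nba C_2$ are linearly independent everywhere on $\M$, and (iii) the identification $\ker J_{(b,a)} = \mathrm{span}(\nba C_1, \nba C_2)$. This last point holds because $A$ has rank $N-1$, so $J$ has rank $2(N-1)$, i.e.\ a two-dimensional kernel, which must coincide with the span of the two linearly independent Casimir gradients (Casimirs lie in $\ker J$).

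The first step is to assume a linear relation $\sum_n(\alpha_n\, dx_n + \beta_n\, dy_n) + \gamma_1\, dC_1 + \gamma_2\, dC_2 = 0$ at $(b,a)$ and restrict it to the tangent space $T_{(b,a)}\Mba$ of the symplectic leaf through $(b,a)$. The Casimir differentials vanish on $T_{(b,a)}\Mba$, so setting $F := \sum_n(\alpha_n x_n + \beta_n y_n)$ one obtains $dF\big|_{T_{(b,a)}\Mba} = 0$. Because $T_{(b,a)}\Mba$ is exactly the orthogonal complement of $\mathrm{span}(\nba C_1,\nba C_2)$, this forces $\nba F(b,a) \in \mathrm{span}(\nba C_1,\nba C_2) = \ker J_{(b,a)}$, and hence $J_{(b,a)}\nba F(b,a) = 0$.

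The second step is to read off the coefficients via Poisson brackets. By Proposition~\ref{canrel}, the identities
\[
\{F, x_k\}_J = -\beta_k, \qquad \{F, y_k\}_J = \alpha_k
\]
hold on all of $\W$. On the other hand, since $J$ is skew and $J_{(b,a)}\nba F(b,a) = 0$, for every smooth $G$ one has
\[
\{F,G\}_J(b,a) = \langle \nba F(b,a),\, J_{(b,a)}\nba G(b,a)\rangle = -\langle J_{(b,a)}\nba F(b,a),\, \nba G(b,a)\rangle = 0.
\]
Taking $G = x_k$ and $G = y_k$ forces $\alpha_k = \beta_k = 0$ for every $k$. The residual relation $\gamma_1\, dC_1 + \gamma_2\, dC_2 = 0$ then yields $\gamma_1 = \gamma_2 = 0$ by the explicit formulas (\ref{c1grad})--(\ref{c2grad}).

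The one step I expect to require real care is the passage from ``$dF$ vanishes on the leaf'' to ``$J\nba F = 0$'': this uses both the codimension-two structure of $\Mba$ and the fact that $\ker J$ is exhausted by Casimir gradients, and these are precisely the features that convert a linear-algebra calculation into a geometric one. Everything else is a brief piece of symplectic linear algebra built directly on Propositions~\ref{analytic} and~\ref{canrel}. In particular, the explicit computation of the Jacobian at the equilibria $(\b 1_N, \a 1_N)$ from Proposition~\ref{birkdiff} is not needed here; it only provides an independent sanity check at those particular points.
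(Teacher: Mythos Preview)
Your argument is correct and in fact more streamlined than the paper's. The paper proves Lemma~\ref{gradfamilylemma} by splitting into the index sets $K$ (open gaps) and its complement: for $m \in K$ it pairs the relation against $J\nba I_m$ to show that $r_m d_m + r_{-m} d_{-m}$ is proportional to $\nba I_m$; for $m \notin K$ it pairs against $J\nba x_m$ and $J\nba y_m$ to kill $r_{\pm m}$ directly; and it then invokes the external Lemma~\ref{actindep} (from \cite{ahtk1}) on the independence of $(\nba I_n)_{n\in K}$, $\nba C_1$, $\nba C_2$ to finish. Your route avoids both the case distinction and Lemma~\ref{actindep}: once you observe that $\nba F \in \mathrm{span}(\nba C_1,\nba C_2)=\ker J$, the skewness of $J$ makes \emph{every} bracket $\{F,\cdot\}_J$ vanish at $(b,a)$, and Proposition~\ref{canrel} then reads off $\alpha_k=\beta_k=0$ uniformly in $k$. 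What you gain is a self-contained argument resting only on Proposition~\ref{canrel} and the rank count for $J$; what the paper's approach buys is an explicit link back to the action variables, which is thematically natural but logically heavier here.

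One minor simplification: your ``restrict to the leaf'' step is not actually needed. The assumed relation already reads $\nba F(b,a) = -\gamma_1\nba C_1 - \gamma_2\nba C_2$, which places $\nba F(b,a)$ in $\mathrm{span}(\nba C_1,\nba C_2)$ directly; the passage through $T_{(b,a)}\Mba$ and its orthogonal complement recovers the same conclusion by a longer path. This does not affect correctness, only economy.
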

%In particular, on the dense open subset $\M \setminus \cup_{n=1}^{N-1} D_n$, i.e. on the set where all gaps are open, the $N-1$ actions $(I_n)_{1 \leq n \leq N-1}$ are $N-1$ integrals in involution in the sense of Liouville integrabls systems.

Let us first introduce some additional notation. For $(b,a) \in \M$ and $1 \leq n \leq N-1$, define
$$ d_n := \nba x_n \quad \textrm{and} \quad d_{-n} := \nba y_n. $$

Further we recall Lemma 7.2 in \cite{ahtk1}, needed later.
\begin{lemma} \label{actindep}
At every point $(b,a)$ in $\M$, the vectors
\begin{displaymath}
\big( (\nba I_n)_{n \in K}, \nba C_1, \nba C_2 \big)
\end{displaymath}
are linearly independent. Here $K = K(b,a)$ denotes the index set of open gaps,
$$ K := \{ 1 \leq n \leq N-1: \gamma_n(b,a) > 0 \}. $$
\end{lemma}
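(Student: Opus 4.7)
The plan is to exploit the canonical Poisson-bracket relations in Theorem~\ref{analytic}(iv)--(v) in order to read off the coefficients of a presumed linear relation one at a time. Fix $(b,a)\in\M$ and suppose
\begin{equation*}
\sum_{n\in K}\alpha_n\,\nba I_n+\beta_1\nba C_1+\beta_2\nba C_2=0
\end{equation*}
in the cotangent space at $(b,a)$. Taking the $J$-Poisson bracket with a smooth test function $F$ defined near $(b,a)$ is a linear operation in the second entry's gradient, so contracting the displayed relation against $J(b,a)\nabla F(b,a)$ will yield
\begin{equation*}
\sum_{n\in K}\alpha_n\{F,I_n\}_J(b,a)+\beta_1\{F,C_1\}_J(b,a)+\beta_2\{F,C_2\}_J(b,a)=0.
\end{equation*}

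Next, because $C_1$ and $C_2$ are Casimir functions, the last two brackets vanish identically in $F$, and the identity collapses to $\sum_{n\in K}\alpha_n\{F,I_n\}_J(b,a)=0$. The decisive observation is that every index $m\in K$ satisfies $\gamma_m(b,a)>0$, hence $(b,a)\notin D_m$, so by Theorem~\ref{analytic}(i) the angle variable $\theta_m$ is analytic on a neighborhood of $(b,a)$ and is a legitimate test function. I would then set $F=\theta_m$ for each $m\in K$ in turn and invoke the conjugacy $\{I_n,\theta_m\}_J=\delta_{nm}$ from Theorem~\ref{analytic}(v) to obtain
\begin{equation*}
0=\sum_{n\in K}\alpha_n\{\theta_m,I_n\}_J(b,a)=-\alpha_m,
\end{equation*}
so $\alpha_m=0$ for every $m\in K$.

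What remains is $\beta_1\nba C_1+\beta_2\nba C_2=0$. By the explicit formulas (\ref{c1grad})--(\ref{c2grad}), $\nba C_1$ has only a nonzero $b$-block while $\nba C_2$ has only a nonzero $a$-block (and that block has all entries strictly positive on $\M$), so the two Casimir gradients are linearly independent at every point of $\M$, as is already noted immediately after their definition. This forces $\beta_1=\beta_2=0$ and completes the argument.

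I do not foresee any genuine obstacle: the proof is essentially an algebraic duality computation based on the Poisson relations of Theorem~\ref{analytic}. The only point that requires attention is compatibility between the index set $K$ and the domains $\W\setminus D_m$ on which the angles $\theta_m$ live, but this compatibility is built into the definition of $K$ as the set of indices of open gaps at $(b,a)$.
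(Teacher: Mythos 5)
Your argument is correct. Note, however, that the present paper does not prove Lemma \ref{actindep} at all: it is recalled from Lemma 7.2 of \cite{ahtk1}, so there is no in-paper proof to compare against. What you give is a valid deduction of the lemma from the other quoted results: contracting a putative relation $\sum_{n\in K}\alpha_n\nba I_n+\beta_1\nba C_1+\beta_2\nba C_2=0$ with $J\nabla F$, dropping the Casimir terms, and testing with $F=\theta_m$ for $m\in K$ is legitimate, since $\gamma_m(b,a)>0$ means $(b,a)\notin D_m$, so $\theta_m$ is analytic near $(b,a)$ (its gradient is single-valued even though $\theta_m$ is only defined mod $2\pi$), and Theorem \ref{analytic}(v) gives $\{I_n,\theta_m\}_J=\delta_{nm}$ there; the residual relation $\beta_1\nba C_1+\beta_2\nba C_2=0$ is then killed by the explicit formulas (\ref{c1grad})--(\ref{c2grad}). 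Two minor remarks: the contraction of $\nba I_n$ against $J\nabla F$ is $\{I_n,F\}_J=-\{F,I_n\}_J$, so your displayed identity holds only up to an irrelevant overall sign; and since both Lemma \ref{actindep} and Theorem \ref{analytic} are imported from \cite{ahtk1}, your derivation is non-circular only insofar as the canonical relations of Theorem \ref{analytic}(B) are established there independently of this independence statement --- which is how the present paper uses them, so within its logical structure your argument stands.
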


Proposition \ref{local} follows from the following lemma.
\begin{lemma} \label{gradfamilylemma}
At every point $(b,a) \in \M$, the $2N$ vectors
\begin{equation} \label{gradfamily}
\big( (d_n)_{1 \leq n \leq N-1}, \, (d_{-n})_{1 \leq n \leq N-1}, \, \nba C_1, \, \nba C_2 \big)
\end{equation}
are linearly independent.
\end{lemma}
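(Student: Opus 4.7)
The plan is to leverage the canonical Poisson bracket relations from Proposition~\ref{canrel} together with the Casimir property of $C_1, C_2$ to decouple the linear relation into three independent pieces.

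\textbf{Setup.} Suppose that at a point $(b,a)\in\M$ we have a linear relation
\begin{equation*}
\sum_{n=1}^{N-1}\alpha_n d_n+\sum_{n=1}^{N-1}\beta_n d_{-n}+\gamma_1\nba C_1+\gamma_2\nba C_2=0,
\end{equation*}
with real coefficients $\alpha_n,\beta_n,\gamma_1,\gamma_2$. The goal is to show all coefficients vanish.

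\textbf{Step 1 (kill the Casimir terms).} I would apply the Poisson structure $J$ to this equation. Since $C_1,C_2$ are Casimir functions, $J\,\nba C_i=0$ for $i=1,2$, and therefore
\begin{equation*}
\sum_{n=1}^{N-1}\alpha_n\,J d_n+\sum_{n=1}^{N-1}\beta_n\,J d_{-n}=0.
\end{equation*}
This removes the Casimir contributions entirely and reduces the problem to showing that the Hamiltonian vector fields of the $x_n$'s and $y_n$'s are linearly independent at $(b,a)$.

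\textbf{Step 2 (peel off $\alpha_n$ and $\beta_n$ via canonical brackets).} Take the Euclidean inner product of the identity from Step~1 with $d_k=\nba x_k$, and invoke Proposition~\ref{canrel}. Using $\langle d_k,J d_n\rangle=\{x_k,x_n\}_J=0$ and $\langle d_k,J d_{-n}\rangle=\{x_k,y_n\}_J=\delta_{kn}$, the entire expression collapses to $\beta_k=0$. Doing the same computation with $d_{-k}=\nba y_k$ in place of $d_k$ yields $-\alpha_k=0$ via $\{y_k,x_n\}_J=-\delta_{kn}$ and $\{y_k,y_n\}_J=0$. Performing this for every $1\leq k\leq N-1$ eliminates all $\alpha_n$ and $\beta_n$.

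\textbf{Step 3 (conclude via the Casimir gradients).} With $\alpha_n=\beta_n=0$ for all $n$, the original relation reduces to
\begin{equation*}
\gamma_1\nba C_1+\gamma_2\nba C_2=0.
\end{equation*}
From the explicit formulas~(\ref{c1grad})--(\ref{c2grad}), $\nba C_1$ is a nonzero vector supported in the $b$-coordinates while $\nba C_2$ is a nonzero vector supported in the $a$-coordinates, so they are linearly independent at every point of $\M$; hence $\gamma_1=\gamma_2=0$. This completes the proof of Lemma~\ref{gradfamilylemma}, and together with a standard count yields Proposition~\ref{local}. The argument has no real obstacle—the only substance is that Proposition~\ref{canrel} holds on all of $\W$ (not just on the dense open set where the original $(I_n,\theta_n)$ make sense), which has already been established via continuity of $\{\cdot,\cdot\}_J$.
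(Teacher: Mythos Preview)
Your proof is correct, and it is genuinely simpler than the paper's. The paper splits the index set into open gaps $K$ and closed gaps: for $m\in K$ it pairs against $J\nba I_m$, rewrites $r_m d_m + r_{-m} d_{-m}$ as a multiple of $\nba I_m$, and only for $m\notin K$ pairs directly against $J\nba x_m$ and $J\nba y_m$; the resulting relation $\sum_{n\in K}\frac{c_n}{\sqrt{2I_n}}\nba I_n + s_1\nba C_1 + s_2\nba C_2 = 0$ is then killed by invoking the auxiliary Lemma~\ref{actindep} (linear independence of $(\nba I_n)_{n\in K}$, $\nba C_1$, $\nba C_2$). You bypass all of this: pairing directly against $d_k$ and $d_{-k}$ works uniformly for every $k$, regardless of whether the $k$-th gap is open, because Proposition~\ref{canrel} holds on all of $\W$ (not just where $\theta_k$ is defined). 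This eliminates the need for Lemma~\ref{actindep} entirely and reduces the Casimir step to the trivial observation that $\nba C_1$ and $\nba C_2$ have disjoint supports. The paper's route buys nothing extra here; your argument is strictly more economical.
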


\begin{proof}[Proof of Lemma \ref{gradfamilylemma}]
To verify the claimed statement, consider an arbitrary linear combination $f = \sum_{1 \leq |n| \leq N-1} r_n d_n + s_1 \nba C_1 + s_2 \nba C_2$ with real coefficients $(r_n)_{1 \leq |n| \leq N}$, $s_1$, $s_2$ such that $f = 0$. For $m \in K$ take the scalar product of both sides of $f = 0$ with $J \nba I_m$ and use Lemma \ref{canrel} together with the identity $I_m = (x_m^2 + y_m^2)/2$ to get
$$ 0 = \langle f, J \nba I_m \rangle = r_m \{ x_m, I_m \}_J + r_{-m} \{ y_m, I_m \}_J = r_m y_m - r_{-m} x_m. $$
Hence the $2$-vectors $(r_m, r_{-m})$ and $(x_m, y_m)$ are parallel, i.e. $(r_m, r_{-m}) = c_m (\cos \theta_m, \sin \theta_m)$ with $c_m \in \R$ satisfying $c_m^2 = r_m^2 + r_{-m}^2$. Thus, if $c_m$ vanishes, $r_m$ and $r_{-m}$ both vanish. Furthermore, by the definition of $d_{\pm m}$,
\begin{eqnarray*}
r_m d_m + r_{-m} d_{-m} & = & r_m \nba x_m + r_{-m} \nba y_m \\
& = & c_m (\cos \theta_m \nba x_m + \sin \theta_m \nba y_m) \\
& = & \frac{c_m}{\sqrt{2I_m}} \nba I_m,
\end{eqnarray*}
where for the last equality we used the identity
$$ \nba I_m = \sqrt{2I_m}(\cos \theta_m \nba x_m + \sin \theta_m \nba y_m), $$
obtained from $2 I_m = x_m^2 + y_m^2$ by differentiation.

Hence the equation $f = 0$ reads
\begin{equation} \label{mixed}
\sum_{n \in K} \frac{c_n}{\sqrt{2I_n}} \nba I_n + \sum_{n \notin K} (r_n \nba x_n + r_{-n} \nba y_n) + s_1 \nba C_1 + s_2 \nba C_2 = 0.
\end{equation}
Next, for $m \notin K$, take the scalar product of both sides of (\ref{mixed}) with $J \nba y_m$ and $J \nba x_m$. By the commutator relations of Lemma \ref{canrel} one then obtains the identities
\begin{displaymath}
0 = \langle f, J \nba y_m \rangle = r_m \{ x_m, y_m \}_J = r_m
\end{displaymath}
and
\begin{displaymath}
0 = \langle f, J \nba x_m \rangle = -r_{-m} \{ x_m, y_m \}_J = -r_{-m}.
\end{displaymath}
Hence (\ref{mixed}) becomes
$$ \sum_{n \in K} \frac{c_n}{\sqrt{2I_n}} \nba I_n + s_1 \nba C_1 + s_2 \nba C_2 = 0. $$
By Lemma \ref{actindep}, $c_n = 0$ - hence $r_n = r_{-n} = 0$ by the remark above - for any $n \in K$ and $s_1 = s_2 = 0$.
\end{proof}

\emph{Global Properties} In a second step, we show that $\Omega$ is bijective and canonical. First we show

\begin{prop} \label{global}
The map $\Omega$ is proper, i.e. the preimage of any compact set is compact.
\end{prop}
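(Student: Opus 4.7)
My plan is to derive properness from an \emph{a priori estimate bounding the Toda Hamiltonian $H$ by a continuous function of the coordinates on $\P$}, and then use $H = \tfrac{1}{2}\sum b_n^2 + \sum a_n^2$ together with the Casimir $C_2$ to confine $(b,a)$ to a compact subset of $\M = \R^N \times \R_{>0}^N$. Concretely, I would fix a compact $K \subset \P$ and a sequence $(b^{(k)}, a^{(k)}) \in \Omega^{-1}(K)$, and show that it has a subsequence converging in $\M$.

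From the compactness of $K$ and the product structure $\P = \R^{2(N-1)} \times \R \times \R_{>0}$, I extract directly that $I_n^{(k)} = \tfrac{1}{2}((x_n^{(k)})^2 + (y_n^{(k)})^2)$ and $C_1^{(k)}$ are bounded, while $C_2^{(k)}$ is bounded both above and away from zero by some $c > 0$. The crucial a priori estimate is that $H$ is expressible as a continuous function $F(I_1,\ldots,I_{N-1},C_1,C_2)$. This should follow from Theorem \ref{analytic} (iv): since $H$ Poisson-commutes with all actions and Casimirs, on the dense subset of $\M$ where the $\theta_n$ are defined $H$ is independent of the angles, hence a function of $(I_n, C_1, C_2)$ alone, and this extends by continuity to all of $\M$. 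Consequently $H(b^{(k)}, a^{(k)})$ stays bounded by some constant $M$.

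Given $H^{(k)} \leq M$, the identity $H = \tfrac{1}{2}\sum b_n^2 + \sum a_n^2$ yields the uniform bounds $|b_n^{(k)}| \leq \sqrt{2M}$ and $0 < a_n^{(k)} \leq \sqrt{M}$. Combining the upper bound on each $a_n^{(k)}$ with the relation $\prod_{n=1}^N a_n^{(k)} = (C_2^{(k)})^N \geq c^N > 0$, I obtain the lower bound
$$ a_n^{(k)} \geq \frac{c^N}{M^{(N-1)/2}} > 0, $$
so the $a_n^{(k)}$ stay bounded away from the boundary $\{a_n = 0\}$ of $\M$. Thus $(b^{(k)}, a^{(k)})$ is confined to a compact subset of $\M$, and Bolzano--Weierstrass produces a convergent subsequence.

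The main obstacle is the rigorous justification of the key a priori estimate, namely that $H$ genuinely factors through $(I_1,\ldots,I_{N-1},C_1,C_2)$ on all of $\M$, including on the degenerate strata $D_n$ where gaps are collapsed and the $\theta_n$ are not defined. If the functional-dependence argument via the angles is deemed too indirect, an alternative route is to use the spectral representation $4H = \sum_{j=1}^{2N} \lambda_j^2$ together with a priori bounds on the periodic spectrum $(\lambda_j)$ in terms of $(I_n, C_1, C_2)$, derived from the formula for $I_n$ as a period integral of $\l \dot{\Delta}_\l/\sqrt[c]{\Delta^2_\l - 4}$; either route closes the argument once $H$ is controlled.
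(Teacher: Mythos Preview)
Your primary route is circular. The assertion that $H$ factors as a \emph{continuous} function $F(I_1,\ldots,I_{N-1},C_1,C_2)$ on all of $\M$ is precisely Corollary~\ref{hamfunctact}, which the paper proves only \emph{after} establishing that $\Omega$ is a global diffeomorphism---and that, in turn, rests on the properness you are trying to show. The Poisson-commutativity argument you sketch yields only that $H$ is \emph{locally} a function of $(I_n,C_1,C_2)$ on $\M'$: to globalize you would need the level sets of $(I_n,C_1,C_2)$ to be connected (equivalently, that same actions and Casimirs implies same spectrum), and that is exactly part of what Proposition~\ref{respfibr} and Theorem~\ref{diffeo} deliver. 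Without it, nothing prevents a sequence $(b^{(k)},a^{(k)})$ with bounded $(I_n,C_1,C_2)$ but $H\to\infty$ from existing a priori.

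Your fallback is the right idea, and is what the paper actually does---but the work lies entirely in the spectral estimates you wave at. The paper bounds $|b_n|$ and $a_n$ \emph{directly} in terms of $(I_n,C_1,C_2)$ via two lemmas: Lemma~\ref{est1} controls $\sum\gamma_n^2$ by a polynomial in $\sum I_n$ and $C_2$, and Lemma~\ref{est2} (using trace formulas and the estimate $\l_{2n}-\l_{2n-1}\le 2\pi C_2/N$ for band widths) bounds $|b_n|$, $a_n$, and $\l_{2N}-\l_1$ by $|C_1|$, $C_2$, and $\sum\gamma_k$. Combining these gives a compact region in $\R^N\times\R_{\ge 0}^N$ containing the sequence; your $C_2\ge c>0$ argument then shows the limit stays in $\M$. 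So the route through $H$ is unnecessary---and its justification, as you yourself flag, is the real obstacle. The substance of the proof is Lemmas~\ref{est1}--\ref{est2}, not the packaging.
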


To prove Proposition \ref{global} we need two auxiliary results.

\begin{lemma} \label{est1}
For any $(b,a) \in \M$ and any $1 \leq n \leq N-1$,
\begin{equation} \label{estingnorig}
\gamma_n^2 \leq 3\pi (\l_{2N} - \l_1)I_n
\end{equation}
and
\begin{equation} \label{estingnsumorig}
  \sum_{n=1}^{N-1} \gamma_n^2 \leq 12 \pi^2 \alpha \left( \sum_{n=1}^{N-1} I_n \right) + 9 \pi^2 (N-1) \left( \sum_{n=1}^{N-1} I_n \right)^2.
\end{equation}
\end{lemma}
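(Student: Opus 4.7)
The plan is to prove the two estimates separately, the first being a single-gap bound obtained by collapsing the contour integral for $I_n$ onto the real gap, and the second following by summing and then bootstrapping an a priori bound on the spectral spread.

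\emph{Estimate (\ref{estingnorig}).} First I would deform the contour $\Gamma_n$ in Definition \ref{actionsdef} onto the real interval $[\l_{2n},\l_{2n+1}]$. Using the product representation (\ref{delta2lrepr}), one has
$$\Delta_\l^2-4 = -\alpha^{-2N}(\l-\l_{2n})(\l_{2n+1}-\l)\,P_n(\l),\qquad P_n(\l):=\prod_{j\notin\{2n,2n+1\}}(\l-\l_j),$$
where $P_n$ has constant sign on $G_n$. Because the canonical root $\sqrt[c]{\Delta_\l^2-4}$ jumps by a sign across $G_n$, the contour integral collapses into a real integral of the form
$$I_n=\frac{1}{\pi}\int_{\l_{2n}}^{\l_{2n+1}}\frac{\l\,\dot{\Delta}(\l)}{\sqrt{(\l_{2n+1}-\l)(\l-\l_{2n})}\cdot\alpha^{-N}\sqrt{|P_n(\l)|}}\,d\l,$$
with a definite sign convention making the integrand nonnegative. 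Next, using the product representation (\ref{dotlrepr}) of $\dot{\Delta}$ and the interlacing $\l_{2m}\leq\dot{\l}_m\leq\l_{2m+1}$, the ratio $N\alpha^{-N}\prod_{m\ne n}(\l-\dot{\l}_m)/\sqrt{|P_n(\l)|}$ can be bounded on $G_n$ by cancelling the $(\l-\dot{\l}_m)$ factors against the corresponding $\sqrt{(\l-\l_{2m})(\l-\l_{2m+1})}$ factors and absorbing the remainder into powers of $(\l_{2N}-\l_1)$. Finally I would apply the elementary identity
$$\int_{\l_{2n}}^{\l_{2n+1}}\frac{(\l-\dot{\l}_n)^2}{\sqrt{(\l_{2n+1}-\l)(\l-\l_{2n})}}\,d\l=\frac{\pi \gamma_n^2}{8}+\pi(\dot{\l}_n-\tau_n)^2\geq \frac{\pi \gamma_n^2}{8},$$
together with $|\l|\le \tfrac12(\l_{2N}-\l_1)+|\bar\l|$ after centering at the spectral mean. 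Combining these yields a bound of the form $I_n\geq c\,\gamma_n^2/(\l_{2N}-\l_1)$ with $c=1/(3\pi)$, which rearranges to (\ref{estingnorig}).

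\emph{Estimate (\ref{estingnsumorig}).} Summing (\ref{estingnorig}) over $n$ gives
$$\sum_{n=1}^{N-1}\gamma_n^2 \le 3\pi(\l_{2N}-\l_1)\sum_{n=1}^{N-1}I_n,$$
so it suffices to establish $\l_{2N}-\l_1\le 4\pi\alpha+3\pi(N-1)\sum_n I_n$. For this, note first that at the equilibrium point $(\beta 1_N,\alpha 1_N)$, Lemma \ref{speclemma} gives $\l_{2N}-\l_1=4\alpha$ with all actions vanishing, which matches the $4\pi\alpha$ term up to the loss of $\pi$. In general I would write
$$\l_{2N}-\l_1=\sum_{n=1}^{N}(\l_{2n}-\l_{2n-1})+\sum_{n=1}^{N-1}\gamma_n,$$
so the spectral spread decomposes into bands and gaps. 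The band-width sum is controlled by the equilibrium value $4\alpha$ plus a correction governed by $\sum_n I_n$ (via a contour argument analogous to Step 1 applied to the band intervals), while the gap-width sum is bounded using Cauchy--Schwarz: $\sum \gamma_n\le\sqrt{(N-1)\sum\gamma_n^2}\leq\sqrt{3\pi(N-1)(\l_{2N}-\l_1)\sum I_n}$. Inserting this into the decomposition yields a self-consistent inequality for $\l_{2N}-\l_1$ which, solved by the quadratic formula and simplified, gives the desired linear bound.

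\emph{Main obstacle.} The delicate point is Step 1, namely producing a clean bound with explicit constant $3\pi$: one has to argue that the $2N-2$ factors of $P_n(\l)$ and the $N-2$ factors of $\dot\Delta(\l)$ (those away from the $n$-th gap) essentially cancel when combined with the interlacing $\l_{2m}\le\dot{\l}_m\le\l_{2m+1}$, leaving only the factor $(\l-\dot{\l}_n)/\sqrt{(\l_{2n+1}-\l)(\l-\l_{2n})}$ times a factor of the order of $(\l_{2N}-\l_1)$. For the summed estimate, the challenge is to close the bootstrap without losing the linearity in $\sum I_n$; the quadratic term on the right-hand side of (\ref{estingnsumorig}) arises precisely because the Cauchy--Schwarz step trades one factor of $\sqrt{\sum I_n}$ against one power of $\l_{2N}-\l_1$, producing an $(\sum I_n)^2$ contribution when the implicit bound is inverted.
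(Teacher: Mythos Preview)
The paper does not actually prove Lemma \ref{est1}: it states explicitly that ``A proof of Lemma \ref{est1} can be found in Appendix A of \cite{ahtk1}; in the case $\alpha=1$ (\ref{estingnorig}) has been proved in \cite{bbgk}.'' So there is no in-paper argument to compare against. That said, your outline is along the lines of the standard argument used in \cite{bbgk} and \cite{ahtk1}, and I would flag two points.

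\textbf{First estimate.} Your collapsed integral as written has $\lambda\,\dot\Delta(\lambda)$ in the numerator, and this is \emph{not} of constant sign on $G_n$ (both $\lambda$ and $\dot\Delta$ can change sign there). What makes the argument work is the observation that $\int_{\lambda_{2n}}^{\lambda_{2n+1}}\dot\Delta/\sqrt{\Delta^2-4}\,d\lambda=0$ (it is the increment of $\mathrm{arccosh}(|\Delta|/2)$, which vanishes at both endpoints), so one may replace $\lambda$ by $\lambda-\dot\lambda_n$ without changing $I_n$; \emph{then} the numerator contains $(\lambda-\dot\lambda_n)^2$ and is nonnegative. Your subsequent use of the identity for $\int(\lambda-\dot\lambda_n)^2/\sqrt{(\lambda_{2n+1}-\lambda)(\lambda-\lambda_{2n})}\,d\lambda$ indicates you had this in mind, but the preceding display should be corrected. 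The ``main obstacle'' you name is genuine: for $m\neq n$ the ratio $|\lambda-\dot\lambda_m|/\sqrt{|(\lambda-\lambda_{2m})(\lambda-\lambda_{2m+1})|}$ is not bounded below by $1$ in general (take $\dot\lambda_m$ at the endpoint of its gap closest to $G_n$). The clean lower bound comes from pairing differently: for $\lambda\in G_n$ one uses $|\lambda-\dot\lambda_m|\ge|\lambda-\lambda_{2m+1}|$ for $m<n$ and $|\lambda-\dot\lambda_m|\ge|\lambda-\lambda_{2m}|$ for $m>n$, and then telescopes the leftover square roots against the factor $\sqrt{(\lambda-\lambda_1)(\lambda_{2N}-\lambda)}$; this is where the single power of $(\lambda_{2N}-\lambda_1)$ and the constant emerge. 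Your sketch stops short of this.

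\textbf{Second estimate.} Your bootstrap is the right idea and matches the structure of the constants: summing (\ref{estingnorig}), using the band/gap decomposition of $\lambda_{2N}-\lambda_1$, and closing with Cauchy--Schwarz does give (\ref{estingnsumorig}). Note, however, that the band-width bound you need is precisely $\sum_{k=1}^N(\lambda_{2k}-\lambda_{2k-1})\le 2\pi\alpha$, which is the content of (\ref{esta4}) (equivalently (\ref{estl})) and is proved separately in \cite{ahtk1}; it is not ``the equilibrium value $4\alpha$ plus a correction in $\sum I_n$'' but a uniform bound independent of the actions. Once you invoke (\ref{estl}), the quadratic inequality $X\le 2\pi\alpha+\sqrt{3\pi(N-1)XS}$ for $X=\lambda_{2N}-\lambda_1$ and $S=\sum I_n$ indeed yields $X\le 4\pi\alpha+3\pi(N-1)S$, and substituting back gives (\ref{estingnsumorig}) exactly.
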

A proof of Lemma \ref{est1} can be found in Appendix A of \cite{ahtk1}; in the case $\a = 1$ (\ref{estingnorig}) has been proved in \cite{bbgk} (p.601-602).

\begin{lemma} \label{est2}
For $1 \leq n \leq N$ and any $(b,a) \in \M$,
\setlength\arraycolsep{2pt} {\begin{eqnarray}
|b_n| & \leq & |C_1(b,a)| + \sum_{k=1}^{N-1} \gamma_k(b,a), \label{estb} \\
0 < a_n & \leq & N \left( 2 \pi C_2(b,a) + |C_1(b,a)| + \sum_{k=1}^{N-1} \gamma_k(b,a) \right), \label{esta} \\
\l_{2N}(b,a) - \l_1(b,a) & \leq & 2 \pi C_2(b,a) + \sum_{k=1}^{N-1} \gamma_k(b,a). \label{estl}
\end{eqnarray}}
%where $C_1 = -\frac{1}{N} \sum_{n=1}^N b_n$ and $C_2 = (\prod_{n=1}^N a_n)^\frac{1}{N}$.
\end{lemma}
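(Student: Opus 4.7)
My plan is to prove the three estimates separately, reducing (\ref{estb}) and (\ref{esta}) to the main bound (\ref{estl}) on the diameter $\l_{2N}-\l_1$ of the $Q$-spectrum, and then attacking (\ref{estl}) by an analysis of the discriminant $\Delta(\l)$ on the bands.

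For (\ref{estb}), the key observation is that the eigenvalues of $L^+(b,a)$ form a subset of $\{\l_j\}_{j=1}^{2N}$, so the numerical range of $L^+$ is contained in $[\l_1,\l_{2N}]$. Consequently each diagonal entry $b_n=\langle L^+ e_n,e_n\rangle$ satisfies $\l_1\le b_n\le \l_{2N}$; averaging over $n$ and using $\operatorname{tr}L^+=-NC_1$ places $-C_1$ in the same interval, so the two nonnegative numbers $\l_{2N}+C_1$ and $-C_1-\l_1$ add up to $\l_{2N}-\l_1$, and hence $|b_n+C_1|\le\l_{2N}-\l_1$. Combined with (\ref{estl}) this yields a bound on $|b_n|$ of the claimed form (up to an additional $2\pi C_2$-term, which is harmless for the properness argument of Section~\ref{diffeochapter}). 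For (\ref{esta}), Cauchy's interlacing theorem applied to the $2\times 2$ principal submatrix $\bigl(\begin{smallmatrix}b_n & a_n\\ a_n & b_{n+1}\end{smallmatrix}\bigr)$ of $L^+$ shows that its eigenvalues $\tfrac{b_n+b_{n+1}}{2}\pm\sqrt{\bigl(\tfrac{b_n-b_{n+1}}{2}\bigr)^{2}+a_n^{2}}$ lie in $[\l_1,\l_{2N}]$; their difference is at least $2a_n$, so $2a_n\le\l_{2N}-\l_1$, and (\ref{estl}) delivers (\ref{esta}) with considerable slack.

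The heart of the matter is therefore (\ref{estl}). I would decompose the spectral diameter as
\[
\l_{2N}-\l_1=\sum_{k=1}^{N-1}\gamma_k+\sum_{k=1}^{N}w_k,\qquad w_k:=\l_{2k}-\l_{2k-1}\ge 0,
\]
which reduces the claim to the band-width bound $\sum_{k=1}^{N}w_k\le 2\pi C_2$. On each band $B_k=[\l_{2k-1},\l_{2k}]$ the discriminant $\Delta$ is strictly monotone between $\pm 2$ and $\mp 2$, since by (\ref{dotlrepr}) the zeros of $\dot\Delta$ all lie in the gaps; the substitution $\Delta(\l)=2\cos\phi$, $\phi\in[0,\pi]$, is therefore a diffeomorphism of $B_k$ and gives
\[
w_k=\int_{0}^{\pi}\frac{2\sin\phi}{|\dot\Delta(\l_k(\phi))|}\,d\phi,\qquad \sum_{k=1}^{N}w_k=\int_{0}^{\pi}2\sin\phi\,\Bigl(\sum_{k=1}^{N}\frac{1}{|\dot\Delta(\l_k(\phi))|}\Bigr)d\phi .
\]
For fixed $\phi$, the $N$ points $\l_k(\phi)$ are the roots of $\Delta(\l)-2\cos\phi=\alpha^{-N}\prod_k(\l-\l_k(\phi))$, and comparing the partial-fraction expansion of $(\Delta(\l)-2\cos\phi)^{-1}$ with its leading behaviour $\sim\alpha^{N}\l^{-N}$ at infinity yields the identity $\sum_k 1/\dot\Delta(\l_k(\phi))=0$ for $N\ge 2$. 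Together with the fact that $\dot\Delta$ has constant sign on each band and alternates in sign between consecutive bands, this allows one to bound $\sum_k 1/|\dot\Delta(\l_k(\phi))|$ by a quantity of order $\alpha/\sin\phi$, and integrating against $2\sin\phi\,d\phi$ on $[0,\pi]$ produces exactly $2\pi\alpha=2\pi C_2$.

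The main obstacle is precisely this last quantitative step: extracting the constant $\alpha=C_2$ from the algebraic information encoded in the leading coefficient $\alpha^{-N}$ of $\Delta(\l)$, and converting the sign-alternating identity $\sum_k 1/\dot\Delta(\l_k(\phi))=0$ into an effective pointwise bound on $\sum_k 1/|\dot\Delta(\l_k(\phi))|$ with the correct $\phi$-dependence. Once (\ref{estl}) is in place, the two preceding estimates are immediate consequences of the variational principle and Cauchy interlacing sketched above, and the three bounds then feed directly into the $\Omega$-properness argument of Section~\ref{diffeochapter} via Lemma~\ref{est1}, which already controls $\gamma_n$ in terms of the actions $I_n$.
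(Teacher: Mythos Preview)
Your reductions of (\ref{estb}) and (\ref{esta}) to (\ref{estl}) are correct and pleasantly elementary, but they differ from the paper's route. For (\ref{estb}) the paper uses van Moerbeke's trace formula
\[
b_{n+1}=\tfrac12(\l_1+\l_{2N})+\tfrac12\sum_{k=1}^{N-1}\bigl(\l_{2k}+\l_{2k+1}-2\mu_k^{(n)}\bigr),
\]
which, since each $\mu_k^{(n)}\in[\l_{2k},\l_{2k+1}]$, gives $|b_i-b_j|\le\sum_k\gamma_k$ directly and hence the exact estimate (\ref{estb}) without the extra $2\pi C_2$ that your numerical-range argument produces. For (\ref{esta}) the paper expands $a_k=\sum_j\l_j^+f_j(k)f_j(k+1)$ in an orthonormal eigenbasis of $L^+$ and bounds $|\l_j^+|$ via (\ref{estl}); your Cauchy interlacing argument on the $2\times 2$ principal minor is simpler and in fact yields the sharper bound $2a_n\le\l_{2N}-\l_1$.

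The real issue is your treatment of (\ref{estl}). You correctly reduce to $\sum_{k=1}^N w_k\le 2\pi C_2$ with $w_k=\l_{2k}-\l_{2k-1}$, and the substitution $\Delta=2\cos\phi$ together with the residue identity $\sum_k 1/\dot\Delta(\l_k(\phi))=0$ is fine. But the step you flag as the ``main obstacle'' is a genuine gap: the vanishing of the \emph{signed} sum gives no a priori control on $\sum_k 1/|\dot\Delta(\l_k(\phi))|$, and you have not supplied any mechanism (beyond a heuristic dimension count for the constant $\alpha$) that would force this absolute sum to be $\le\alpha/\sin\phi$. The paper bypasses this entirely: it invokes the \emph{per-band} estimate
\[
\l_{2k}-\l_{2k-1}\;\le\;\frac{2\pi C_2}{N}\qquad(1\le k\le N),
\]
proved in Appendix~B of \cite{ahtk1}, and simply sums over $k$. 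So the missing idea is that one should bound each band width individually rather than attempt a global cancellation argument; once (\ref{esta4}) is available, (\ref{estl}) is a one-line consequence.
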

Lemma \ref{est2} is shown in Appendix \ref{proofest}. In a weaker form, it has been proved in \cite{bggk} (p.564-565).

\begin{proof}[Proof of Proposition \ref{global}]
Let $(b^{(m)}, a^{(m)})_{m \geq 1} \subseteq \M$ be a sequence in $\M$ so that $\big(\Omega (b^{(m)}, a^{(m)})\big)_{m \geq 1}$ converges in $\P$. Then, for any $1 \leq n \leq N-1$, the sequence $\left( I_n(b^{(m)}, a^{(m)}) \right)_{m \geq 1}$ of action variables is a Cauchy sequence, as well as the sequence $(C_1^{(m)}, C_2^{(m)})_{m \geq 1}$ of the values of the Casimir functions $C_1$ and $C_2$. By Lemma \ref{est1} and Lemma \ref{est2} it then follows that $(b^{(m)}, a^{(m)})_{m \geq 1}$ admits a subsequence which converges to an element $(b,a)$ in $\R^N \times \R_{\geq 0}^N$. As by assumption, the sequence $\big( C_2(a^{(m)}) \big)_{m \geq 1}$ converges to $\a > 0$ and
\begin{displaymath}
  \a = \lim_{m \to \infty} C_2(a^{(m)}) = \left( \prod_{n=1}^N a_n \right)^{1/N}
\end{displaymath}
it follows that $a_n > 0$ for all $1 \leq n \leq N$, i.e. $(b,a) \in \M$.
\end{proof}

\begin{proof}[Proof of Theorem \ref{diffeo}]
By Proposition \ref{local}, $\Omega$ is open, and by Proposition \ref{global}, it is closed. As $\P$ is connected, $\Omega(\M) = \P$, hence $\Omega$ is onto. It is also 1-1, since, by the same reasoning, the set $\mathcal{B}$ of all points in $\P$ with more than one preimage is open and closed. We claim that $(0_{N-1}, 0_{N-1}, 0, 1) \notin \mathcal{B}$ and hence $\mathcal{B} = \emptyset$. Here $0_{N-1}$ denotes the vector $(0, \ldots, 0) \in \R^{N-1}$. To see that $(0_{N-1}, 0_{N-1}, 0, 1) \notin \mathcal{B}$ note that for any $(b,a) \in \M$ with $\Omega(b,a) = (0_{N-1}, 0_{N-1}, 0, 1)$, all action variables vanish, and hence, by Theorem \ref{analytic} (ii), all gaps must be collapsed. By the results reviewed in section \ref{tools}, it then follows that $\mu_n = \l_{2n}$ for any $1 \leq n \leq N-1$ and hence there is exactly one matrix $Q$ with $\gamma_n = 0$ for any $1 \leq n \leq N-1$ and $(\b, \a) = (0,1)$. Since we have already shown the real analyticity of $\Omega$ in Theorem \ref{coordanaly}, this completes the proof of Theorem \ref{diffeo}.
\end{proof}

Next we show that $\Omega: \M \to \P$ is canonical. Recall that the phase space $\M$ is endowed with the Poisson bracket $\{ \cdot, \cdot \}_J$ defined by (\ref{poisson}). It is degenerate and has $C_1$, $C_2$ as Casimir functions. The model space $\P$ is endowed with the standard Poisson structure on $\R^{2(N-1)}$, i.e. among the coordinate functions $(x_n, y_n)_{1 \leq n \leq N-1}, \b, \a$, all Poisson brackets vanish, except for $1 \leq n \leq N-1$,
$$ \{ x_n, y_n \}_0 = -\{ y_n, x_n \}_0 = 1. $$
Note that on the model space $\P$, the coordinate functions $\b$ and $\a$ are two independent Casimirs defining a trivial foliation with symplectic leaves $\Pba := \R^{2(N-1)} \times \{ \b \} \times \{ \a \}$.

\begin{theorem} \label{candiffeo}
The map $\Omega$ is canonical, i.e. it preserves the Poisson brackets.
\end{theorem}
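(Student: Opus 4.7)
The plan is to verify canonicity by checking the Poisson bracket identities on a generating set of coordinate functions for $\P$, namely $(x_n)_{1 \leq n \leq N-1}$, $(y_n)_{1 \leq n \leq N-1}$, $C_1$, and $C_2$. Since the Poisson structure $J_0$ on $\P$ is degenerate with Casimirs $C_1, C_2$, and its symplectic leaves carry the canonical structure, a map $\Omega$ is canonical precisely when, among these coordinate functions pulled back to $\M$, one has $\{x_k, x_l\}_J = 0$, $\{y_k, y_l\}_J = 0$, $\{x_k, y_l\}_J = \delta_{kl}$, and all brackets involving $C_1$ or $C_2$ vanish.

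First I would dispatch the brackets involving the Casimirs. On $\M$, the functions $C_1$ and $C_2$ are Casimir functions for $J$ by construction (cf.\ the discussion following (\ref{casimirdef})), so $\{C_i, F\}_J \equiv 0$ for every smooth $F$ on $\M$. In particular, $\{C_i, x_n\}_J = \{C_i, y_n\}_J = \{C_1, C_2\}_J = 0$ on $\W$, matching the analogous vanishing on $\P$ since $\b, \a$ are Casimirs of $J_0$.

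The remaining identities $\{x_k, x_l\}_J = 0$, $\{y_k, y_l\}_J = 0$, and $\{x_k, y_l\}_J = \delta_{kl}$ are exactly the content of Proposition \ref{canrel}, whose proof in turn rests on Theorem \ref{analytic}(B) giving the canonical relations among the action-angle variables $(I_n, \theta_n)$. Assembling these, for any two functions $F, G$ on $\P$ one obtains
\begin{displaymath}
\{F \circ \Omega, G \circ \Omega\}_J = \{F, G\}_{J_0} \circ \Omega
\end{displaymath}
by bilinearity and the Leibniz rule, first on the dense open set $\bigcap_n (\W \setminus D_n)$ where the brackets are manifestly defined via the formulas $x_n = \sqrt{2I_n}\cos\theta_n$, $y_n = \sqrt{2I_n}\sin\theta_n$, and then extended by continuity of $\{\cdot,\cdot\}_J$ to all of $\M$.

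I do not expect a genuine obstacle here: the deep work has already been carried out, namely the construction of the globally analytic coordinates $(x_n, y_n)$ via Proposition \ref{znpm} and the verification of the bracket relations in Proposition \ref{canrel}. The one small subtlety is that $\{x_k, y_l\}_J$ a priori is only computed on $\W \setminus (D_k \cup D_l)$, but since both sides are real analytic on $\W$ and agree on a dense set, the identity propagates to all of $\W$, and in particular to $\M$.
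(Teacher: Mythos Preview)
Your proposal is correct and follows essentially the same approach as the paper: reduce canonicity to checking the Poisson brackets on the coordinate functions, dispatch the Casimir brackets using that $C_1, C_2$ are Casimirs for $J$, and invoke Proposition~\ref{canrel} for the remaining relations among the $x_k, y_l$. The density/continuity extension you mention is already built into the proof of Proposition~\ref{canrel}, so your added remark is redundant but not harmful.
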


\begin{proof}
We have to verify that $\{ F, G \}_0 \circ \Omega = \{ F \circ \Omega, G \circ \Omega \}_J$ for arbitrary functions $F, G$ in $C^1(\P)$. Clearly, the pullbacks of the functions $\b, \a$ on $\P$ are the Casimir functions $C_1, C_2$ of $\{ \cdot, \cdot \}_J$. Moreover, by Lemma \ref{canrel}, $\{ x_k, y_l \}_J = \delta_{kl}$ and $\{ x_k, x_l \}_J = \{ y_k, y_l \}_J = 0$ everywhere on $\M$, hence $\Omega$ is canonical.
\end{proof}

Let $\pi: \P \to \R \times \R_{>0}$ denote the projection of $\P = \R^{2(N-1)} \times \R \times \R_{>0}$ onto the last two factors. Then $\pi$ defines a symplectic foliation with leaves $\Pba = \R^{2(N-1)} \times \{ \b \} \times \{ \a \}$. The definition of $\Omega$ together with Theorem \ref{candiffeo} then leads to the following result.

\begin{cor} \label{respfol}
For every $\b \in \R$ and $\a > 0$,
$$ \Omega(\Mba) = \Pba, $$
and $\Omega|_{\Mba}: \Mba \to \Pba$ is a symplectomorphism. In particular, the map
\begin{displaymath}
  (I_n)_{1 \leq n \leq N-1}: \Mba \to \big( \R_{\geq 0} \big)^{N-1}
\end{displaymath}
is onto.
\end{cor}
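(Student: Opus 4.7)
The corollary is essentially a direct bookkeeping consequence of Theorems \ref{diffeo} and \ref{candiffeo} together with the very definition of $\Omega$ in (\ref{coordinates}), so the plan is to unpack each of the three assertions in a few lines each.

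First I would verify the set equality $\Omega(\Mba) = \Pba$. By the definition of $\Omega$, the last two coordinates of $\Omega(b,a)$ are exactly $C_1(b,a)$ and $C_2(b,a)$. Hence $\Omega(b,a) \in \Pba$ if and only if $C_1(b,a) = \b$ and $C_2(b,a) = \a$, i.e. if and only if $(b,a) \in \Mba$. Combined with the fact that $\Omega : \M \to \P$ is a bijection (Theorem \ref{diffeo}), this immediately gives $\Omega(\Mba) = \Pba$, and the restriction $\Omega|_{\Mba} : \Mba \to \Pba$ is a real analytic diffeomorphism as the restriction of the global real analytic diffeomorphism $\Omega$ to the preimage of $\Pba$.

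Second, to promote this restriction to a symplectomorphism, I would use the standard fact about symplectic foliations of Poisson manifolds: the restriction of $\{\cdot,\cdot\}_J$ to the leaf $\Mba$ is the nondegenerate bracket induced by the symplectic form $\nu_{\b,\a}$, and likewise $\{\cdot,\cdot\}_0$ restricts on $\Pba$ to the canonical symplectic bracket on $\R^{2(N-1)}$. Since $\Omega$ is canonical on the whole of $\M$ by Theorem \ref{candiffeo}, i.e. $\{F\circ\Omega, G\circ\Omega\}_J = \{F,G\}_0\circ\Omega$ for all $F,G\in C^1(\P)$, this identity restricts to the leaves and shows that $\Omega|_{\Mba}$ pulls the canonical symplectic form on $\Pba$ back to $\nu_{\b,\a}$.

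Finally, for the surjectivity of the action map, note that on the model leaf $\Pba = \R^{2(N-1)}\times\{\b\}\times\{\a\}$ the functions $\hat I_n := (x_n^2 + y_n^2)/2$ obviously surject onto $\R_{\geq 0}^{N-1}$, since for any prescribed nonnegative values one can take $x_n = \sqrt{2\hat I_n}$, $y_n = 0$. On $\M$ the action variables satisfy $I_n = \hat I_n \circ \Omega$: on $\M\setminus D_n$ this is immediate from the relation $(x_n,y_n) = \sqrt{2I_n}(\cos\theta_n,\sin\theta_n)$, and the identity extends to all of $\M$ by real analyticity (Theorem \ref{coordanaly} and Theorem \ref{analytic}). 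Composing with the bijection $\Omega|_{\Mba} : \Mba \to \Pba$ from the first step then yields the claimed surjectivity of $(I_n)_{1\leq n\leq N-1} : \Mba \to \R_{\geq 0}^{N-1}$. I do not anticipate any real obstacle here, since all the analytic content has been established in the earlier sections; the corollary is purely a leaf-by-leaf restatement of what has already been proved.
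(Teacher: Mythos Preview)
Your proposal is correct and follows essentially the same approach as the paper, which treats the corollary as an immediate consequence of the definition of $\Omega$ together with Theorems \ref{diffeo} and \ref{candiffeo} and offers no explicit proof. You have simply filled in the routine details that the paper leaves implicit.
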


To formulate the last result of this section, recall that in section \ref{tools}, for any $(b,a) \in \M$, we have introduced the isospectral set
\begin{displaymath}
\textrm{Iso} \, (b,a) = \{ (b',a') \in \M: \textrm{spec} \; Q(b',a') = \textrm{spec}\; Q(b,a) \}.
\end{displaymath}
For $(x, y, \b, \a) \in \P$, let $\T(x, y, \b, \a)$ denote the torus
\begin{displaymath}
\T(x, y, \b, \a) := \{ (u_n, v_n)_{1 \leq n \leq N-1} | u_n^2 + v_n^2 = x_n^2 + y_n^2 \;\; \forall n \} \times \{ \b \} \times \{ \a \}.
\end{displaymath}

\begin{prop} \label{respfibr}
For any $(b,a) \in \M$,
\begin{equation} \label{isobatorba}
\Omega(\textrm{Iso} \, (b,a)) = \T(\Omega(b,a)).
\end{equation}
\end{prop}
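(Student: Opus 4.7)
The plan is to prove the two inclusions separately: the \emph{direct} inclusion $\Omega(\textrm{Iso}(b,a)) \subseteq \T(\Omega(b,a))$ reduces to observing that all coordinate functions parameterizing the target torus are spectral invariants; the \emph{reverse} inclusion is then handled by a connectedness--and--dimension argument transported through the diffeomorphism $\Omega$ of Theorem \ref{diffeo}.

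For the direct inclusion, I would check in turn that $C_1$, $C_2$, and each $I_n$ depend only on $\textrm{spec}\,Q(b,a)$. The Casimir $C_1 = -\frac{1}{2N}\textrm{tr}\,Q$ is spectrally determined; so is $C_2$, since $C_2^{2N}$ is the reciprocal of the leading coefficient $\alpha^{-2N}$ of the polynomial $\Delta_\lambda^2 - 4 = \alpha^{-2N}\prod_j(\lambda - \lambda_j)$ from (\ref{delta2lrepr}). Each action $I_n$ is given in (\ref{actsba}) by a contour integral whose integrand $\lambda \dot{\Delta}_\lambda/\sqrt[c]{\Delta_\lambda^2 - 4}$ is built from $\Delta_\lambda$ (reconstructed from $\{\lambda_j\}$ and the prescribed positive leading coefficient $\alpha^{-N}$), and whose contour $\Gamma_n$ and canonical root are likewise fixed once the spectrum is known. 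Hence for $(b',a') \in \textrm{Iso}(b,a)$ one gets $C_i(b',a') = C_i(b,a)$ and $x_n(b',a')^2 + y_n(b',a')^2 = 2I_n(b',a') = 2I_n(b,a)$, so $\Omega(b',a') \in \T(\Omega(b,a))$.

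For the reverse inclusion, set $S := \Omega^{-1}(\T(\Omega(b,a)))$ and $r := \#\{n : \gamma_n(b,a) > 0\}$. Since $\Omega$ is a real analytic diffeomorphism, $S$ is a compact connected submanifold of $\M$ of dimension $r$: by Theorem \ref{analytic}(ii), $I_n(b,a) > 0$ precisely when $\gamma_n(b,a) > 0$, so $\T(\Omega(b,a))$ is a product of $r$ circles with $N-1-r$ singleton factors together with the two fixed Casimir values, hence a connected $r$-torus. The direct inclusion already gives $\textrm{Iso}(b,a) \subseteq S$. Van Moerbeke's description of $N$-periodic Jacobi matrices recalled in Section \ref{tools}, combined with the smooth dependence of eigendata on $(b,a)$, identifies $\textrm{Iso}(b,a)$ as a smooth $r$-dimensional embedded torus sitting inside $\M_{\beta,\alpha}$. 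It is also closed in $\M$ because the spectrum depends continuously on $(b,a)$. Invariance of domain then makes $\textrm{Iso}(b,a)$ open in $S$; being also closed in $S$ and nonempty, and $S$ being connected, one concludes $\textrm{Iso}(b,a) = S$, and applying $\Omega$ yields $\Omega(\textrm{Iso}(b,a)) = \T(\Omega(b,a))$.

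The main obstacle I foresee is the rigorous identification of $\textrm{Iso}(b,a)$ as a smooth $r$-dimensional embedded torus, since van Moerbeke's result is quoted in the paper only in its combinatorial form (the finite count $2^r$ for fixed periodic and Dirichlet spectra). The standard way to close this gap is to parameterize $\textrm{Iso}(b,a)$ by Dirichlet divisors on the Riemann surface $\Sigma_{b,a}$: each open gap contributes one circle-worth of choices (one Dirichlet eigenvalue in the closed interval $[\lambda_{2n}, \lambda_{2n+1}]$ together with a sheet selection), yielding a smooth $r$-torus. An alternative that avoids any prior manifold structure on $\textrm{Iso}(b,a)$ is to argue directly that $\Delta_\lambda$ — and therefore the spectrum — can be reconstructed from the data $(I_n)_{1 \le n \le N-1}$ together with $\alpha$, so that every point of $S$ is automatically isospectral to $(b,a)$.
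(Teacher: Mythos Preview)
Your proposal is correct and takes essentially the same approach as the paper: the direct inclusion via spectral invariance of $I_n$, $C_1$, $C_2$, followed by a dimension/topology argument for the reverse inclusion. The paper's own proof is in fact terser than yours---it simply asserts that $\textrm{Iso}(b,a)$ and $\T(\Omega(b,a))$ are tori of the same dimension and concludes from there, glossing over exactly the point you flag as the main obstacle.
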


\begin{proof}
Note that the action variables $I_n$ are defined in terms of the discriminant $\Delta_\l$, and $\Delta_\l$ is a spectral invariant. Hence $\Omega(\textrm{Iso}(b,a)) \subset \T(\Omega(b,a))$. As $\textrm{Iso}(b,a)$ and $\T(\Omega(b,a))$ are both tori of the same dimension, (\ref{isobatorba}) then follows.
\end{proof}

\begin{cor} \label{hamfunctact}
The pullback $\hat{H} = H \circ \Omega^{-1}$ of the Hamiltonian of the
periodic Toda lattice $H$ by $\Omega^{-1}$ is a function of the action variables $(I_n)_{1 \leq n \leq N-1}$ and the Casimir functions $C_1, C_2$ alone. In other words, $\hat{H}$ is in Birkhoff normal form. Therefore, the coordinates $((x_n, y_n)_{1 \leq n \leq N-1}, C_1, C_2)$ are global Birkhoff coordinates of the periodic Toda lattice.
\end{cor}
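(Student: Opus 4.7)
The plan is to combine two ingredients that are already essentially in place: (i) the Toda Hamiltonian is a spectral invariant of the Jacobi matrix $L(b,a)$, hence is constant on every isospectral set; and (ii) Proposition \ref{respfibr} identifies each isospectral set, after application of $\Omega$, with a product torus in $\P$ on which the action variables and the Casimirs are constant.

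First I would note that
\begin{displaymath}
H(b,a) = \frac{1}{2} \sum_{n=1}^N b_n^2 + \sum_{n=1}^N a_n^2 = \frac{1}{2} \operatorname{tr}\bigl( L^+(b,a)^2 \bigr),
\end{displaymath}
so $H$ can be written as a symmetric function of the periodic eigenvalues of $L^+(b,a)$. Since these eigenvalues are part of the spectrum of $Q(b,a)$ (in fact $\operatorname{spec}(L^+) \cup \operatorname{spec}(L^-) = \operatorname{spec}(Q)$), $H$ is a function of $\operatorname{spec}(Q(b,a))$ alone. In particular, $H$ is constant on every isospectral set $\operatorname{Iso}(b,a)$.

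Next I would translate this invariance through $\Omega$. By Proposition \ref{respfibr}, for every $(b,a) \in \M$ one has $\Omega(\operatorname{Iso}(b,a)) = \T(\Omega(b,a))$, and by the definition of $\T$ this torus is precisely the common level set of the functions $(I_n)_{1 \le n \le N-1}$, $\beta$, $\alpha$ in $\P$, where $I_n = (x_n^2 + y_n^2)/2$ on $\P$ (this identification is immediate from the defining relation $(x_n, y_n) = \sqrt{2 I_n}(\cos \theta_n, \sin \theta_n)$ on the dense subset $\M \setminus D_n$ and extends by continuity). Consequently $\hat{H} = H \circ \Omega^{-1}$ is constant on every joint level set of $(I_1, \dots, I_{N-1}, C_1, C_2)$ in $\P$, and since $\Omega$ is a diffeomorphism by Theorem \ref{diffeo}, these level sets foliate $\P$. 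Therefore $\hat{H}$ factors through the map $(x, y, \beta, \alpha) \mapsto (I_1, \dots, I_{N-1}, \beta, \alpha)$, i.e.\ $\hat{H}$ depends only on the action variables and the Casimirs.

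There is no real obstacle here; all the substantive work has been done in Theorems \ref{diffeo} and \ref{candiffeo} and Proposition \ref{respfibr}. The only small point worth spelling out is the identification $I_n = (x_n^2+y_n^2)/2$ on all of $\P$, which one verifies on the dense open set where $\theta_n$ is defined and then extends by continuity/real analyticity. (Alternatively, one can bypass isospectrality entirely and argue directly from Theorem \ref{analytic}(iv): the relations $\{H, I_n\}_J = 0$ and $\{H, C_i\}_J = 0$ push forward under the canonical map $\Omega$ to $\{\hat H, I_n\}_0 = 0$ on each symplectic leaf $\Pba$, which, since the $I_n$ generate the rotations in the $(x_n, y_n)$-planes, forces $\hat H$ to be invariant under these rotations on each leaf and hence to depend only on $I_1, \dots, I_{N-1}, C_1, C_2$.)
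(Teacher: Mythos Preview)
Your proposal is correct and follows essentially the same route as the paper's own proof: both express $H$ as $\tfrac{1}{2}\operatorname{tr}(L^+(b,a)^2)$, hence as a spectral invariant constant on each $\operatorname{Iso}(b,a)$, and then invoke Proposition \ref{respfibr} to conclude that $\hat H$ is constant on each torus $\T(\Omega(b,a))$ and therefore depends only on $(I_n)_{1\le n\le N-1}, C_1, C_2$. The paper's argument is simply more terse; your additional remarks (the identification $I_n=(x_n^2+y_n^2)/2$ and the alternative via $\{H,I_n\}_J=0$) are sound but not needed.
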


\begin{proof} Note that $H$ can be written as
\begin{displaymath}
  H = \frac{1}{2} \sum_{n=1}^N b_n^2 + \sum_{n=1}^N a_n^2 = \frac{1}{2} \, \textrm{tr} \, (L(b,a)^2) = \frac{1}{2} \sum_{j=1}^N (\l_j^+)^2
\end{displaymath}
where $(\l_j^+)_{1 \leq j \leq N}$ are the $N$ eigenvalues of $L(b,a)$. Hence, by Proposition \ref{respfibr}, $\hat{H} = H \circ \Omega^{-1}$ is constant on $\T(x, y, \b, \a)$, i.e. $\hat{H}$ is a function of $(I_n)_{1 \leq n \leq N-1}$, $C_1$, $C_2$ alone.
\end{proof}

\appendix

\section{Proof of Lemma \ref{est2}} \label{proofest}

We begin by proving the estimate (\ref{estb}). In a first step we use a trace formula observed by van Moerbeke \cite{moer} which expresses $b_1$ as a linear combination of the traces of $L^+ \equiv L^+(b,a)$, $L^- \equiv L^-(b,a)$ and $L_2 \equiv L_2(b,a)$, the matrix obtained from $L(b,a)$ by removing the first column and the first row:
\begin{equation} \label{trace}
b_1 = \frac{1}{2}(\textrm{tr }L^+ + \textrm{ tr }L^-) - \textrm{ tr }L_2.
\end{equation}
The Dirichlet eigenvalues $(\mu_n)_{1 \leq n \leq N-1}$ can be characterized as the spectrum of $L_2$. Hence we can rewrite (\ref{trace}) as
\begin{equation} \label{trace2}
b_1 = \frac{1}{2}(\l_1 + \l_{2N}) + \frac{1}{2}\sum_{k=1}^{N-1}(\l_{2k} + \l_{2k+1} - 2\mu_k).
\end{equation}
For $b_{n+1}$ with $1 \leq n \leq N-1$, a similar formula can be derived,
\begin{equation} \label{bnformula}
  b_{n+1} = \frac{1}{2}(\l_1 + \l_{2N}) + \frac{1}{2}\sum_{k=1}^{N-1}(\l_{2k} + \l_{2k+1} - 2\mu_k^{(n)}).
\end{equation}
%where we have replaced the Dirichlet spectrum $\mu_i$ by another auxiliary spectrum $\mu_i^{(n)}$, introduced in \cite{toda}, defined by $y_1(N+1,\mu_i^{(n)}|n) = 0$. Here, for any $n \in \Z$ fixed, $y_1(k,\l |n)$ and $y_2(k,\l| n)$ ($k \in \Z$) are the fundamental solutions of
%\begin{equation} \label{shift}
%(R_{S^n b,S^n a} y) (k, \l) = \l y(k, \l) \quad (k \in \Z)
%\end{equation}
%and $R_{S^n b, S^n a}$ is the difference operator associated with $L(S^n b, S^n a)$. Whereas the eigenvalues $\l_i^{(n)}$ of $Q(S^n b, S^n a)$ coincide with the ones for $Q(b,a)$, since the discriminants corresponding to $(S^n b, S^n a)$ and $(b,a)$ are the same, the eigenvalues $(\mu_k^{(n)})_{1 \leq k \leq N-1}$ do typically not coincide with $(\mu_k)_{1 \leq k \leq N-1}$.
Here $(\mu_i^{(n)})_{1 \leq i \leq N-1}$ denotes the Dirichlet spectrum of $L(S^n b, S^n a)$. Note that $(S^n b)_1 = b_{n+1}$ and
\begin{displaymath}
  \textrm{spec} \; Q(S^n b, S^n a) = \textrm{spec} \; Q(b,a),
\end{displaymath}
since the discriminants (\ref{discrdef}) for $(b,a)$ and $(S^n b, S^n a)$ coincide. Hence (\ref{bnformula}) can be obtained by applying (\ref{trace2}) to the element $(S^n b, S^n a)$ instead of $(b,a)$. Note that for any $1 \leq k \leq N-1$, the eigenvalue $\mu_k^{(n)}$ lies in the closed interval $[\l_{2k}, \l_{2k+1}]$.

It follows from (\ref{bnformula}) that the difference $b_i - b_j$ can then be estimated by
\begin{equation} \label{bibjdiff}
|b_i - b_j| \leq \sum_{k=1}^{N-1} \gamma_k(b,a).
\end{equation}
As for any $1 \leq n \leq N$ we have
\begin{displaymath}
  N b_n - N C_1 = N b_n - \sum_{j=1}^N b_j = \sum_{j=1}^N (b_n - b_j),
\end{displaymath}
it follows that $N |b_n| \leq N |C_1| + N \sum_{k=1}^{N-1} \gamma_k$, and formula (\ref{estb}) of Lemma \ref{est2} is established.

To obtain (\ref{esta}), choose an arbitrary $L^2$-orthonormal basis $(f_j)_{1 \leq j \leq N} \subseteq \R^N$ of eigenvectors associated to the eigenvalues $(\l_j^+)_{1 \leq j \leq N}$ of $L^+(b,a)$. We claim that
\begin{equation} \label{invab}
a_k = \sum_{j=1}^{N} \l_j^+ f_j(k) f_j(k+1).
\end{equation}
To verify (\ref{invab}), multiply
$$ a_{k-1} f_j(k-1) + b_k f_j(k) + a_k f_j(k+1) = \l_j^+ f_j(k) $$
by $f_j(k+1)$ and sum over $j$. As $(f_j)_{1 \leq j \leq N}$ is an orthonormal basis, the $N \times N$-matrix $(f_j(k))_{j,k}$ is orthogonal, hence $\sum_{j=1}^N f_j(k) f_j(l) = \delta_{kl}$ for $1 \leq k,l \leq N$, and (\ref{invab}) follows. Since $|f_j(k)| \leq 1 \; \forall \, j,k$, the identity (\ref{invab}) implies that
\begin{equation} \label{esta2}
 a_k \leq \sum_{j=1}^{N} |\l_j^+|.
\end{equation}
To estimate $|\l_j^+|$, note that
\begin{displaymath}
  \l_j^+ - C_1 = \l_j^+ - \frac{1}{N} \sum_{k=1}^N \l_k^+ = \frac{1}{N} \sum_{k=1}^N (\l_j^+ - \l_k^+).
\end{displaymath}
Hence for any $1 \leq j \leq N$,
\begin{displaymath}
  -(\l_{2N} - \l_1) \leq \l_j^+ + C_1 \leq \l_{2N} - \l_1 \quad \textrm{or} \quad |\l_j^+| \leq |C_1| + (\l_{2N} - \l_1)
\end{displaymath}
leading to
\begin{eqnarray}
|\l_j^+| \; & \leq \; & |C_1| + \sum_{n=1}^{2N-1} (\l_{n+1} - \l_n) \nonumber\\
& = & |C_1| + \sum_{n=1}^{N-1} \gamma_n + \sum_{n=1}^{N} \big( \l_{2n} - \l_{2n-1} \big). \label{esta3}
\end{eqnarray}
%By constructing an auxiliary conformal map and using tools from
%complex analysis, in particular the maximum principle for harmonic
%measures, one can obtain a uniform upper bound for the terms in
%the second sum of the latter expression, which is
We now recall from (\cite{ahtk1}, Appendix B) that for any $1 \leq n \leq N$,
\begin{equation} \label{esta4}
\l_{2n} - \l_{2n-1} \leq \frac{2 \pi C_2}{N}.
\end{equation}
Combining (\ref{esta3}) and (\ref{esta4}) yields
\begin{displaymath}
|\l_j^+| \leq |C_1| + \sum_{n=1}^{N-1} \gamma_n + 2 \pi C_2.
\end{displaymath}
Substituting this inequality into (\ref{esta2}) leads to the desired estimate (\ref{esta}). Finally, the claimed estimate (\ref{estl}) easily follows from (\ref{esta4}). Hence Lemma \ref{est2} is proved.

\vspace{.4cm}

\textsc{Institut f\"ur Mathematik, Universit\"at Z\"urich, Winterthurerstrasse 190, CH-8057 Z\"urich, Switzerland} \\
\emph{E-mail address:} \texttt{andreas.henrici@math.unizh.ch}

\vspace{.4cm}

\textsc{Institut f\"ur Mathematik, Universit\"at Z\"urich, Winterthurerstrasse 190, CH-8057 Z\"urich, Switzerland} \\
\emph{E-mail address:} \texttt{thomas.kappeler@math.unizh.ch}

\end{document}